\DeclareSymbolFontAlphabet{\mathbb}{AMSb}
\DeclareSymbolFontAlphabet{\mathbbl}{bbold}
\newcommand{\bee}{\begin{enumerate}}
\newcommand{\eee}{\end{enumerate}}
\newcommand{\bi}{\begin{itemize}}
\newcommand{\ei}{\end{itemize}}
\newcommand{\bq}{\begin{eqnarray}}
\newcommand{\eq}{\end{eqnarray}}
\newcommand{\bqn}{\begin{eqnarray*}}
\newcommand{\eqn}{\end{eqnarray*}}
\DeclareMathOperator{\diag}{diag}
\let\underbrace\LaTeXunderbrace
\theoremstyle{definition}
\newtheorem{remark}{Remark}
\newtheorem{definition}{Definition}
\newtheorem{theorem}{Theorem}
\newtheorem{lemma}[theorem]{Lemma}
\newtheorem{example}{Example}
\newcommand{\slow}[1]   
{ \textcolor{red}{(SL: #1)}} 
\newenvironment{MyAssumption}[1]
  {\CustomerAssumption}
  {\endCustomerAssumption}
\begin{document}

\title{Reverse Kron reduction of Multi-phase Radial Network}

\author{Steven H. Low
\thanks{This research was supported by NSF through grants ECCS 1931662.
We thank Stanojev Ognjen of ETH, Lucien Werner and Yiheng Xie of Caltech for helpful discussions.
}
\\
CMS, EE, Caltech, Pasadena CA
\\
slow@caltech.edu
}

\maketitle

\begin{abstract}
We consider the problem of identifying the admittance matrix of a three-phase radial network from voltage and current measurements at a subset of nodes.  These measurements are used to estimate a virtual network represented by the Kron reduction (Schur complement) of the full admittance matrix.  We focus on recovering exactly the full admittance matrix from its Kron reduction, i.e., computing the inverse of Schur complement.  The key idea is to decompose Kron reduction into a sequence of iterations that maintains an invariance structure, and exploit this structure to reverse each step of the iterative Kron reduction. 
\end{abstract}

\newpage
\tableofcontents

\newpage
Main changes:
\bi
\item December 18, 2023: 
	\bi
	\item Rewrote Section \ref{sec:graphBarY; subsec:graphbarY}, mainly
	Theorem \ref{sec:graphBarY; subsec:graphbarY; thm:G(bar Y)}.  Revise 
	Sections \ref{sec:algorithm; subsec:IdentifyY_1111} and 
	\ref{sec:algorithm; subsec:DecomposeMaxCliques} accordingly.
	 Added Sections \ref{sec:algorithm; subsec:ReverseStep1} and 
		\ref{sec:algorithm; subsec:summary}.
	\item Rewrote Section \ref{sec:IterativeKR} to focus on growing and shrinking maximal clique by reversible one-step Kron reduction.
	\item Revise Section \ref{sec:Assumption3} accordingly.
	\ei

\item March 18, 2024: 
	\bi
	\item Added Section \ref{subsec:RelatedWork} on prior work unbalanced three-phase network ID
		and reorganized Section \ref{sec:intro; subsec:summary}.
	\item Fixed notational inconsistencies of $C^{l+1}_{11}$ and $P^l$ in 
		Section \ref{sec:IDMC; subec:reverseKR} pointed out by Yiheng Xie.
	\ei
\item March 19, 2024: Simplify the description of Steps 2 and 4 
	in Algorithm 1 of Sections \ref{sec:algorithm; subsec:DecomposeMaxCliques} and
	Section \ref{sec:algorithm; subsec:CombineMaxCliques} respectively to extract and combine
	maximal cliques (as v7).
\ei
\newpage

\pagestyle{plain}
\pagestyle{myheadings}
\markboth
{Reverse Kron reduction of Multi-phase Radial Network}
{Reverse Kron reduction of Multi-phase Radial Network}


\section{Introduction}
\label{sec:intro}

\subsection{Motivation and summary}
\label{sec:intro; subsec:summary}

Modeling distribution networks below substations is increasingly important as distributed energy resources proliferate on these systems.  Today these networks are sparsely monitored at best, with few
$\mu$PMUs (micro-Phasor Measurement Units) beyond the SCADA (Supervisory Control and Data Acquisition) system at substations and smart meters at utility customers.  
As a result, the utility company often does not have an accurate model of its network as it 
evolves either due to faults, repairs or upgrades.
This limits their ability to analyze power flows and optimize their planning and operations.  
This has motivated a large number of papers to identify the topology, line admittances, or switch status
of distribution grids. A recent tutorial \cite{DekaKekatosCavraro-2024} explains different approaches 
in the literature and contains an extensive list of references.  Various methods have been proposed
for these identification problems using AMI 
data (voltage magnitudes, real and reactive power injections) or PMU data (voltage and current phasors,
real and reactive line flows), measured at all or a subset of nodes, with only passive measurement
or also active probing, for single-phase or unbalanced three-phase 
networks, in radial or mesh topologies (see Table 1 of \cite{DekaKekatosCavraro-2024}).
Most of the literature focuses on identification problems for single-phase networks or 
assumes measurements are available at every node in the network.  
This paper studies the identification of the admittance 
matrix (topology and line admittances) for unbalanced three-phase radial networks from voltage 
and current measurements at a subset of nodes.  This is more realistic for distribution systems.

\paragraph{Summary.}
The phasors of the nodal voltages $V$ and current injections $I$ are related linearly, $I = YV$,
by the complex symmetric admittance matrix $Y$.  When only a subset of the voltage and 
current phasors $(V_1, I_1)$ are measured, they satisfy $I_1 = \bar Y V_1$ where $\bar Y$ is 
called the Kron reduced admittance matrix.   It describes a virtual network topology consisting of only 
measured nodes and the ``effective'' line admittances connecting these nodes.
To identify the full matrix $Y$ from the partial measurement $(V_1, I_1)$, we first estimate 
the Kron reduced admittance matrix $\bar Y$ from $(V_1, I_1)$.  
This corresponds to identification from full measurements and several methods in the literature 
can be applied, e.g., \cite{DekaKekatosCavraro-2024}, 
\cite{LiWengLiaoKeelBrown-2021,YuWengRajagopal-2019, Ardakanian2019}.
Then we identify the unique $Y$ given $\bar Y$, the focus of this paper.

The mathematical problem is as follows.  Consider any complex symmetric 
matrix $Y =: \begin{bmatrix} Y_{11} & Y_{12} \\ Y_{21} & Y_{22} \end{bmatrix}$ with a nonsingular
$Y_{22}$ where $Y_{11}$ describes the connectivity between measured nodes, $Y_{22}$ describes
the connectivity between hidden nodes, and $Y_{12}$ describes the connectivity between measured
and hidden nodes.  The Kron reduced admittance matrix $\bar Y$ is the Schur complement of $Y_{22}$ 
of $Y$.  This defines a mapping from $Y$ to $\bar Y$ given by
\begin{align*}
\bar Y & \ := \ f(Y) \ := \ Y_{11} - Y_{12}Y_{22}^{-1} Y_{21}
\end{align*}
When does the inverse $f^{-1}(\bar Y)$ exist and how to compute it when it does?  In this paper we 
show that $f^{-1}(\bar Y)$ exists when the graph underlying $Y$ is a tree (and when some other conditions hold),
by describing an explicit construction of $Y$ from $(V_1, I_1)$. 

Our construction method extends the method in 
\cite{YuanLow2023} for single-phase radial networks to an unbalanced three-phase setting.
For a single-phase radial network, the series line admittances are always nonzero.  This allows certain 
structural properties important for the identification of $Y$ to be preserved under Kron reduction.  In a 
multi-phase network however line admittances are $3\times 3$ matrices for a three-wire model.  It is 
often unclear when Kron reduction exists or whether these structural properties are still preserved under 
Kron reduction.  This paper follows the same idea in \cite{YuanLow2023} and develops two new results 
to overcome the difficulties in the three-phase setting.  Specifically the construction method consists 
of three main algorithms:
\bee
\item Algorithm 1: Reduces the overall identification problem into the special case of identifying
	the admittance matrix of a single maximal clique consisting of measured leaf nodes
	connected by a tree of hidden nodes (Section \ref{sec:algorithm}).  This procedure is
	the same for both single-phase and unbalanced three-phase networks \cite{YuanLow2023}.
	
\item Algorithm 2: Identifies a maximal clique in an unbalanced three-phase setting 
	(Section \ref{sec:IterativeKR}).  Consider iterative Kron reduction where a single hidden
	node is reduced in each iteration.  The novel idea is to characterize an invariant structure 
	that is preserved in one-step Kron reduction and use it to derive its one-step inverse, 
	yielding an iterative reverse Kron reduction.
	
\item Algorithm 3: Identifies a new hidden node to be added in each iteration of Algorithm 2 
	to the set of identified nodes (Section \ref{sec:Assumption3}).  This new hidden node 
	is adjacent to a 	subset of sibling nodes that have been identified in previous iterations.  
	We generalize a sibling grouping property of \cite{YuanLow2023} that underlies Algorithm 3
	from single-phase to 	three-phase setting under a (restrictive) uniform line assumption.
\eee

\paragraph{Organization and notation.}

We define precisely in Section \ref{sec:NID} the network identification problem addressed
in this paper.  We explain in Section \ref{sec:graphBarY} the graph structures of the admittance
matrix $Y$, to be determined, and its Kron reduction $\bar Y$, which is assumed known.  
We present in Section \ref{sec:algorithm} our overall three-phase identification algorithm
 motivated by the graph structures (Algorithm 1).  This reduces the network identification
 problem into the special case of identifying a single maximal clique.
To identify a single maximal clique, we characterize in Section \ref{sec:IterativeKR}
an invariant structure that is preserved under iterative Kron reduction and use it to derive
a reverse iterative Kron reduction (Algorithm 2).  
Each iteration of Algorithm 2 adds a new hidden node to the set of nodes that have been
identified.  We show in Section \ref{sec:Assumption3} how such a node can be identified 
(Algorithm 3) when the distribution lines are uniform.

We write a column vector in $\mathbb C^n$ either as $x = (x_1, \dots, x_n)$ or 
$x = \begin{bmatrix} x_1 \\ \vdots \\ x_n \end{bmatrix}$.
We use $\textbf 1_k$ to denote the column vector of $k$ 1s and $\textbf I_k$ to denote the identity
matrix of size $k$.  Then $\textbf 1_j \otimes \textbf I_k$ is the $jk \times k$ matrix of $j$ identity
matrices each of size $k$ stacked vertically.  When the dimension is clear from the context we often 
write $\textbf 1$ and $\textbf I$ for $\textbf 1_k$ and $\textbf I_k$ respectively.
If $A_1, \cdots, A_j$ are $j$ matrices each of $k\times k$, then $\diag\left( A_1, \cdots, A_j \right)$
is the $jk\times jk$ block-diagonal matrix with $A_1, \cdots, A_j$ as its diagonal blocks.
Given an admittance matrix $A$, $G(A)$ denotes its underlying graph.  We will often use $A$
to refer to either the admittance matrix, its underlying graph $G(A)$, or the set of nodes in $G(A)$, 
when the meaning should be clear from the context.

\subsection{Prior work}
\label{subsec:RelatedWork}

As mentioned above there is a large literature
on the identification of the topology, line admittances, or switch status of distribution grids for single-phase
networks. We leave the discussion of this literature to the comprehensive tutorial 
\cite{DekaKekatosCavraro-2024} and summarize below only papers on identification
problems for unbalanced three-phase networks.
There are papers on topology and line parameter identification using active probing.  
For example, \cite{CavraroKekatos-2019} identifies the admittance matrix
of a single-phase radial network by perturbing inverter injections and measuring voltage
phasors.  It proves that if injections are perturb at all leaf nodes and voltage responses are
measured at all nodes, then the topology is identifiable by solving a nonconvex problem.
Our summary covers only methods that use passive measurements.
There is also a large literature on line parameter estimation given network topology for
either single-phase or three-phase networks.  
For a recent example, see \cite{GuptaSossanLeBoudecPaolone-2021} that studies
this problem for a three-phase network (radial or mesh) 
from voltage and current phasor measurements at all nodes and proposes a method to
pre-process these measurements in order to improve the performance of Total Least Square 
solution.  Our summary covers only methods that involve topology estimation.

We now summarize several papers 
on the identification of unbalanced three-phase networks
\cite{DekaChertkovBackhaus-2020, BariyaDekaVonMeier-2021, GandluruPoudelDubey-2020, LiaoWengLiuZhaoTanRajagopal-2019, LiWengLiaoKeelBrown-2021}
\cite{YuWengRajagopal-2019, Ardakanian2019, VaninGethDhulstVanHertem-2023, SoltaniMaKhorsandVittal-2023},
with emphasis on papers that explicitly exploit the radial structure of the network 
\cite{DekaChertkovBackhaus-2020, BariyaDekaVonMeier-2021, GandluruPoudelDubey-2020, LiaoWengLiuZhaoTanRajagopal-2019, LiWengLiaoKeelBrown-2021}.

\noindent
\emph{Operational radial networks}.
Consider a given mesh network in which tie switches and sectionalizing switches are configured so 
that the operational network at any time consists of a forest of nonoverlapping trees that span all nodes.  
Often the switch status may not be known accurately due to frequent reconfigurations or manual changes in
distribution systems.  The problem of estimating the switch status and hence the operational topology is
called topology detection in \cite{DekaKekatosCavraro-2024}.  This problem usually assumes the topology,
line admittances, and switch locations are known. 

Given voltage magnitude measurements, 
\cite{DekaChertkovBackhaus-2020} estimates the operational network by extending
the statistical method of \cite{Deka2018-TCNS} from a single-phase to an unbalanced
three-phase setting.  The key idea is to use linear approximation $v = H_1 p + H_2 q$ 
to relate the voltage magnitudes $v$ to real and reactive power injections $(p,q)$.  
For single-phase networks in \cite{Deka2018-TCNS} this is obtained by linearizing
the single-phase polar form of AC power equation around the flat voltage profile ($|V_i|=1$ and $\theta_i = \theta_j$
for all $i,j$), or equivalently, linearizing the equation $|V_i|^2 - |V_j|^2 = 2(r_{ij} P_{ij} + x_{ij}Q_{ij})$ in
the single-phase DistFlow model of \cite{Baran1989a, Baran1989b}.  The resulting system matrices $H_1, H_2$
are Laplacian matrices of network topology and line conductances and reactances.  This linear 
model $v = H_1 p + H_2 q$ then relates the covariance matrix $\Omega_v := E(v - E(v))(v - E(v))^{\sf T}$ 
and the variances $C(i,j) := E[(v_i-v_j) - E(v_i-v_j)]^2$, for all nodes $i, j$,
to the covariance matrices of $(p, q)$.  
Using this, \cite{Deka2018-TCNS} proves several interesting properties of $\Omega_v$ and $C(i,j)$ and use these
properties to identify a node's unique parent and thus iteratively the operational topology.   
Specifically, if powers at different nodes are nonnegatively correlated, 
\cite{Deka2018-TCNS} proves that $\Omega_v(i,i)$ increases as $i$ moves farther away from the root 
(substation).  Under this assumption and the assumption that the (true) covariance matrices of $(p, q)$ are known,
Algorithm 1 of \cite{Deka2018-TCNS} computes the empirical $\hat\Omega_v$ from given voltage magnitude 
measurements, and uses properties of $\Omega_v$ to identify successively each node's unique parent, 
starting from leaf nodes, and thus each operational tree.
If powers at different nodes are uncorrelated instead, \cite{Deka2018-TCNS} 
proves that for each node $i$, the minimization of $C(i,j)$ over $j$ that is not a descendent of $i$ is attained
uniquely at $i$'s parent.  Under this assumption, Algorithm 2 of \cite{Deka2018-TCNS} identifies each 
operational tree in a similar manner without needing the covariances of $(p,q)$.  Algorithm 3 of \cite{Deka2018-TCNS} 
extends Algorithm 1 to the case where voltage magnitudes are measured only at a given subset of nodes.
For three-phase networks, \cite{DekaChertkovBackhaus-2020} derives a linear approximation (equivalent
to that in \cite{Gan-2014-UnbalancedOPF-PSCC}) $v = H_1p + H_2 q$ that relates three-phase voltage 
magnitudes $v$ and real and reactive powers $(p,q)$.  Instead of voltage magnitudes, it uses measurements 
of voltage phasors at all nodes to identify the operational network.  Instead of covariance properties, it proves 
conditional independence properties of voltage phasors and use it to iteratively identify connectivity among
pairs of nodes.  Since it does not use covariance matrices of voltages, but only their conditional independence,
the algorithm does not require line admittances or power injection statistics.
In contrast, \cite{BariyaDekaVonMeier-2021} extends the covariance properties of \cite{Deka2018-TCNS} from
a single-phase setting to an unbalanced three-phase setting and use them to iteratively identify a node's parent
in a similar way.  
Specifically, instead of using linearized power flow model, \cite{BariyaDekaVonMeier-2021} uses the three-phase
(reduced) impedance matrix $Z$ to relate current phasor $I$ to voltage phasors $V$, i.e., $V = ZI$.  Under the 
assumption 
that current injections are uncorrelated across nodes and phases and that per-phase current phasors all have equal
variance for all nodes and all phases, \cite{BariyaDekaVonMeier-2021} then extends the result of \cite{Deka2018-TCNS} 
to show that, given $i$, $C(i,j)$ is minimized at a node $b$ that is either its parent or its child.
This set of algorithms bear similarity to the recursive grouping algorithm of \cite{ChoiWillsky2011} where the
information distance $d_{ij}$ in \cite{ChoiWillsky2011} is sometimes replaced by ``effective impedance'' between
observed nodes $i$ and $j$.

The three-phase topology detection problem is formulated in \cite{GandluruPoudelDubey-2020} as a
mixed integer linear program. It assumes the three-phase real and reactive line flow measurements
$(\hat P, \hat Q)$ and power injection measurements $(\hat p, \hat q)$ are available. It uses the 
three-phase linear DistFlow model of \cite{Gan-2014-UnbalancedOPF-PSCC}
which relates injections $(p,q)$ directly to line flows $(P, Q)$ (voltages are not needed because 
line losses $|z_{ij}|^2 \ell_{ij}$ are ignored).  The optimization is over both the given set of operational 
topologies and three-phase real and reactive powers $(p, q, P, Q)$ and the cost is the $l_1$ norm of 
the error $\|(p, q, P, Q) - (\hat p, \hat q, \hat P, \hat Q)\|_1$ normalized per-phase by empirical variances,
subject to linear three-phase power flow equations that relate $(p,q)$ to $(P, Q)$.

\noindent
\emph{Radial topology and phase identification}. 
Given a set of network nodes, topology and phase identification is the problem of identifying the lines connecting
these nodes as well as the phase labels at each bus.  Unlike a single-phase network, phase labels may 
not always be available or accurate in a three-phase network and phase identification is often an integral
part of a topology identification problem. 
Reference \cite{LiaoWengLiuZhaoTanRajagopal-2019} identifies the radial network topology and bus phase labels
by extending the graphical-model method of
\cite{Weng2017} from a single-phase to an unbalanced three-phase setting with unknown phase 
identities. A key assumption of \cite{Weng2017} is that the injection current phasors at different nodes are 
statistically independent.  This leads to the important conclusion that the joint distribution 
$p(V) := p(V_1, \dots, V_n)$
of the voltage phasors on a radial network is what is called in \cite{chow1968approximating} a probability distribution of 
\emph{first-order tree dependence},\footnote{Strictly speaking, this assumes $V_j$ takes discrete values, but 
the formulation can be extended to continuous-valued random variables as well.}  i.e., 
\begin{align*}
p(V) \ := \ p(V_1, \dots, V_n) & \ = \ \prod_j p\left(V_j | V_{i(j)} \right)
\end{align*}
where $i(j)$ denotes the unique parent $j$ of $i$ in the radial network (i.e., $i(j)$ is on the unique path from the root 
to $j$).\footnote{If $V_j$ are independent then $p(V) = \prod_j p(V_j)$.  A first-order tree dependence
can hence be interpreted as the minimum amount of dependence among $V_1, \dots, V_n$.}
To see this, let ${\sf T}_j$ denote the subtree of descendants of $j$ rooted at node $j$, including $j$.  
Since we assume shunt admittances are zero, Kirchhoff's current law implies 
\begin{align*}
V_j & \ = \ V_{i(j)} \, + \, z_{j i(j)} \sum_{k\in{\sf T}_j} I_k
\end{align*}
where $z_{jk}$ is the series line impedance of line $(j,k)$.
Therefore, given $V_{i(j)}$ at its parent, $V_j$ depends only on current injections $I_k$ in 
${\sf T}_j$.  Since current injections are independent, $V_j$ is independent of $V_{j'}$ that is 
not a descendant in ${\sf T}_j$, given $V_{i(j)}$.
Order the nodes in the breadth-first-search order starting from the root node 1 so that $j$'s parent
$i(j) \in\{1, \dots, j-1\}$ for all nodes $j$.
Then $p(V_j | V_{j-1}, \dots, V_1) = p\left( V_j | V_{i(j)} \right)$ and hence
\begin{align*}
p(V) & \ = \ p(V_1) p(V_2|V_1) \cdots p\left(V_n | V_{n-1}, \dots, V_1\right)
\ = \ \prod_j p\left(V_j | V_{i(j)} \right)
\end{align*}
i.e., the distribution $p(V)$ is of first-order tree dependence.  Such a distribution is represented in
\cite{chow1968approximating} by a tree, called the dependence tree of $p(V)$, with $n$ nodes 
where two nodes $i$ and $j$ are adjacent if and only if $i = i(j)$ is the unique parent of $j$. 
Consider the complete graph with $n$ nodes and label each edge
$(j,k)$ by the mutual information $I(V_j, V_k) := E_{p(V_j, V_k)}\left( \log\frac{p(V_j, V_k)}{p(V_j)p(V_k)}\right)$.
Then \cite{chow1968approximating} shows that a spanning tree of the complete graph is the dependence 
tree of $p(V)$ if and only if it is a maximum-weight spanning tree.  This means that
the radial network is one of the maximum-weight spanning trees.  When the weights are distinct, the 
minimum-weight spanning tree is unique and hence an identification algorithm based on minimum-weight 
spanning tree will produce the true network topology.  Otherwise, it may identify a different dependence tree 
than the radial grid that has the same distribution $p(V)$.  These results motivate the identification algorithm of 
\cite{Weng2017} for single-phase radial networks and its extension in 
\cite{LiaoWengLiuZhaoTanRajagopal-2019} for unbalanced three-phase radial networks.
Specifically, for single-phase networks, given voltage phasor measurements $\hat V_j(t)$ at each node $j$ 
at each time $t$, \cite{Weng2017} computes the empirical mutual information $I(\hat V_j, \hat V_k)$ and uses
them as edge weights of the complete graph with $n$ nodes.  Then it computes the maximum-weight
spanning true using the standard Kruskal's algorithm.
For three-phase networks, \cite{LiaoWengLiuZhaoTanRajagopal-2019} defines the vector
$\Delta V_j(t) := V_j(t) - V_j(t-1)\in\mathbb C^3$ of incremental voltage changes across successive measurements
and computes the empirical mutual information $I(\Delta \hat V_j, \Delta \hat V_k)$ of the incremental changes.  It
applies the same algorithm to the complete graph with edge weights $I(\Delta \hat V_j, \Delta \hat V_k)$.

\noindent
\emph{Radial admittance matrix identification}. 
The problem of admittance matrix (or impedance matrix) identification is the problem of determining 
topology and line admittances (or impedances) from measurements.  This problem is studied in 
\cite{LiWengLiaoKeelBrown-2021} for both single-phase and unbalanced three-phase radial networks  
when measurements are available only at a subset of the nodes.  
For single-phase radial networks, it uses the linear relation $V = ZI$ and supposes voltage phasors $V_M$ and current
phasors $I_M$ are measured at a subset $M$ of the buses. It identifies the impedance matrix $Z$ in
two steps.  First it estimates the Kron reduced impedance matrix $\bar Z$ as an optimal solution of 
$\min_Z \| V_M - Z I_M\|_F$.  Then it makes use of the fact that, for a radial network, $Z_{ij}$ is the 
sum of the line impedances on the common segment of the path from the reference bus to node $i$ 
and the path from the reference bus to $j$.  This allows the use of ``effective impedance'' as the 
information state $d_{ij}$ in \cite{ChoiWillsky2011} between two measured nodes $i$ and $j$, and 
hence the use of the recursive grouping algorithm of \cite[Section 4]{ChoiWillsky2011} to both determine 
the tree topology of the hidden nodes and their line impedances, yielding the impedance matrix $Z$.\footnote{The
same method is used in \cite{PengwahFangRazzaghiAndrew-2022} to identify a radial topology from
AMI data without phase angles.}
For three-phase radial networks, \cite{LiWengLiaoKeelBrown-2021} uses the (inverse of the) 
linear approximation of three-phase power flow model of \cite{WangZhangLiYangKang-2017} 
that relates voltage phasors $V$ and
complex powers.  It claims without proof that the same method for single-phase grids applies directly 
to this model.

We study the same problem in this paper but determines the admittance matrix $Y$ instead
of its inverse $Z$.  We use the three-phase relation $I = YV$ and develops a different method
to identify $Y$ from its Kron reduction $\bar Y$.

\noindent
\emph{Mesh networks}. 
Besides radial networks, identification problems are studied in
\cite{YuWengRajagopal-2019, Ardakanian2019, VaninGethDhulstVanHertem-2023, SoltaniMaKhorsandVittal-2023}
for unbalanced three-phase mesh networks.  All these methods apply to both single or three-phase systems, radial or mesh. 
An AC power flow equation $s = f(V; Y_u, u)$ is used in \cite{YuWengRajagopal-2019} where $s$ 
are the complex power injections, $V$ are the voltage phasors, $u$ is the system state that takes one of
$K$ values, representing, e.g., different topologies, and $y_u$ is the vector of line admittances of the 
network when the system is in state $u$.  Hence $y_u$ contains both topology and line parameters in each state $u$.
By defining certain vector $x := x(s, V)$ and matrix $X := X(s, V)$ as functions of $(s, V)$, \cite{YuWengRajagopal-2019}
treats the power flow equation as a linear mapping from $X$ to $x$ of the form $x = Xy_u$ for each $u$.
Given $(s,V)$, $(x, X)$ will be known quantities.  The paper assumes an error model where the observed
quantities $(\hat x, \hat X)$ are given by $\hat x = x + \epsilon_x$ and $\hat X = X + \epsilon_X$ with additive
Gaussian noise $(\epsilon_x, \epsilon_X)$.
To identify the topology and line admittances, it solves a maximum likelihood estimation: given historical
data $(\hat s_t, \hat V_t, t = 1,\dots, T)$, maximize the $\log$ likelihood 
$L(\hat x(\hat s_t, \hat V_t), \hat X(\hat s_t, \hat V_t), \forall t \, | \, x, X; y_u, u)$ over the variables $(x, X; y_u, u)$, 
subject to $x = X y_u$.
The model $I = YV$ is used in \cite{Ardakanian2019} where $(I, V)$ are the three-phase current injection
phasors and voltage phasors and $Y$ is the three-phase admittance matrix.  Given current and voltage 
measurements $(\hat I^T, \hat V^T)$ over $T$ sampling periods, \cite{Ardakanian2019} estimates $Y$ by minimizing
the Frobenius error $\| Y\hat V^T - \hat I^T \|_F$ subject to symmetry and sparsity requirements on $Y$.
Since the problem is nonconvex due to the sparsity requirement, it solves instead its convex relaxation
(lasso) and proposes several solution techniques.
The approaches of \cite{VaninGethDhulstVanHertem-2023, SoltaniMaKhorsandVittal-2023} are similar.
Network topology is assumed known in \cite{VaninGethDhulstVanHertem-2023} and the goal is the joint estimation
of the line impedances $Z$ and state $(I, V)$ from only AMI measurements at customers (voltage magnitudes, real and reactive
power injections).  Network topology and line impedances are assumed known in 
\cite{SoltaniMaKhorsandVittal-2023} but the switch status $u$ on lines with controllable switches, and hence the
operational topology, are unknown.  The goal is to the joint estimation of switch status $u$ and state $(I, V)$ from
a mix of AMI and PMU data.  Let $y$ denote the quantities for which measurements are available and 
$x$ denote the state $(I, V)$ and line impedances $Z$ or switch status $u$. Let $y = f(x)$.  
The joint estimation problems in \cite{VaninGethDhulstVanHertem-2023, SoltaniMaKhorsandVittal-2023} 
are formulated as an optimal power flow problem of the form: given measurements $\hat y$,
$\min_{x} c(\hat y, f(x))$ subject to current and power flow equations $g(x) = 0$ and operational and other constraints
$h(x)\leq 0$.  Here $c$ denotes the cost function.

\section{Network identification problem}
\label{sec:NID}

In this section we describe our model of a multiphase radial network and define the network identification
problem.  To simplify notation, we assume all nodes and lines are three-phased,
though all results describe in this report extend with straightforward modifications to the case where some 
of the nodes or lines are single-phased or two-phased.

\subsection{Admittance matrix $Y$}

Consider a network ${G}\coloneqq({N},{E})$ with $n$ nodes where
$ N := \{1, \dots, N\}$ is the set of nodes and $ E\subseteq  N\times  N$ is the set of lines.
We use $N$ to denote both the set and the number of nodes; the meaning should be
clear from the context.
We assume the network $ G$ is three-phased,
where each line is characterized by a $3\times 3$ series admittance matrix 
${y}_{jk}\in\mathbb C^{3\times 3}$ and the admittance matrix $Y$ of $G$ is a 
$3N\times 3N$ matrix.  We assume shunt admittances are zero.
We will refer to the rows (columns) $3j-2$, $3j-1$, $3j$ that are associated with node $j$
as the $j$th row block ($j$th column block).  We use $Y[j,k] \in\mathbb C^{3\times 3}$ to denote
the $3\times 3$ submatrix consisting of the $j$th row block and the $k$th column block.  
Then the admittance matrix $Y$ is defined by 
$Y[j,k] = - y_{jk}\neq 0$ if $(j,k)\in E$, $Y[j,j] = \sum_{k:(j,k)\in E}\, y_{jk}$, and $Y[j,k] = 0$ otherwise.

\begin{definition}[Block symmetry and block row sum]
\label{ch:mun.1; sec:bim; subsec:VI; def:BlockSymmetry}
Given a matrix $A\in\mathbb C^{3n\times 3n}$, partition it into $n\times n$ blocks of $3\times 3$ 
submatrices.  Denote by $A_{jk} = A[j,k] \in\mathbb C^{3\times 3}$ its $jk$th submatrix.  
\begin{enumerate}
\item $A$ is called \emph{block symmetric} if $A_{jk} = A_{kj}$ for all $j, k = 1, \dots, n$.
\item $A$ is said to have zero \emph{row-block sums} if 
	$\sum_k A_{jk} = 0$ for all $j=1, \dots, n$.
\end{enumerate}
\end{definition}
A matrix can be symmetric but not block symmetric, and vice versa.  
A symmetric matrix $A$ is block symmetric if, in addition, all its off-diagonal blocks are themselves
symmetric, i.e., $A_{jk}^{\sf T} = A_{jk}$, for all $j\neq k$.
A block symmetric $A$ is symmetric if, in addition, all blocks $A_{jk}$, including the
diagonal blocks, are symmetric.  
If a matrix has zero row-block sums, then all its row 
sums are zero, but the converse may not hold.  

We make the following assumption on the network $G$ and its line admittance matrices $y_{jk}$.
\begin{MyAssumption}{1}[Line admittances]
\label{Assumption:y_jk}
We assume the network $ G$ is radial (i.e., with tree topology) and connected.
For all lines $(j,k)\in E$, we assume that shunt admittances are zero and the series admittance matrices $y_{jk}$ satisfy:
\begin{enumerate}
\item $y_{jk} = y_{kj}\in\mathbb C^{3\times 3}$ so that $Y$ is block symmetric.
\item $y_{jk}$ are symmetric so that $Y$ is also symmetric, i.e., $Y^{\sf T} = Y$.
\item Re$\left(y_{jk} \right)\succ 0$. 
\end{enumerate}
\qed
\end{MyAssumption}
With Assumption \ref{Assumption:y_jk}.1 the line admittance $y_{jk}$ can model transmission or distribution lines or three-phase transformers in $YY$ or $\Delta\Delta$ configuration, but not transformers in $\Delta Y$ or $Y\Delta$ configuration where the series admittances $y_{jk}$ and $y_{kj}$ may be different.
Assumption \ref{Assumption:y_jk}.3 can be replaced by: for all lines $(j,k)\in E$, Im$(y_{jk}^s)\prec 0$.
Assumption 1.3 is typically satisfied by transmission and distribution lines as well as three-phase transformers in the $YY$ configuration, but not 
the $\Delta\Delta$ configuration whose equivalent series admittance matrix is singular due to conversion matrices \cite{Low2022}.

Let $A \subsetneq  N$ and $Y_A$ be the $3|A|\times 3|A|$ principal submatrix of $Y$ consisting of row
 and column blocks $Y[j,k]$ with $j, k\in A$.  Since our admittance matrix $Y$ has zero row-block sums, it is not invertible. When $A$ is a \emph{strict} subset of $ N$, $Y_A$ is invertible under 
 Assumption \ref{Assumption:y_jk}. The Schur complement $Y/Y_A$ of $Y$ is not invertible
since $Y/Y_A$ also has zero row-block sums.  
Suppose $Y_A := \begin{bmatrix} B_{11} & B_{12} \\ B_{12}^{\sf T} & B_{22} \end{bmatrix}$ with an invertible
$B_{22}$. Then the Schur complement $Y_A/B_{22}$ of $B_{22}$ of $Y_A$ is 
$Y_A/B_{22} := B_{11} - B_{12}B_{22}^{-1} B_{12}^{\sf T}$ (we will study Schur complement in more
detail in Section \ref{sec:IterativeKR}).   These and other properties, from \cite{Low2022}, 
are summarized in the next lemma which is fundamental to our identification method.
\begin{lemma}[\cite{Low2022}]
\label{lemma:InvertibleSubmatrix}
Suppose Assumption \ref{Assumption:y_jk} holds.   
\begin{enumerate}
\item For any line $(j,k)\in E$, $y_{jk}^{-1}$ exists and is complex symmetric. 
    Moreover Re$\left( y_{jk}^{-1} \right)\succ 0$.
\item For any strict subset $A \subsetneq  N$, $Y_{A}^{-1}$ exists and 
    is complex symmetric.  Moreover both Re$(Y_{A})\succ 0$ and 
    Re$\left( Y_A^{-1} \right)\succ 0$.
\item $\left( Y_A/B_{22}\right)^{-1}$ exists and is complex symmetric.  Moreover 
    both Re$(Y_{A}/B_{22})\succ 0$ and Re$\left(( Y_A/B_{22})^{-1} \right)\succ 0$.
\end{enumerate}
\end{lemma}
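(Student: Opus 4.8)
The plan is to reduce all three parts to a single auxiliary fact about complex symmetric matrices and then feed in the graph (Laplacian) structure of $Y$. Observe first that for a complex \emph{symmetric} matrix $M$ one has $M^* = \bar M$, so the Hermitian part $\tfrac12(M+M^*)$ coincides with $\mathrm{Re}(M)$; hence every hypothesis and conclusion of the form ``$\mathrm{Re}(\cdot)\succ 0$'' is exactly a positive-definiteness statement about a Hermitian part. The auxiliary fact I would isolate and prove first is: \emph{if $M$ is complex symmetric with $\mathrm{Re}(M)\succ 0$, then $M$ is invertible, $M^{-1}$ is complex symmetric, and $\mathrm{Re}(M^{-1})\succ 0$.} Invertibility holds because $\mathrm{Re}(x^* M x)=x^*\mathrm{Re}(M)x>0$ for $x\neq 0$, so $Mx\neq 0$; symmetry of $M^{-1}$ is immediate from $M^{\sf T}=M$; and writing $y=M^{-1}x$ gives $\mathrm{Re}(x^* M^{-1}x)=\mathrm{Re}(y^* M^* y)=y^*\mathrm{Re}(M)y>0$. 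Part~1 is then immediate: $y_{jk}$ is symmetric by Assumption~\ref{Assumption:y_jk}.2 and $\mathrm{Re}(y_{jk})\succ 0$ by Assumption~\ref{Assumption:y_jk}.3, so the auxiliary fact applies verbatim.

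For part~2 the only new ingredient is $\mathrm{Re}(Y_A)\succ 0$ (then $Y_A$ is symmetric as a principal submatrix of the symmetric $Y$, and the auxiliary fact finishes). Here I would use that $\mathrm{Re}(Y)$ is the real, symmetric, block-weighted graph Laplacian with edge weights $\mathrm{Re}(y_{jk})\succ 0$, so for any complex $x=(x_1,\dots,x_N)$ one has the Laplacian identity $x^*\mathrm{Re}(Y)\,x=\sum_{(j,k)\in E}(x_j-x_k)^*\mathrm{Re}(y_{jk})(x_j-x_k)\ge 0$, which vanishes only when $x$ is constant along every edge. Since $\mathrm{Re}(Y_A)=[\mathrm{Re}(Y)]_A$, its quadratic form equals $\tilde x^*\mathrm{Re}(Y)\tilde x$ where $\tilde x$ is the zero-extension of $x$ off $A$. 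Because $A\subsetneq N$ and $G$ is connected, a vanishing form forces $\tilde x$ constant over the whole connected graph, while $\tilde x=0$ on the nonempty set $N\setminus A$ forces $\tilde x\equiv 0$, i.e. $x=0$. Hence $\mathrm{Re}(Y_A)\succ 0$.

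For part~3 I would deliberately avoid taking the real part of a Schur complement directly (real part and Schur complementation do not commute) and instead route through the inverse. The Schur complement $S:=Y_A/B_{22}=B_{11}-B_{12}B_{22}^{-1}B_{12}^{\sf T}$ is complex symmetric by inspection; it is invertible since $\det Y_A=\det B_{22}\,\det S$ with both $Y_A$ (part~2) and $B_{22}$ nonsingular; and the block-inverse identity gives $S^{-1}=[Y_A^{-1}]_{11}$, the leading principal block of $Y_A^{-1}$. As $\mathrm{Re}(Y_A^{-1})\succ 0$ by part~2 and every principal submatrix of a positive definite matrix is positive definite, $\mathrm{Re}(S^{-1})=[\mathrm{Re}(Y_A^{-1})]_{11}\succ 0$. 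Applying the auxiliary fact to the symmetric matrix $S^{-1}$ then yields $\mathrm{Re}(S)=\mathrm{Re}\!\big((S^{-1})^{-1}\big)\succ 0$, completing part~3.

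The genuinely new step, and the main obstacle, is the positive definiteness in part~2: one must correctly recognize $\mathrm{Re}(Y)$ as a connected graph Laplacian and use the strictness $A\subsetneq N$ together with connectivity to eliminate its three-dimensional null space. Two points require care: (i) $\mathrm{Re}(\cdot)$ equals the Hermitian part only because the blocks are complex \emph{symmetric}, so Assumption~\ref{Assumption:y_jk}.2 is essential throughout; and (ii) in part~3 one should not attempt to commute $\mathrm{Re}$ with the Schur complement, which is precisely why the argument passes through $S^{-1}=[Y_A^{-1}]_{11}$.
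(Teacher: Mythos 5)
Your proof is correct. Note, however, that the paper itself gives no proof of this lemma: it is imported verbatim from the cited reference \cite{Low2022}, so there is no in-paper argument to compare against. Your three-step route --- (i) the auxiliary fact that a complex symmetric $M$ with $\mathrm{Re}(M)\succ 0$ is invertible with symmetric inverse and $\mathrm{Re}(M^{-1})\succ 0$ (valid precisely because symmetry makes $\mathrm{Re}(M)$ the Hermitian part), (ii) the block-Laplacian identity $x^*\mathrm{Re}(Y)x=\sum_{(j,k)\in E}(x_j-x_k)^*\mathrm{Re}(y_{jk})(x_j-x_k)$ combined with connectivity and $A\subsetneq N$ to kill the null space, and (iii) passing through $S^{-1}=[Y_A^{-1}]_{11}$ via the block-inverse formula \eqref{eq:A-1.b} rather than trying to commute $\mathrm{Re}$ with the Schur complement --- is the standard argument for results of this type and each step checks out. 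The only small observation worth adding is that the invertibility of $B_{22}$, which you take from the lemma's setup, also follows from your part~2 since $B_{22}$ is itself a principal submatrix $Y_{A'}$ with $A'\subseteq A\subsetneq N$.
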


\begin{remark}[Assumption \ref{Assumption:y_jk}]
As made precise in Lemma \ref{lemma:InvertibleSubmatrix}, the importance of 
Assumption \ref{Assumption:y_jk} is that it allows us to take inverse of any principal
submatrix of the admittance matrix $Y$, justifying arbitrary and successive Kron reductions.  It also allows
certain structural properties to be preserved under Kron reduction, which underlies our results in 
Sections \ref{sec:IterativeKR} and \ref{sec:Assumption3}.
\qed
\end{remark}

\subsection{Hidden nodes and Kron reduction $\Bar Y$}

We assume there are two types of nodes, i.e., $N =: M \cup H$. Nodes $j\in M$ are called \emph{measured nodes} whose three-phase nodal
voltage and current injection phasors $(V_j, I_j)\in\mathbb C^6$ are measured.  Nodes $j\in H$ are called \emph{hidden nodes} whose nodal 
voltages and currents are not measured.
We abuse notation and use $M$ and $H$ to denote both the sets and the numbers of measured and hidden nodes, so $N = M+H$.
The nodes are labeled such that the first $M$ nodes are measured and the last $H$ buses are hidden. We partition the admittance matrix into four sub-matrices accordingly:
\begin{equation}
\label{eq:Y.1}
    Y =:  \begin{bmatrix} Y_{11} & Y_{12} \\ Y_{21} & Y_{22} \end{bmatrix}
 \end{equation}
where $Y_{11} \in \mathbb C^{3M\times 3M}$ specifies the connectivity between measured nodes, $Y_{22} \in\mathbb C^{3H\times 3H}$ specifies the connectivity between hidden nodes, $Y_{12}$ specifies the connectivity between measured and hidden nodes, and $Y_{21} = Y_{12}^{\sf T}$ under Assumption \ref{Assumption:y_jk}.   Partition the voltage and current phasors accordingly: $(V_1, I_1)$ correspond to the voltages and currents of measured nodes and
$(V_2, I_2)$ those of hidden nodes.  If the current injections $I_2$ at hidden nodes are zero, then the network model is:
\begin{equation*}
    \begin{bmatrix}
    I_1 \\ 0
    \end{bmatrix}
    = 
    \begin{bmatrix}
    Y_{11} & Y_{12} \\
    Y_{21} & Y_{22}
    \end{bmatrix}
    \begin{bmatrix}
    V_1 \\ V_2
    \end{bmatrix}
\end{equation*}
Under Assumption \ref{Assumption:y_jk}, Lemma \ref{lemma:InvertibleSubmatrix}
implies that $Y_{22}$ is nonsingular and hence we can eliminate $V_2$ by computing the Schur complement $Y/Y_{22}$ of $Y_{22}$
of $Y$.  We denote the Schur complement by $\bar Y$:
\begin{subequations}
\begin{align}
\label{eq:BarY}
    \Bar{Y} & \ := \ Y/Y_{22} \ := \ Y_{11} - Y_{12}Y_{22}^{-1}Y_{12}^\top.
\end{align}
and call $\bar Y$ a \emph{Kron-reduced admittance matrix} or a \emph{Kron reduction of $Y$}
because $\bar Y$ relates the voltages and currents at the measured nodes:
\begin{align}\label{eq:observable_model_part}
    I_1 = \Bar{Y} V_1,
\end{align}
The matrix $\bar Y$ is the admittance matrix of a virtual network consisting of only
measured nodes in which two measured nodes are adjacent in $G(\bar Y)$ if and only if there 
is a path consisting of
only hidden nodes that connect them in the original graph $G(Y)$.  Its off-diagonal element $\bar Y[j,k]$ 
represents the ``effective'' admittance
between two nodes $j, k$ in the virtual network.
\end{subequations}
The identification of $\bar Y$ from the measurements $(V_1, I_1)$ is the same as identification from 
full measurements and several methods in the literature can be used, e.g., \cite{DekaKekatosCavraro-2024}, 
\cite{LiWengLiaoKeelBrown-2021,YuWengRajagopal-2019, Ardakanian2019};
see discussions in Section  \ref{subsec:RelatedWork}.
We assume this has been done and focus on identifying $Y$ from a given Kron reduction 
$\bar Y$. 

The relation \eqref{eq:BarY} specifies a mapping from the original admittance matrix $Y$ to its Kron reduction $\Bar Y$.  
In general given a Kron-reduced admittane matrix $\bar Y$, it is not possible to recover the original $Y$ uniquely.
When the network is radial, constructing the inverse mapping from $\bar Y$ to $Y$  
turns out to be possible when the hidden nodes satisfy the following assumption.
\begin{MyAssumption}{2}[Hidden nodes]
\label{Assumption:HiddenNodes}
\begin{enumerate}
\item Hidden nodes have zero injections $I_i=0\in\mathbb C^3,\,\forall i\in{H}$.
\item Every hidden node has node degree at least 3.
\end{enumerate}
\qed
\end{MyAssumption}
\begin{remark}[Assumption \ref{Assumption:HiddenNodes}]
If hidden node has a node degree less than 3, it cannot be identified, i.e. it cannot be distinguished
from its neighbors.  Assumption \ref{Assumption:HiddenNodes} implies that all (degree-1) leaf nodes
must be measured nodes, i.e., all hidden nodes must be ``surrounded by'' measured nodes.

Assumption \ref{Assumption:HiddenNodes} is used in Section \ref{sec:IDMC; subec:reverseKR} for 
reversing Kron reduction (in computing $(y_l, \hat y_l)$ from \eqref{Step3; eq:1}) and in 
Section \ref{sec:Assumption3} for identifying sibling nodes and their common parent
(Theorem \ref{thm:SameHiddenNode}).
\qed
\end{remark}


\subsection{Admittance matrix identification}
\label{sec:NID; subsec:problem}

\paragraph{Network identification problem:}
Given a Kron-reduced admittance matrix $\bar Y$, our goal is to construct the original admittance matrix 
$Y$ that satisfies \eqref{eq:BarY} under Assumptions \ref{Assumption:y_jk} and 
\ref{Assumption:HiddenNodes} (as well as additional assumptions to be introduced later). 
\begin{remark}[Identification of $\bar Y$]
\bee
\item 
We assume that the Kron-reduced admittance matrix $\bar Y$
has been identified from the voltage and current phasors $(V_1, I_1)$ at the measured nodes, e.g.,
through regression using \eqref{eq:observable_model_part}.  
\item 
Identification of $\bar Y$ yields the set $M$ of measured nodes, but not the number nor the identity of 
hidden nodes in $H$, nor the set $E$ of edges.
\eee
\qed
\end{remark}

In the rest of this report, we describe a method to solve the network identification problem.

\section{Graph structures of $Y$ and $\Bar Y$}
\label{sec:graphBarY}

In this section we describe the graph structures of $Y$ and its Kron reduction $\bar Y$ that
motivate our identification strategy.

\subsection{Graph $G(Y)$}
\label{sec:graphBarY; subsec:graphY}

Given the admittance matrix $Y\in\mathbb C^{3N\times 3N}$ of a three-phase network, the 
\emph{graph underlying $Y$} is the graph $ G(Y) := ( M\cup  H,  E_1 \cup  E_2)$ where 
$ M$ and $ H$ are disjoint sets of measured and hidden nodes respectively, and
there is an edge $(j,k) \in  E_1 \cup  E_2$ if and only if the submatrix
$Y[j,k]\in\mathbb C^{3\times 3}$ is nonzero.  Here $ E_1$ denotes the
set of edges between only measured nodes in $ M$, and $ E_2$ denotes the set of edges each of which is incident on \emph{at least one}
 hidden node in $ H$.


We partitions the set $ M$ of measured nodes into \emph{internal measured nodes} that are not adjacent to any hidden nodes and \emph{boundary measured nodes} that are adjacent to some hidden nodes.   
Similarly partition the set $ H$ of hidden nodes into \emph{internal hidden nodes} that are not adjacent to any measured nodes and \emph{boundary hidden nodes} that are adjacent to some measured nodes. 
If we knew the identity of internal and boundary measured and hidden nodes, then we can 
decompose the submatrices $Y_{11}$, $Y_{22}$ and $Y_{12}$ of the full admittance matrix 
$Y$ in \eqref{eq:Y.1} into submatrices accordingly
($Y$ is symmetric under Assumption \ref{Assumption:y_jk} and we hence sometimes only specify
its upper submatrix):
\begin{align}
\label{eq:Y.2}
    Y & \ =: \ \left[ \begin{array}{c | c}
    Y_{11} & Y_{12} \\ \hline Y_{12}^{\sf T} & Y_{22}
    \end{array} \right]   
    \ =: \ \left[ \begin{array}{c c | c c}
    Y_{11,11} & Y_{11, 12} & 0 & 0 
    \\ 
      & Y_{11, 22} & Y_{12, 21} & 0 
    \\ \hline
      & & Y_{22, 11} & Y_{22, 12} 
    \\
      & & & Y_{22, 22} 
    \end{array} \right]
\end{align}
By definition, $ E_1$ is the set of edges in $Y_{11}$ between measured nodes only,
and $ E_2$ is the set of edges in $Y_{12}$ between boundary measured and hidden nodes and 
in $Y_{22}$ between hidden nodes.
Moreover $ E_1$ is partitioned into edges between internal measured nodes (in $Y_{11,11}$), 
between boundary measured nodes (in $Y_{11,22}$), and between internal and boundary measured nodes
(in $Y_{11,12}$).   See Figure \ref{fig:Y1} (the example does not satisfy 
Assumption \ref{Assumption:HiddenNodes}).
	\begin{figure}[htbp]
	\centering
	\includegraphics[width=0.45\textwidth] {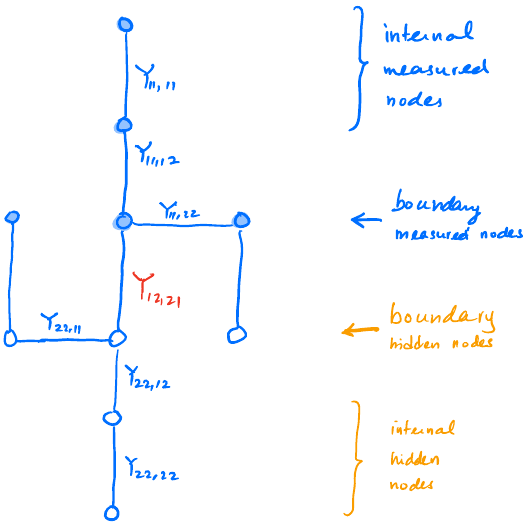}
	\caption{Decomposition of admittance matrix $Y$ according to edges.  	
	Shaded nodes are measured nodes and unshaded nodes are hidden nodes.
}
	\label{fig:Y1}
	\end{figure}
Similarly $ E_2$ is partitioned into edges between boundary measured and hidden nodes (in $Y_{12,21}$),
between boundary hidden nodes (in $Y_{22,11}$), between internal hidden nodes (in $Y_{22,22}$), and
between internal and boundary hidden nodes (in $Y_{22,12}$).  These partitions imply, in particular, that 
$Y_{12}$ has the structure in \eqref{eq:Y.2} with three zero submatrices and a nonzero $Y_{12, 21}$.
This structure is important for identification of $Y$ from $\bar Y$ (see Step 1 in
Section \ref{sec:algorithm; subsec:IdentifyY_1111}).

Of course, given only $\bar Y$, we do not know the structure in \eqref{eq:Y.2}.  We can however
differentiate between internal and boundary measured nodes from the graph structure of $\bar Y$ 
as we next explain.

\subsection{Kron-reduced graph $ G(\Bar Y)$}
\label{sec:graphBarY; subsec:graphbarY}

In this subsection we show that the graph $G(\bar Y)$ underlying the Kron reduced admittance matrix 
$\bar Y$ consists of maximal cliques of size at least 3 connected by subtrees of $G(Y)$.  This allows
us to identify internal and boundary measured nodes.

To describe this precisely, consider an arbitrary connected graph $\hat G$.  A \emph{clique} is a fully 
connected subgraph $C(\hat G)$ of $\hat G$ in which every pair of nodes are adjacent.  Its size is the 
number of nodes in $C(\hat G)$.  A \emph{maximal clique}
$C(\hat G)$ is a clique such that, for every node $j$ in $\hat G\setminus C(\hat G)$, there is a node in 
$C(\hat G)$ that is not adjacent to $j$ in $\hat G$.  
Similarly a \emph{subtree} $T(\hat G)$ of $\hat G$ is a subgraph of $\hat G$ that is a tree. 
A \emph{maximal subtree} of $\hat G$ is a subtree $T(\hat G)$ such that adding another
node in $\hat G\setminus T(\hat G)$ to $\hat G$ results in a subgraph that is not a tree.  
Since $\hat G$ is connected, there is a path in $\hat G$ between any two nodes.  We extend this notion 
to subgraphs and say that two subgraphs $\hat G_1$ and $\hat G_2$ of $\hat G$ are 
\emph{connected through another subgraph $\hat G_3$} if their nodes are connected in $\hat G$ by paths
consisting only of nodes in $\hat G_1, \hat G_2, \hat G_3$.
We say that $\hat G_1$ and $\hat G_2$ are \emph{adjacent} if there is a node $j_1$ in $\hat G_1$ 
and a distinct node $j_2$ in $\hat G_2$ that are adjacent, i.e., $(j_1, j_2)$ is a line in $\hat G$.  
We say that they \emph{overlap} with each
other if there is a node that is in both subgraphs.  

\begin{theorem}[Boundary and internal measured nodes]
\label{sec:graphBarY; subsec:graphbarY; thm:G(bar Y)}
Suppose Assumption \ref{Assumption:y_jk} holds.  Given the Kron-reduced admittance matrix $\bar Y$, 
its underlying graph $ G(\bar Y)$ consists of only maximal cliques and maximal subtrees, i.e., the set $M$ 
of measured nodes of $G(\bar Y)$ can be expressed as 
\begin{align*}
M & \ =:\ \underbrace{ \left( C_1 \cup \cdots \cup C_m \right) }_{M_\text{bnd}} \ \bigcup \
	\left( T_1 \cup \cdots \cup T_n \right) 
\end{align*}
for some $m$ maximal cliques $C_i$ and $n$ maximal subtrees $T_i$ of $G(\bar Y)$ (they are not
necessarily disjoint).\footnote{We abuse notation and use $C_i$ ($T_i$) to denote both
a maximal clique (maximal subtree) of $G(\bar Y)$ and the set of its nodes.}
Moreover
\begin{enumerate}
\item Two maximal cliques are edge-disjoint, i.e., they do not share a line in $G(\bar Y)$.
	They can overlap, be adjacent, or be connected through a maximal subtree of (internal)
	measured nodes in $G(\bar Y)$.
\item A node $j$ is a boundary measured node if and only if it is in a maximal clique, 	i.e., 
	$j\in C_1 \cup \cdots \cup C_m =: M_\text{bnd}$.
\item A node $j$ is an internal measured node if and only if $j \in M_\text{int} := M\setminus M_\text{bnd}$.
\item If Assumption \ref{Assumption:HiddenNodes} holds then every maximal clique $C_i$ is of size at least 3.
\end{enumerate}
\end{theorem}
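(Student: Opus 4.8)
The plan is to reduce the entire theorem to a single structural fact about the sparsity pattern of $\bar Y$ and then to argue purely on the tree $G(Y)$. The fact I would establish first is a precise \emph{adjacency characterization}: for distinct measured nodes $j,k$, the block $\bar Y[j,k]$ is nonzero (equivalently, $(j,k)$ is an edge of $G(\bar Y)$) if and only if the unique path joining $j$ and $k$ in the tree $G(Y)$ has all of its interior nodes hidden. The direction ``interior measured node $\Rightarrow$ no edge'' is immediate from the locality of the Schur complement. The converse is where the work lies, and I would prove it by carrying out the Kron reduction \emph{one hidden node at a time}, as in the iterative reduction of Section \ref{sec:IterativeKR}: eliminating a single node $c$ of current degree $d$ makes its $d$ neighbors pairwise adjacent, and the resulting block between previously non-adjacent neighbors $a,b$ is $-Y[a,c]\,Y[c,c]^{-1}\,Y[c,b]\neq 0$. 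The crucial point is that such fill-in never cancels in later steps, and this is exactly where Assumption \ref{Assumption:y_jk}.3 enters through Lemma \ref{lemma:InvertibleSubmatrix}: every principal submatrix and every successive Schur complement retains a positive-definite real part, so no block that ``should'' be present can vanish. I expect this no-cancellation step to be \textbf{the main obstacle}, since for a general (non-tree or indefinite) network the Schur-complement sparsity can differ from path-connectivity.

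Granting the adjacency characterization, items 2 and 3 become nearly definitional and the clique structure follows. I would call a connected component of the subgraph of $G(Y)$ induced by $H$ a \emph{hidden island} $H'$, and write $\partial H'$ for the set of measured nodes adjacent in $G(Y)$ to some node of $H'$. Any two nodes of $\partial H'$ are joined through $H'$ by an all-hidden interior path, so $\partial H'$ is a clique of $G(\bar Y)$; conversely a measured node lies in some $\partial H'$ precisely when it has a hidden neighbor, i.e.\ precisely when it is a boundary node (item 2), and the remaining measured nodes are internal (item 3). An internal measured node has only measured neighbors, so by the adjacency characterization its $G(\bar Y)$-neighbors are exactly its original tree-neighbors; hence the internal measured nodes, together with these preserved edges, assemble into the maximal subtrees $T_i$, yielding $M = M_\text{bnd}\cup(T_1\cup\cdots\cup T_n)$.

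For item 1 and clique maximality I would argue entirely on the tree. Each $G(\bar Y)$-edge inside a clique $C_i=\partial H_i'$ corresponds to an all-hidden interior path; since the interior of a single path is connected and hidden, it lies in one island, so each such edge belongs to a unique island and two distinct cliques cannot share an edge, though they may share a boundary node and hence overlap. For maximality, deleting the nodes of $H_i'$ from $G(Y)$ splits the tree into components, each attached to $H_i'$ through exactly one boundary node (a second attachment would close a cycle). Given any measured $\ell\notin C_i$, it lies in the component hanging off a single $a_0\in C_i$, and for every other $b\in C_i$ the path $\ell\to b$ must pass through $a_0$, a measured interior node; thus $\bar Y[\ell,b]=0$, so $\ell$ fails to be adjacent to all of $C_i$ and $C_i$ is maximal.

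Finally, item 4 is a degree count on the island under Assumption \ref{Assumption:HiddenNodes}. If $H_i'$ has $h$ hidden nodes then, being a subtree, it has $h-1$ internal edges, and since each hidden node has degree at least $3$ the degree sum over $H_i'$ is at least $3h$; subtracting the $2(h-1)$ endpoints contributed by internal edges leaves at least $h+2$ edges crossing from $H_i'$ to measured nodes. Because $G(Y)$ is a tree, no measured node can be adjacent to two nodes of $H_i'$ (that too would close a cycle), so these crossing edges meet distinct boundary nodes and $|C_i|=|\partial H_i'|\geq h+2\geq 3$.
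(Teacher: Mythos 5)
Your proof follows essentially the same route as the paper's: both rest on the characterization that two measured nodes are adjacent in $G(\bar Y)$ exactly when the interior of their tree path is all hidden, both identify each clique with the set of measured neighbors of a connected hidden component (the paper's maximal hidden subtrees $T_i^{\text{hid}}(Y)$), both derive edge-disjointness and the one-attachment-per-boundary-node fact from the acyclicity of $G(Y)$, and your degree count for item 4 is algebraically identical to the paper's ($m_i \ge h_i+2 \ge 3$); you are in fact more explicit than the paper on clique maximality. The one caveat is your no-cancellation step: Lemma \ref{lemma:InvertibleSubmatrix} gives positive-definite real parts of principal submatrices and Schur complements, which does not by itself force a particular off-diagonal $3\times 3$ fill-in block to stay nonzero — the paper simply asserts the adjacency characterization here and only proves non-vanishing fill-in under the uniform-line Assumption \ref{Assumption:UniformLines} (Theorem \ref{lemma:Kron-reduced-y}) — so you correctly located the main obstacle but, like the paper, have not actually closed it.
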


\begin{proof}
As mentioned after the definition \eqref{eq:BarY} of Kron-reduced admittance matrix $\bar Y$,
two measured nodes are adjacent in $G(\bar Y)$ if and only if they are adjacent in $G(Y)$
or there is a path consisting of \emph{only} hidden nodes that connect them in $G(Y)$.
Consider a maximal subtree ${ T}_i^\text{hid}(Y)$ of the tree $G(Y)$ consisting of only hidden nodes. 
All the boundary measured nodes that are connected to ${ T}_i^\text{hid}(Y)$ in $G(Y)$ are
therefore adjacent to each other in $G(\bar Y)$, i.e., they form a maximal clique $C_i$ of 
$G(\bar Y)$.  Conversely, if nodes $j,k$ are in a maximal clique $C_i$ with more 
than 3 nodes, then both must be connected to the same maximal subtree ${ T}_i^\text{hid}(Y)$
since $G(Y)$ is a tree.
Hence $j$ is a boundary measured node if and only if $j\in M_\text{bnd}$ and is an internal 
measured node otherwise.  

To show that two maximal cliques $C_1$ and $C_2$ are edge-disjoint, suppose they share a common
edge $(j,k)$ in $G(\bar Y)$.  Since $j$ is in both $C_1$ and $C_2$, it must be connected in $G(Y)$
to two disjoint maximal subtrees ${ T}_1^\text{hid}(Y)$ and ${ T}_2^\text{hid}(Y)$ of hidden nodes.
Similarly for node $k$.  This creates a loop in $G(Y)$, a contradiction.  Hence $C_1$ and $C_2$
are edge-disjoint in $G(\bar Y)$.  Obviously two maximal cliques can overlap, be adjacent, 
or be connected through a maximal tree in $G(\bar Y)$ (or through a sequence of maximal trees 
and maximal cliques).

Finally we show that each $C_i$ has at least 3 nodes if Assumption \ref{Assumption:HiddenNodes} holds.  
Let $m_i$ be the number of nodes in $ C_i$ and $h_i$ be the number 
of hidden nodes in  ${ T}_i^\text{hid}(Y)$.  Then the total number of edges in $G(Y)$ incident on the 
nodes in ${ T}_i^\text{hid}(Y)$ is $(h_i-1) + m_i$ since  ${ T}_i^\text{hid}(Y)$ is a tree.
On the other hand, if Assumption \ref{Assumption:HiddenNodes} holds, then every hidden node 
has degree at least 3 and hence the number of edges incident on the nodes in ${ T}_i^\text{hid}(Y)$ is at least
$(3h_i-m_i)/2 + m_i$.  Therefore
\begin{align*}
(h_i-1) + m_i & \ \geq \ (3h_i-m_i)/2 + m_i
\end{align*}
implying $m_i\geq h_i + 2 \geq 3$ as long as $h_i\geq 1$.

\end{proof}

\begin{remark}
\label{subsec:proof; remark:GraphDecomposition}
Given the Kron-reduced admittance matrix $\bar Y$, 
Theorem \ref{sec:graphBarY; subsec:graphbarY; thm:G(bar Y)} allows us to deduce:
\begin{itemize}
\item The number and identity of measured nodes $M := \{1, \dots, M\}$.  
\item The sets $M_\text{int}$ and $M_\text{bnd}$ of interior measured nodes and boundary measured 
nodes respectively so that $M = M_\text{int} \cup M_\text{bnd}$.
\end{itemize}
While every node in a maximal clique $K_i$ is a boundary measured node, a node in a
maximal subtree $T_i$ may be an internal or boundary measured node.
\qed
\end{remark}


\section{Overall identification algorithm}
\label{sec:algorithm}

In this section we describe our method to identify $Y$ from its Kron reduction $\bar Y$ under
Assumptions \ref{Assumption:y_jk} and \ref{Assumption:HiddenNodes}, as well as 
Assumptions \ref{Assumption:BoundaryHiddenNodes} of Section \ref{sec:IterativeKR}
or Assumption \ref{Assumption:UniformLines} of Section \ref{sec:Assumption3} which are required
for identifying a single maximal clique consisting of degree-1 measured nodes connected by a tree of hidden nodes.  This overall Algorithm 1 consists of five steps.  It
reduces the identification problem to the special case where the network is a single maximal
clique.

\subsection{Step 1: Identification of $Y_{11,11}$, $Y_{11, 12}$ and $Y_{11,21}$}
\label{sec:algorithm; subsec:IdentifyY_1111}

As explained in Remark \ref{subsec:proof; remark:GraphDecomposition}, given the Kron-reduced 
admittance matrix $\Bar Y$, we can partition $M$ into the set $M_\text{int}$ of interior measured 
nodes and the set $M_\text{bnd}$ of boundary measured nodes.  We can order the interior measured nodes in $M_\text{int}$ first followed by boundary measured nodes in $M_\text{bnd}$ so 
that the given Kron-reduced admittance matrix $\bar Y$ takes the form:
\begin{align*}
\bar Y & \ =: \ \begin{bmatrix} \bar Y_{11} & \bar Y_{12} \\ \bar Y_{12}^{\sf T} & \bar Y_{22} \end{bmatrix}
\end{align*}
where $\bar Y_{11}$ corresponds to nodes in $M_\text{int}$ and $\bar Y_{22}$ corresponds
to nodes in $M_\text{bnd}$.
As explained in Section \ref{sec:graphBarY; subsec:graphY}, the admittance matrix $Y$, to be determined,
takes the form 
\begin{align}
\label{sec:algorithm; subsec:IdentifyY_1111; eq:Y.2}
    Y & \ =: \ \left[ \begin{array}{c | c}
    Y_{11} & Y_{12} \\ \hline Y_{12}^{\sf T} & Y_{22}
    \end{array} \right]   
    \ =: \ \left[ \begin{array}{c c | c c}
    Y_{11,11} & Y_{11, 12} & 0 & 0 
    \\ 
      & Y_{11, 22} & Y_{12, 21} & 0 
    \\ \hline
      & & Y_{22, 11} & Y_{22, 12} 
    \\
      & & & Y_{22, 22} 
    \end{array} \right]
\end{align}
where $Y_{11,11}$ corresponds to nodes in $M_\text{int}$ and $Y_{11,22}$ corresponds
to nodes in $M_\text{bnd}$.
Under Assumption \ref{Assumption:y_jk}, $Y_{22}$ in \eqref{sec:algorithm; subsec:IdentifyY_1111; eq:Y.2}
is invertible according to Lemma \ref{lemma:InvertibleSubmatrix}.  Define
\begin{subequations}
\begin{align}
Z_{22} & \ := \ Y_{22}^{-1} \ =: \ 
\begin{bmatrix}
X_{22, 11} & X_{22, 12} \\ X_{22, 21} & X_{22, 22}
\end{bmatrix}
\label{sec:algorithm; eq:Step1.1a}
\end{align}
Substituting into \eqref{sec:algorithm; subsec:IdentifyY_1111; eq:Y.2}, we can write 
$\Bar Y = Y_{11} - Y_{12} Z_{22} Y_{21}$ in terms of $X_{22,11}$: 
\begin{align}
\begin{bmatrix} \bar Y_{11} & \bar Y_{12} \\ \bar Y_{21} & \bar Y_{22} \end{bmatrix} & \ := \ 
\begin{bmatrix} Y_{11,11} & Y_{11, 12} \\
Y_{11,21} & Y_{11, 22} \end{bmatrix} \, - \,
\begin{bmatrix} 0 & 0 \\ 0 & 
Y_{12, 21} X_{22, 11} Y_{21,12} \end{bmatrix}
\label{sec:algorithm; eq:Step1.1b}
\end{align}
Since $\Bar Y$ on 
\label{sec:algorithm; eq:Step1.1}
\end{subequations}
the left-hand side of \eqref{sec:algorithm; eq:Step1.1b} is given, we can identify 
\begin{align}
Y_{11, 11} & \ = \ \bar Y_{11},  &   Y_{11, 12} & \ = \ \bar Y_{12}, & Y_{11,21} & \ = \ \bar Y_{21} 	
\label{eq:Solution.1}
\end{align}
 
To identify $Y_{11, 22}$, $Y_{12, 21}$ and $Y_{22}$ in \eqref{sec:algorithm; subsec:IdentifyY_1111; eq:Y.2}, 
we remove all internal measured nodes from the original network.  
The Kron-reduced admittance matrix $\bar Y'$ of the resulting (sub)network can be computed from the given 
$\bar Y$ in \eqref{sec:algorithm; eq:Step1.1}:
\begin{align}
\bar Y' &\ = \ \bar Y_{22} - \diag\left( \left(\textbf 1 \otimes \textbf I_3 \right)^{\sf T} \bar Y_{22} \right)
\label{eq:solution.3}
\end{align}
where $\textbf 1$ denotes a vector of all 1s of appropriate size.  This renormalizes the diagonal blocks of $\bar Y'$
so that $\bar Y'$ has zero row and column-block sums, by removing from these diagonal blocks the series admittances 
$y_{ij}\in\mathbb C^{3\times 3}$ of lines in $ E_1$ between internal and boundary measured nodes.
After the admittance matrix $Y'$ has been identified from its Kron reduction $\bar Y'$ in
Steps 2 through 4 (Sections \ref{sec:algorithm; subsec:DecomposeMaxCliques} through
\ref{sec:algorithm; subsec:CombineMaxCliques}), internal measured nodes can be put back by 
reversing \eqref{eq:solution.3}, i.e., by computing the admittance matrix $Y$ of the original
network from $Y'$ (Section \ref{sec:algorithm; subsec:ReverseStep1}).

In the graph $G(\bar Y')$, two maximal cliques may be disconnected if they
were connected through a maximal subtree of measured nodes before the operation \eqref{eq:solution.3},
as illustrated in the next example.

\begin{example}
\label{sec:algorithm; subsec:IdentifyY_1111; eg:1}
Figure \ref{fig:RemovingInternalNodes} shows a graph $G(Y)$ whose Kron reduction consists of
two maximal cliques $ C_j(\bar Y)$ and $ C_k(\bar Y)$ connected in $G(\bar Y)$ through a maximal 
subtree of \emph{internal} measured nodes. 
	\begin{figure}[htbp]
	\centering
      \subfigure[ ]{
	\includegraphics[width=0.23\textwidth] {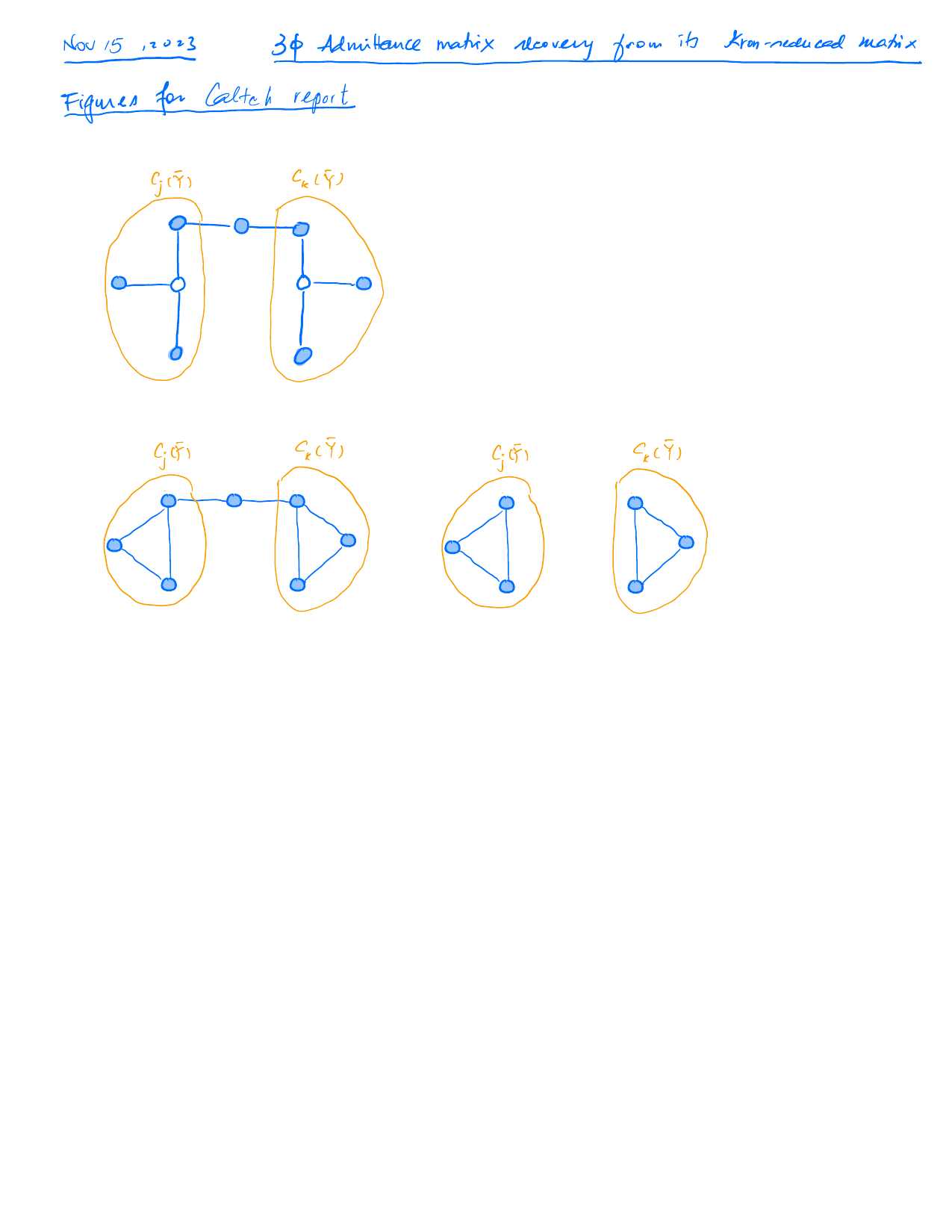} }
    \qquad\quad   
     \subfigure[ ] {
	\includegraphics[width=0.28\textwidth] {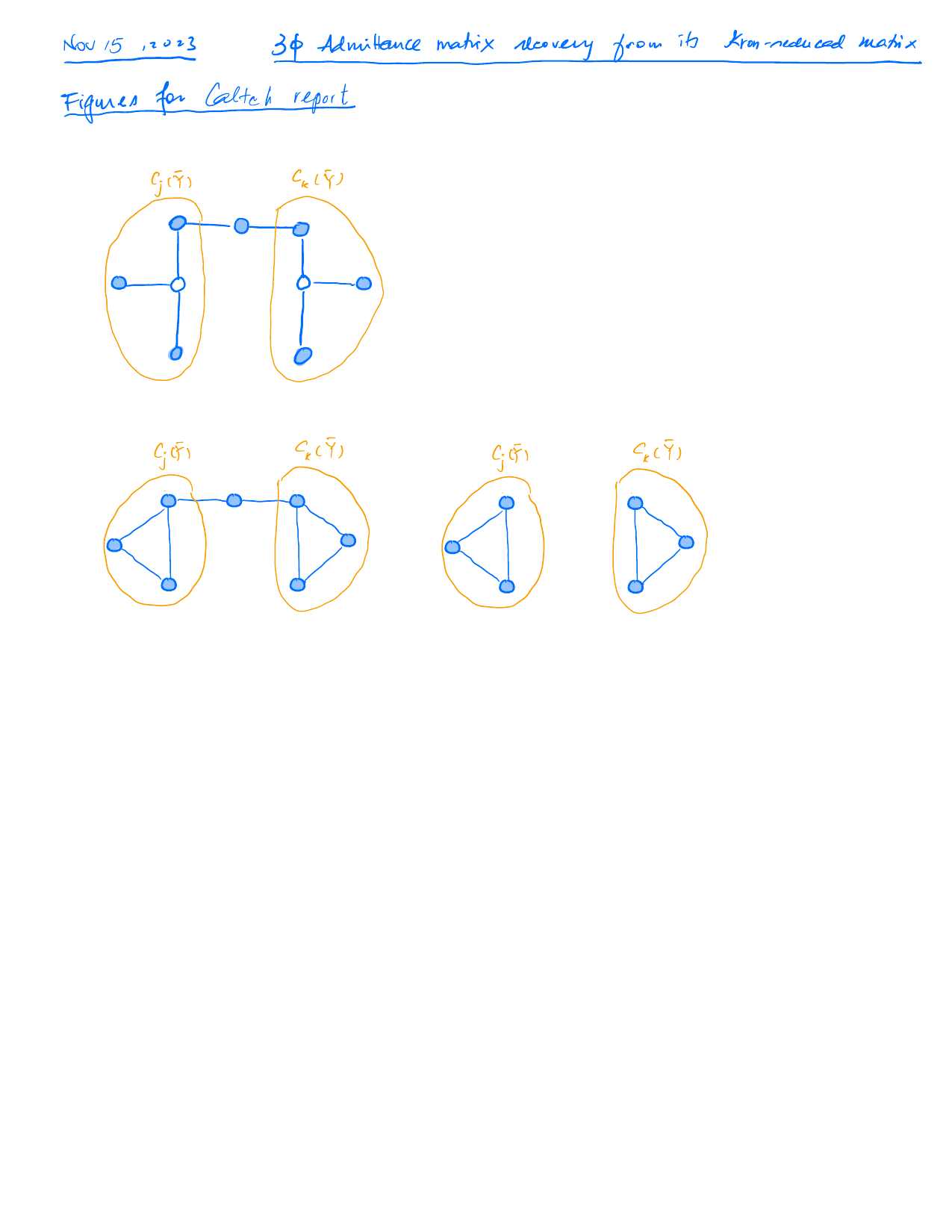} }
    \qquad\quad  
       \subfigure[ ] {
	\includegraphics[width=0.28\textwidth] {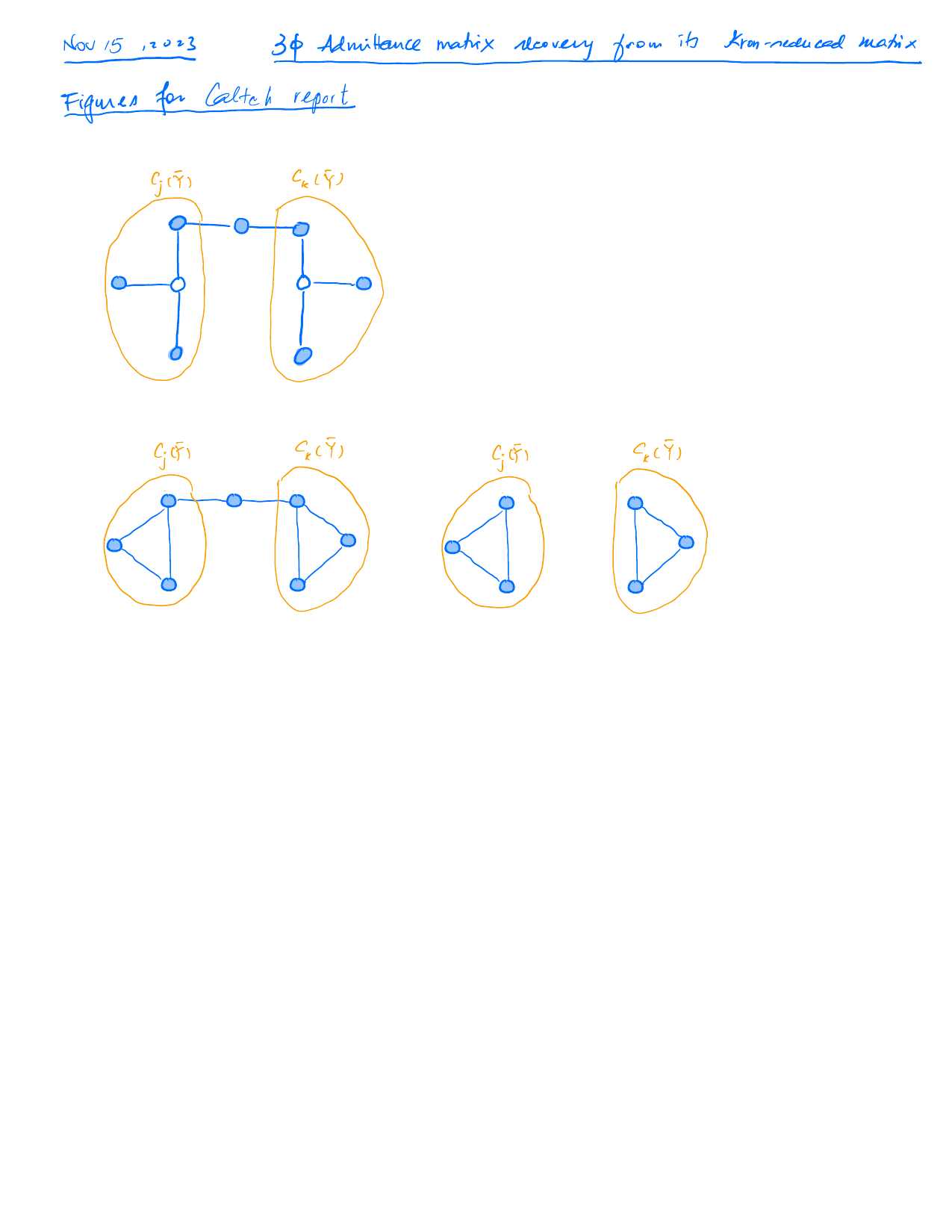} }
	\caption{(a) Original graph $G(Y)$.  (b) Kron reduction $G(\bar Y)$.
	(c) Graph $G(\bar Y')$ after removing internal measured nodes.	
	}
	\label{fig:RemovingInternalNodes}
	\end{figure}
The Kron-reduced matrix $\bar Y$ takes the form ($\times$ denotes nonzero entries):
\begin{align*}
\Bar Y & \ = \ \left[ \begin{array} {ccccc}
\times & \times & & \times &  \\ \cline{2-5}
\times & \multicolumn{1}{|c}{ \cellcolor[rgb]{1,0.5,0.3}{\times} } & \cellcolor[rgb]{1,0.5,0.3} \times & &  \multicolumn{1}{c|} {} \\ 
  & \multicolumn{1}{|c}{ \cellcolor[rgb]{1,0.5,0.3} \times } & \cellcolor[rgb]{1,0.5,0.3} \times & &  \multicolumn{1}{c|} {} \\ 
\times & \multicolumn{1}{|c} {} & & \cellcolor[rgb]{0, 0.6, 1} \times & \multicolumn{1}{c|}{ \cellcolor[rgb]{0, 0.6, 1} \times } \\ 
& \multicolumn{1}{|c}{} & & \cellcolor[rgb]{0, 0.6, 1} \times & \multicolumn{1}{c|} {\cellcolor[rgb]{0, 0.6, 1} \times} \\ \cline{2-5}
	\end{array} \right]
\end{align*}
The submatrix $\bar Y_{22}$ of $\bar Y$ on the left-hand side of \eqref{sec:algorithm; eq:Step1.1b} is
\begin{align*}
\bar Y_{22} & \ =: \ \begin{bmatrix} W_{11} & 0 \\ 0 & W_{22} \end{bmatrix}
\end{align*}
where $W_{11}$, shaded in red, corresponds to the maximal clique 
$ C_j(\bar Y)$ and $W_{22}$ of $\bar Y$, shaded in blue, corresponds to $ C_k(\bar Y)$.  
The first row blocks and column blocks
correspond to the internal measured nodes and their connection to the two maximal cliques. 

After removing the internal measured nodes by performing \eqref{eq:solution.3}, the resulting 
admittance matrix $\bar Y'$ is
\begin{align*}
\bar Y' &\ := \ \bar Y_{22} - \diag\left( \left(\textbf 1 \otimes \textbf I_3 \right)^{\sf T} \bar Y_{22} \right)
	\ =: \ \begin{bmatrix} \bar Y^j & 0 \\ 0 & \bar Y^k \end{bmatrix}
\end{align*}
where $\bar Y^j$ and $\bar Y^k$ are the admittance matrices of the maximal cliques $ C_j(\bar Y')$ 
and $C_k(\bar Y')$ respectively, given by
\begin{align*}
	\bar Y^j & \ := \ W_{11} \ - \ \diag\left( \left(\textbf 1 \otimes \textbf I_3 \right)^{\sf T} W_{11} \right)
\\
	\bar Y^k & \ := \ W_{22} \ - \ \diag\left( \left(\textbf 1 \otimes \textbf I_3 \right)^{\sf T} W_{22} \right)
\end{align*}
i.e., the diagonal blocks of $W_{11}$ and $W_{22}$ are modified so that $\bar Y^j$ and $\bar Y^k$
have zero row and column-block sums.
\qed
\end{example}
Two maximal cliques can also be connected through a line in $G(Y)$ or through a shared measured
node.  These cliques may appear as disconnected, adjacent or overlapping in the Kron reduction
$G(\bar Y')$ after internal measured nodes are removed, as shown in Figure \ref{fig:G(Y)andG(barY')}.
	\begin{figure}[htbp]
	\centering
	\includegraphics[width=0.9\textwidth] {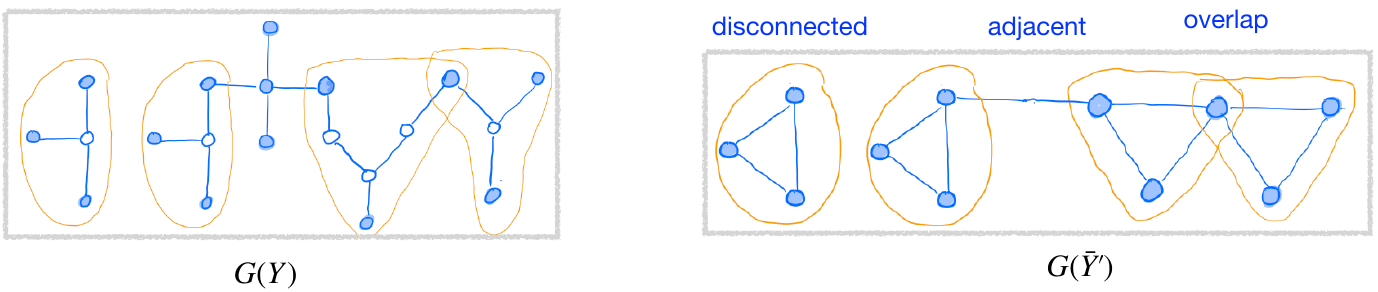} 
	\caption{Original graph $G(Y)$ and Kron reduction $G(\bar Y')$ after internal
	measured nodes are removed.  The resulting admittance matrix $\bar Y'$ takes the form
	in \eqref{fig:G(Y)andG(barY'); eq:barY'} with the 4 maximal cliques ordered left to right.}
	\label{fig:G(Y)andG(barY')}
	\end{figure}
Its admittance matrix $\bar Y'$ takes the form
\begin{align}
\bar Y' & \ = \ \left[ \begin{array} {ccccccc}
\cellcolor[rgb]{0.9,0.9,0.9}{\times} & \cellcolor[rgb]{0.9,0.9,0.9} \times & & & & &   \\ 
\cellcolor[rgb]{0.9,0.9,0.9}{\times} & \cellcolor[rgb]{0.9,0.9,0.9} \times & & & &  & \\ 
& & \cellcolor[rgb]{0.7,0.7,0.7}{\times} & \cellcolor[rgb]{0.7,0.7,0.7} \times & & & \\ 
& & \cellcolor[rgb]{0.7,0.7,0.7} \times  & \cellcolor[rgb]{0.7,0.7,0.7} \times & \times & & \\ 
& & & \times & \cellcolor[rgb]{0, 0.6, 1} \times & \cellcolor[rgb]{0, 0.6, 1} \times & \\ 
& & & & \cellcolor[rgb]{0, 0.6, 1} \times & \cellcolor[rgb]{0, 0.6, 1} \times & \cellcolor[rgb]{0, 0.6, 1} \times \\ 
& & & & & \cellcolor[rgb]{0, 0.6, 1} \times & \cellcolor[rgb]{0, 0.6, 1} \times \\ 
	\end{array} \right]
\label{fig:G(Y)andG(barY'); eq:barY'}
\end{align}

We next compute the Kron-reduced admittance matrix for each maximal clique {in isolation}.

\subsection{Step 2: Maximal-clique decomposition of $Y$}
\label{sec:algorithm; subsec:DecomposeMaxCliques}

We assume all internal measured nodes have been removed in Step 1 through \eqref{eq:solution.3}
and, instead of $\bar Y'$, we denote the admittance matrix of the resulting network by $\bar Y$ in
this and the next two subsections.  

Theorem \ref{sec:graphBarY; subsec:graphbarY; thm:G(bar Y)} in Section \ref{sec:graphBarY; subsec:graphbarY} 
shows that 
the graph $G(\bar Y)$ underlying $\bar Y$ consists of edge-disjoint maximal cliques, denoted by $\{  C_i(\Bar Y) \}$
to emphasize that they are subgraphs of $G(\bar Y)$.
As explained in its proof, the boundary measured nodes in each $ C_i(\Bar Y)$ are connected to each 
other through a tree $ T_i^\text{hid}(Y)$ of hidden nodes in the original graph $ G(Y)$.  
Every boundary measured node in $ C_i(\Bar Y)$ is connected to exactly one hidden node in 
$ T_i^\text{hid}(Y)$ for, otherwise, there exists a loop in $ G(Y)$, contradicting that $ G(Y)$ is a tree.
We will label the boundary measured nodes in $M_\text{bnd}$ so that all nodes in the same maximal clique $C_i(\bar Y)$
are labeled consecutively.  For the example in Figure \ref{fig:G(Y)andG(barY')} the admittance matrix $\bar Y$ takes the
form in \eqref{fig:G(Y)andG(barY'); eq:barY'} with the 4 maximal cliques ordered left to right
(recall that $\bar Y'$ in Figure \ref{fig:G(Y)andG(barY')} and  \eqref{fig:G(Y)andG(barY'); eq:barY'} is relabeled 
as $\bar Y$ in Sections \ref{sec:algorithm; subsec:DecomposeMaxCliques} 
to \ref{sec:algorithm; subsec:CombineMaxCliques}).

We now explain how to iteratively compute the admittance matrix of every maximal clique from the given 
Kron-reduced admittance matrix $\bar Y$.  Let $\tilde Y^0 := \bar Y$.  For iterations $l=1, 2, \dots$, we remove
a single maximal clique from $\tilde Y^{l-1}$.  This yields two graphs, the single maximal clique in isolation that is created in
iteration $l$ whose admittance matrix is denoted $\bar Y^l$ and the remaining subgraph whose admittance
matrix is denoted $\tilde Y^l$. The procedure terminates if $\tilde Y^l$ corresponds to a single maximal clique.  
This procedure will produce the set of all single maximal cliques in isolation, 
$\bar Y^1, \bar Y^2, \dots, \bar Y^{m-1}, \tilde Y^{m-1}$ if the procedure terminates at the end of iteration $m-1$.

The computation of $(\bar Y^l, \tilde Y^l)$ from $\tilde Y^{l-1}$ in each iteration $l=1, 2, \dots$, 
depends on how the maximal clique $G(\bar Y^l)$ was connected to the rest of the subgraph in
$G(\tilde Y^{l-1})$ before it is removed, and there are three cases as shown in 
Figure \ref{fig:G(Y)andG(barY')}.  To simplify description we may abuse notation and use $\bar Y^l$ to 
refer to either the graph or its admittance matrix, and the term ``maximal clique'' to refer to either the
corresponding subgraph of $G(\tilde Y^{l-1})$ before extraction or in isolation after extraction; the 
meaning should be clear from the context.

\vspace{0.1in}\noindent
\textbf{Case 1: Disconnected.} If the maximal clique $\bar Y^l$ created in iteration $l$ was
connected to the original graph $G(Y)$ through a maximal subtree of \emph{internal} measured nodes, then after 
removing the internal measured nodes in Step 1 through \eqref{eq:solution.3}, the maximal clique is disconnected
in the resulting Kron-reduced graph $G(\tilde Y^{l-1})$ (e.g., the leftmost clique in Figure \ref{fig:G(Y)andG(barY')} 
corresponding to the top-left submatrix in \eqref{fig:G(Y)andG(barY'); eq:barY'}).  Hence $\tilde Y^{l-1}$ is block 
diagonal of the form:
\begin{align}
\tilde Y^{l-1} &\ =: \ \begin{bmatrix} \bar Y^l & 0 \\ 0 & \tilde Y^l \end{bmatrix}
\label{eq:barY.case1}
\end{align}
The two graphs created by removing the single maximal clique from $G(\tilde Y^{l-1})$ therefore have as their
admittance matrices the diagonal blocks $(\bar Y^l, \tilde Y^l)$.

\vspace{0.1in}\noindent
\textbf{Case 2: Adjacent through a line in $Y_{11,22}$.} The maximal clique $\bar Y^l$ is
connected by a line in graph $G(\tilde Y^{l-1})$ between two boundary measured nodes
(e.g., the second clique from the left in Figure \ref{fig:G(Y)andG(barY')} corresponding
to the second diagonal block in \eqref{fig:G(Y)andG(barY'); eq:barY'}). Hence $\tilde Y^{l-1}$ 
takes the form:
\begin{align}
\tilde Y^{l-1} & \ = \ \left[ \begin{array} {cccc}
\cellcolor[rgb]{1,0.5,0.3}{\times} & \cellcolor[rgb]{1,0.5,0.3} \times & &  \\ 
\cellcolor[rgb]{1,0.5,0.3} \times  & \cellcolor[rgb]{1,0.5,0.3} \times & \times &  \\ 
& \times & \cellcolor[rgb]{0, 0.6, 1} \times & \cellcolor[rgb]{0, 0.6, 1} \times \\ 
& & \cellcolor[rgb]{0, 0.6, 1} \times & \cellcolor[rgb]{0, 0.6, 1} \times \\ 
	\end{array} \right]
\ =: \ \begin{bmatrix} W_{11} & W_{12} \\ W_{12}^{\sf T} & W_{22} \end{bmatrix}
\label{eq:barY.case2}
\end{align}
where the submatrix $W_{11}$ shaded in red corresponds to the maximal clique $G(\bar Y^l)$
and the submatrix $W_{22}$ shaded in blue corresponds to the remaining graph $G(\tilde Y^l)$.
Their admittance matrices are given by 
\begin{subequations}
\begin{align}
	\bar Y^l & \ := \ W_{11} \ - \ \diag\left( \left(\textbf 1 \otimes \textbf I_3 \right)^{\sf T} W_{11} \right)
\\
	\tilde Y^l & \ := \ W_{22} \ - \ \diag\left( \left(\textbf 1 \otimes \textbf I_3 \right)^{\sf T} W_{22} \right)
\end{align}
i.e., the diagonal blocks $W_{11}$ and $W_{22}$ in \eqref{eq:barY.case2}
are modified so that the matrices $(\bar Y^l, \tilde Y^l)$ have zero row and column-block sums.
\label{eq:barYjbarYk}
\end{subequations}

\vspace{0.1in}\noindent
\textbf{Case 3: Overlap through a shared node.} The maximal clique $G(\bar Y^l)$ is connected 
to the graph $G(\tilde Y^l)$ in $G(\tilde Y^{l-1})$ through a single shared boundary measured node
(e.g., the third clique from the left in Figure \ref{fig:G(Y)andG(barY')} corresponding
to the third diagonal block in \eqref{fig:G(Y)andG(barY'); eq:barY'}). Hence $\tilde Y^{l-1}$ 
takes the form:
\begin{align}
\tilde Y^{l-1} & \ = \ \left[ \begin{array} {cccc}
\cellcolor[rgb]{1,0.5,0.3}{\times} & \cellcolor[rgb]{1,0.5,0.3} \times &  \cellcolor[rgb]{1,0.5,0.3} \times  &  \\ 
\cellcolor[rgb]{1,0.5,0.3} \times  & \cellcolor[rgb]{1,0.5,0.3} \times &  \cellcolor[rgb]{1,0.5,0.3} \times  &  \\ 
 \cellcolor[rgb]{1,0.5,0.3} \times &  \cellcolor[rgb]{1,0.5,0.3} \times  & \cellcolor[rgb]{0.5, 0.6, 0.6} \times & \cellcolor[rgb]{0, 0.6, 1} \times \\ 
& & \cellcolor[rgb]{0, 0.6, 1} \times & \cellcolor[rgb]{0, 0.6, 1} \times \\ 
	\end{array} \right]
\label{eq:barY.case3}
\end{align}
where the submatrix $W_{11}$ shaded in red corresponds to the maximal clique $\bar Y^l$
and the submatrix $W_{22}$ shaded in blue corresponds to the remaining graph $(\tilde Y^l)$. Their admittance 
matrices $(\bar Y^l, \tilde Y^l)$ are given by \eqref{eq:barYjbarYk}.

We next identify each maximal clique in isolation.

\subsection{Step 3: Identification of all maximal cliques in isolation}
\label{sec:algorithm; subsec:IdentifyMaxClique}

Focus now on a single maximal clique $\bar Y^l$ \emph{in isolation} computed in Step 2.  It is the 
Kron reduced admittance matrix of an underlying tree consisting of (degree-1) boundary measured nodes 
(those in $\bar Y^l$) connected through a maximal tree of hidden nodes, without internal measured nodes.  
Step 3 identifies the admittance matrix $Y^l$ of this underlying tree.  This is the core of the 
identification algorithm and will be developed in Sections  \ref{sec:IterativeKR} and \ref{sec:Assumption3}; 
see Algorithm 2 in Section \ref{sec:IDMC; subec:reverseKR}.

Algorithm 2 in Section \ref{sec:IDMC; subec:reverseKR} is repeated to identify the admittance 
matrix $Y^l$ of every maximal clique $\bar Y^i$ in isolation.

\subsection{Step 4: Combining maximal cliques}
\label{sec:algorithm; subsec:CombineMaxCliques}

Denote by $\{Y^1, Y^2, \dots, Y^{m}\}$ the set of admittance matrices of $m$ maximal cliques in isolation
computed in Step 3.  We now explain how to iteratively combine them into the admittance matrix $Y$
of the overall network (without internal measured nodes).

As in Section \ref{sec:algorithm; subsec:DecomposeMaxCliques} we will iteratively add each maximal
clique to create $Y$.   Let $\tilde Y^1 := Y^1$.  For iterations $l = 2, 3, \dots, m-1$, the maximal clique
$G(Y^l)$ is added to $G(\tilde Y^{l-1})$ to produce the combined graph $G(\tilde Y^l)$ whose admittance matrix
is denoted $\tilde Y^l$.  This procedure terminates at the end of iteration $l=m$ to produce the admittance matrix 
$Y := \tilde Y^{m}$ of the underlying tree.  The computation of $\tilde Y^l$ in each iteration $l$ depends on how
the maximal clique $G(\bar Y^l)$ was connected in $G(\bar Y)$ and there are three cases, as explained in 
Section \ref{sec:algorithm; subsec:DecomposeMaxCliques}.  To simplify description, we may sometimes 
refer to the tree $G(Y^l)$ or its admittance matrix $Y^l$ as a maximal clique when we mean $G(\bar Y^l)$ 
and $\bar Y^l$ respectively.  

Consider iteration $l = 2, 3, \dots, m$, where the maximal clique $Y^l$ and $\tilde Y^{l-1}$ are combined to
produce $\tilde Y^l$:

\vspace{0.1in}\noindent
\textbf{Case 1: Disconnected.}  
Recall that all internal measured nodes have been removed in Step 1.
As explained in Section \ref{sec:algorithm; subsec:DecomposeMaxCliques}, the maximal clique $Y^l$ will
be disconnected from $\tilde Y^{l-1}$ if $G(Y^l)$ is connected in the
original graph $G(Y)$ by a maximal subtree of {internal} measured nodes.
Suppose $Y^l$ computed in Step 3 of Section \ref{sec:algorithm; subsec:IdentifyMaxClique} and $\tilde Y^{l-1}$
take the form:
\begin{align}
	Y^l & 
    \ =: \ \left[ \begin{array}{c | c c}
      \textcolor[rgb]{1,0,0}{ Y^l_{11, 22} } & Y^l_{12, 21} & 0 
    \\ \hline
      & Y^l_{22, 11} & Y^l_{22, 12} 
    \\
      & & Y^l_{22, 22} 
    \end{array} \right], &
	\tilde Y^{l-1} & 
    \ =: \ \left[ \begin{array}{c | c c}
      \textcolor[rgb]{0,0,1}{ Y^{l-1}_{11, 22} } & Y^{l-1}_{12, 21} & 0 
    \\ \hline
      & Y^{l-1}_{22, 11} & Y^{l-1}_{22, 12} 
    \\
      & & Y^{l-1}_{22, 22} 
    \end{array} \right]
\label{eq:YjYk}
\end{align}
where \textcolor[rgb]{1,0,0}{ $Y^l_{11, 22}$ }, \textcolor[rgb]{0,0,1}{ $Y^{l-1}_{11, 22}$ } correspond to 
boundary measured nodes and $Y^l_{22}, Y^{l-1}_{22}$ correspond to the hidden nodes that connect 
them in $G(Y)$.  
Then the admittance matrix $\tilde Y^l$ of the combined graph is
	\begin{align*}
    \tilde Y^l & \ := \ \left[ \begin{array}{c c | c c c c}
      {\textcolor[rgb]{1,0,0}{ Y^l_{11, 22} } } & 0 & Y^l_{12, 21} & 0 & 0 & 0
    \\
    0 & \textcolor[rgb]{0,0,1}{ Y^{l-1}_{11, 22} } & 0 & Y^{l-1}_{12, 21} & 0 & 0
    \\ \hline
      & & Y^l_{22, 11} & 0 & Y^l_{22, 12} & 0  \\
      & & 0 & Y^{l-1}_{22, 11}  & 0 & Y^{l-1}_{22, 12}  \\
      & & & & Y^l_{22, 22} & 0 \\
      & & & & 0 & Y^{l-1}_{22, 22}
    \end{array} \right]
\ =: \ \begin{bmatrix} Y_{11} & Y_{12} \\ Y_{12}^{\sf T} & Y_{22}  \end{bmatrix}
	\end{align*} 
where the submatrix $\tilde Y^l_{11}$ of $\tilde Y^l$ has the same structure as that in \eqref{eq:barY.case1}.

\vspace{0.1in}\noindent
\textbf{Case 2: Adjacent through a line in $Y_{11,22}$.}  In this case, the maximal clique $Y^l$ is connected
to $\tilde Y^{l-1}$ by a line in $Y_{11, 22}$ between two boundary measured nodes, as explained in 
Section \ref{sec:algorithm; subsec:DecomposeMaxCliques}.
Suppose $Y^l$ computed in Step 3 of Section \ref{sec:algorithm; subsec:IdentifyMaxClique} and $\tilde Y^{l-1}$
are given by \eqref{eq:YjYk}.
Then the admittance matrix $\tilde Y^l$ of the combined graph is
	\begin{align*}
\tilde Y^l & \ := \ \left[ \begin{array}{c c | c c c c}
    {\textcolor[rgb]{1,0,0}{ \hat Y^l_{11, 22} } } & W_{12} & Y^l_{12, 21} & 0 & 0 & 0
    \\
    W_{12}^{\sf T}  & \textcolor[rgb]{0,0,1}{ \hat Y^{l-1}_{11, 22} } & 0 & Y^{l-1}_{12, 21} & 0 & 0
    \\ \hline
    & & Y^l_{22, 11} & 0 & Y^l_{22, 12} & 0  \\
    & & 0 & Y^{l-1}_{22, 11}  & 0 & Y^{l-1}_{22, 12}  \\
    & & & & Y^l_{22, 22} & 0 \\
    & & & & 0 & Y^{l-1}_{22, 22}
    \end{array} \right]
\ =: \ \begin{bmatrix} Y_{11} & Y_{12} \\ Y_{12}^{\sf T} & Y_{22}  \end{bmatrix}
	\end{align*} 
where the submatrix $\tilde Y^l_{11}$ of $\tilde Y^l$ has the same structure as that in \eqref{eq:barY.case2}.
The submatrices \textcolor[rgb]{1,0,0}{ $\hat Y^l_{11, 22}$ } and \textcolor[rgb]{0,0,1}{ $\hat Y^{l-1}_{11, 22}$ } 
are obtain from \textcolor[rgb]{1,0,0}{ $Y^l_{11, 22}$ } and \textcolor[rgb]{0,0,1}{ $Y^{l-1}_{11, 22}$ } 
and $W_{12}$ (given in \eqref{eq:barY.case2}) by modifying the diagonal entries of 
\textcolor[rgb]{1,0,0}{ $Y^l_{11, 22}$ } and \textcolor[rgb]{0,0,1}{ $Y^{l-1}_{11, 22}$ } so that $\tilde Y^l$ has zero 
row and column-block sums, i.e., 
\begin{align*}
{\textcolor[rgb]{1,0,0}{ \hat Y^l_{11, 22} }} & \ := \
{\textcolor[rgb]{1,0,0}{ Y^l_{11, 22} } }  \, - \, \diag\left( W_{12} \left(\textbf 1 \otimes \textbf I_3 \right) \right),
&
\textcolor[rgb]{0,0,1}{ \hat Y^{l-1}_{11, 22} } & \ := \ 
\textcolor[rgb]{0,0,1}{ Y^{l-1}_{11, 22} }  \, - \, \diag\left( \left(\textbf 1 \otimes \textbf I_3 \right)^{\sf T} W_{12} \right)
\end{align*}

\vspace{0.1in}\noindent
\textbf{Case 3: Overlap through a shared node.}  
In this case, the maximal clique $Y^l$ is connected to $\tilde Y^{l-1}$ through a shared boundary measured node, 
as explained in Section \ref{sec:algorithm; subsec:DecomposeMaxCliques}.
Suppose without loss of generality that the last row and column blocks of $Y^l$ and the first row and column 
blocks of $\tilde Y^{l-1}$ correspond to the shared node.
Suppose $Y^l$ computed in Step 3 of Section \ref{sec:algorithm; subsec:IdentifyMaxClique} and $\tilde Y^{l-1}$
take the form:
\begin{align*}
	Y^l & 
    \ =: \ \left[ \begin{array}{c c | c c}
      \textcolor[rgb]{1,0,0}{ \hat Y^l_{11, 22} } &  \textcolor[rgb]{1,0,0}{ \hat r^{l \sf T}_{11, 22} } 
      &  Y^l_{12, 21} & 0 \\
      \textcolor[rgb]{1,0,0}{ \hat r^{l}_{11, 22} } & \textcolor[rgb]{1,0,0}{ d^l_{11,22} }
      & r^l_{12, 21} & 0
    \\ \hline
      & & Y^l_{22, 11} & Y^l_{22, 12} 
    \\
      & & & Y^l_{22, 22} 
    \end{array} \right]
\\
	\tilde Y^{l-1} & 
    \ =: \ \left[ \begin{array}{c c | c c}
    \textcolor[rgb]{0,0,1}{ d^{l-1}_{11,22} } & \textcolor[rgb]{0,0, 1}{ \hat r^{l-1}_{11, 22} } &  r^{l-1}_{12, 21} & 0 \\
     \textcolor[rgb]{0,0,1}{ \hat r^{l-1 \sf T}_{11, 22} } & \textcolor[rgb]{0,0,1}{ \hat Y^{l-1}_{11, 22} } &Y^{l-1}_{12, 21} & 0
    \\ \hline
     & & Y^{l-1}_{22, 11} & Y^{l-1}_{22, 12} 
    \\
     & & & Y^{l-1}_{22, 22} 
    \end{array} \right]
\end{align*}
where $(\hat r^l_{11,22}, d^l_{11,22}, r^l_{12, 21}, 0)$ and 
$(\hat r^{l-1}_{11,22}, d^{l-1}_{11,22}, r^{l-1}_{12, 21}, 0)$ 
are the row blocks corresponding to the shared node (see an example in \eqref{eq:barY.case3} in Kron reduced form).
%
To construct the admittance matrix $\tilde Y^l$ of the combined network, the row blocks in $Y^l$ and $\tilde Y^{l-1}$ 
corresponding to this shared node is combined into a single row block:
\begin{align*}
\left[ \begin{array}{ c c c | c c c c }
\textcolor[rgb]{1,0,0}{ \hat r^{l}_{11, 22} } & 
\textcolor[rgb]{1,0,0}{ d^l_{11,22} } +  \textcolor[rgb]{0,0,1}{ d^{l-1}_{11,22} } &
\textcolor[rgb]{0,0,1}{ \hat r^{l-1}_{11, 22} } & r^l_{12, 21} & r^{l-1}_{12, 21} & 0 & 0
\end{array} \right]
\end{align*}
The admittance matrix is then
	\begin{align*}
\tilde Y^l & \ := \ \left[ \begin{array}{c c c | c c c c}
    \textcolor[rgb]{1,0,0}{ \hat Y^l_{11, 22} } &  \textcolor[rgb]{1,0,0}{ \hat r^{l \sf T}_{11, 22} } 
 & 0 & Y^l_{12, 21} & 0 & 0 & 0
    \\
\textcolor[rgb]{1,0,0}{ \hat r^{l}_{11, 22} } & 
\textcolor[rgb]{1,0,0}{ d^l_{11,22} } +  \textcolor[rgb]{0,0,1}{ d^{l-1}_{11,22} } &
\textcolor[rgb]{0,0,1}{ \hat r^{l-1}_{11, 22} } & r^l_{12, 21} &  r^{l-1}_{12, 21} & 0 & 0
\\
    0 & \textcolor[rgb]{0,0,1}{ \hat r^{l-1 \sf T}_{11,22} } &  \textcolor[rgb]{0,0,1}{ \hat Y^{l-1}_{11, 22} } &  
    0 & Y^{l-1}_{12, 21} & 0 & 0
    \\ \hline
    & & & Y^l_{22, 11} & 0 & Y^l_{22, 12} & 0  \\
    & & & 0 & Y^{l-1}_{22, 11}  & 0 & Y^{l-1}_{22, 12}  \\
    & & & & & Y^l_{22, 22} & 0 \\
    & & & & & 0 & Y^{l-1}_{22, 22}
    \end{array} \right]
\ =: \ \begin{bmatrix} Y_{11} & Y_{12} \\ Y_{12}^{\sf T} & Y_{22}  \end{bmatrix}
	\end{align*} 
where the submatrix $\tilde Y^l_{11}$ of $\tilde Y^l$ has the same structure as that in \eqref{eq:barY.case3}.

\subsection{Step 5: Putting back internal measured nodes}
\label{sec:algorithm; subsec:ReverseStep1}

Recall that, at the end of Step 1, we have removed all internal measured nodes from the 
Kron-reduced network, resulting in the admittance matrix $\bar Y'$ computed from \eqref{eq:solution.3}.
Step 4 therefore produces the admittance matrix $Y'$ from $\bar Y'$, corresponding to the original 
network with all internal measured nodes and all lines incident on them removed.  We now reverse
\eqref{eq:solution.3} to compute the admittance matrix $Y$ of the original network from $Y'$.

Suppose $Y'$ produced in Step 4 takes the form
\begin{align*}
	Y' &   \ =: \ \left[ \begin{array}{c | c c}
      \textcolor[rgb]{1,0,0}{ Y_{11, 22}' } & Y_{12, 21}' & 0 
    \\ \hline
      & Y_{22, 11}' & Y_{22, 12}'
    \\
      & & Y_{22, 22}'
    \end{array} \right] 
\end{align*}
where \textcolor[rgb]{1,0,0}{ $Y_{11, 22}'$ } corresponds to the 
boundary measured nodes and $Y_{22, 11}'$ and $Y_{22,22}'$ correspond to the hidden nodes
that connect them in $G(Y')$.  Then, given $Y_{11,11}, Y_{11,12}$ from Step 1,
the admittance matrix $Y$ of the overall network is
\begin{align*}
	Y &   \ =: \ \left[ \begin{array}{c c | c c}
    Y_{11,11} & Y_{11, 12} & 0 & 0     \\ 
    &  \textcolor[rgb]{1,0,0}{ Y_{11, 22} } & Y_{12, 21}' & 0 
    \\ \hline
    &  & Y_{22, 11}' & Y_{22, 12}'
    \\
    &  & & Y_{22, 22}'
    \end{array} \right] 
\end{align*}
where \textcolor[rgb]{1,0,0}{ $Y_{11, 22}$} is obtained from \textcolor[rgb]{1,0,0}{ $Y_{11, 22}'$ } 
by modifying the diagonal entries of \textcolor[rgb]{1,0,0}{ $Y_{11, 22}'$ } so that $Y$ has zero row 
and column-block sums, i.e., reversing \eqref{eq:solution.3}:
\begin{align*}
\textcolor[rgb]{1,0,0}{ Y_{11, 22} } & \ := \
\textcolor[rgb]{1,0,0}{ Y_{11, 22}' } \, - \
\diag\left( \left(\textbf 1 \otimes \textbf I_3 \right)^{\sf T} Y_{11,12} \right)
\end{align*}

\subsection{Summary: Algorithm 1}
\label{sec:algorithm; subsec:summary}

By ``identifying node $j$'', we mean identifying $y_{jk}$ in the admittance matrix $Y$
for all lines incident on node $j$.  By ``removing node $j$'', we mean removing the $j$th
row and column-blocks from $Y$ corresponding to node $j$.
\paragraph{Algorithm 1: overall identification algorithm.}
\bee
\item[] Step 1: Identify all internal measured nodes from $\bar Y$.  Obtain Kron reduction
	$\bar Y'$ by removing all internal measured nodes from $\bar Y$:
	\begin{align*}
	Y_{11, 11} & \ = \ \bar Y_{11},  \qquad  Y_{11, 12} = \bar Y_{12}
	\qquad
	\bar Y' \ = \ \bar Y_{22} - \text{diag}\left( \left(\textbf 1 \otimes \textbf I_3 \right)^{\sf T} \bar Y_{22} \right)
\end{align*}

\item[] Step 2: Decompose $G(\bar Y')$ into maximal cliques $\{C_j, j=1,\dots, m\}$ in isolation:
	\begin{align*}
	\bar Y' & \ \longrightarrow \ \{ \bar Y^j, \, j=1,\dots, m\}
	\end{align*}

\item[] Step 3: Identify each maximal clique $G(Y^j)$ in isolation from $G(\bar Y^j) = C_j$ 
	using Algorithm 2 in Section \ref{sec:IDMC; subec:reverseKR}:
	\begin{align*}
	\bar Y^j & \ \longrightarrow \ Y^j, 	\qquad j = 1, \dots, m
	\end{align*}
	
\item[] Step 4: Combine all maximal cliques $\{G(Y^j), j=1,\dots, m\}$ in isolation into the original 
	graph $G(Y')$ without internal measured nodes (reversing Step 2):
	\begin{align*}
	\{ Y^j, \, j=1,\dots, m\} & \ \longrightarrow \ Y' \ \ := \ \ 
	\left[ \begin{array}{c | c c}
      \textcolor[rgb]{1,0,0}{ Y_{11, 22}' } & Y_{12, 21}' & 0 
    \\ \hline
      & Y_{22, 11}' & Y_{22, 12}'
    \\
      & & Y_{22, 22}'
    \end{array} \right] 
	\end{align*}

\item[] Step 5: Obtain the admittance matrix $Y$ of the original graph by putting back all 
	internal measure nodes (reversing Step 1):
	\begin{align*}
	Y &   \ =: \ \left[ \begin{array}{c c | c c}
    Y_{11,11} & Y_{11, 12} & 0 & 0     \\ 
    &  \textcolor[rgb]{1,0,0}{ Y_{11, 22} } & Y_{12, 21}' & 0 
    \\ \hline
    &  & Y_{22, 11}' & Y_{22, 12}'
    \\
    &  & & Y_{22, 22}'
    \end{array} \right] & & \text{ where } &
    \textcolor[rgb]{1,0,0}{ Y_{11, 22} } & \ := \
\textcolor[rgb]{1,0,0}{ Y_{11, 22}' } \, - \
\diag\left( \left(\textbf 1 \otimes \textbf I_3 \right)^{\sf T} Y_{11,12} \right)
\end{align*}
\eee

\section{Identification of maximal clique in isolation}
\label{sec:IterativeKR}

In this section we devise a method to identify a single maximal clique in isolation from its
Kron-reduced admittance matrix computed by Step 2 in 
Section \ref{sec:algorithm; subsec:DecomposeMaxCliques}.
It is a critical step (Step 3 in Section \ref{sec:algorithm; subsec:IdentifyMaxClique}) in the 
overall network identification algorithm.

In Section \ref{sec:IDMC; subsec:BasicIdea} we define the network we
try to identify in this section and describe an iterative procedure that computes its 
Kron reduction
by reducing one hidden node in each step, starting from a tree of measured and hidden nodes and 
terminating in a single maximal clique of only measured nodes.  The key is to track a certain 
permuted admittance matrix $\hat A^l$ of the graph in each step as well as its principal submatrix 
$C^l$ that represents the clique subgraph.  
In Section \ref{sec:IDMC; subec:forwardKR} we derive an invariant structure of this sequence of
matrices $(\hat A^l, C^l)$ as the original graph is being iteratively Kron reduced.  
In Section \ref{sec:IDMC; subec:reverseKR} we show how to reverse each iteration using this invariant 
structure and present the identification algorithm for a single maximal clique in isolation.

\subsection{Basic idea: reversible one-step Kron reduction}
\label{sec:IDMC; subsec:BasicIdea}

\subsubsection{Single maximal clique}
\label{sec:IterativeKR; subsec:1maxclique}

Consider a connected three-phase radial network consisting of $M$ degree-1 (boundary) measured 
nodes connected by $H$ non-leaf (boundary or internal) hidden nodes each with 
degree at least 2 (or at least 3 under Assumption \ref{Assumption:HiddenNodes}). Let $H_b\leq H$
denote the number of boundary hidden nodes.  There are no internal 
measured nodes in this network.  Let $Y$ denote its admittance matrix and $\bar Y$ its Kron reduction
where all the hidden nodes have been Kron reduced.  
Since the Kron reduced network $G(\bar Y)$ is a clique where every measured node is adjacent to every
other measured node, $\bar Y$ is an admittance matrix whose $3\times 3$ entry blocks are all nonzero.
We will often use ``maximal clique'' to refer both to the tree underlying $Y$ as well as the clique underlying 
its Kron reduction $\bar Y$; the meaning should be clear from the context.

Since there are no internal measured nodes the admittance matrix $Y$ in \eqref{eq:Y.2} of the maximal clique 
reduces to the following form
\begin{subequations}
\begin{align}
\label{sec:IterativeKR; subsec:reverse; eq:Y.3a}
    Y & \ =: \ \left[ \begin{array}{c | c}
    Y_{11} & Y_{12} \\ \hline Y_{21} & Y_{22}
    \end{array} \right]   
    \ =: \ \left[ \begin{array}{c | c c}
     Y_{11, 22} & Y_{12, 21} & 0 
    \\ \hline
     & Y_{22, 11} & Y_{22, 12} 
    \\
     & & Y_{22, 22} 
    \end{array} \right]
\end{align}
The given Kron reduction is $\bar Y =: Y/Y_{22}$.
The tree underlying $Y$ in \eqref{sec:IterativeKR; subsec:reverse; eq:Y.3a} is illustrated in 
Figure \ref{fig:IterativeKronReduction1} below.  
Each boundary measured node $i$ has nonzero admittance 
submatrix $y_{ih(i)}\in\mathbb C^{3\times 3}$ for exactly one hidden node $h(i)$ in the tree. Otherwise if $i$ is adjacent
to two hidden nodes, there is a loop in $G$.  Since every boundary measured node is adjacent to a hidden node, 
no boundary measured nodes can be adjacent to each other in $G$; otherwise there is a loop in $G$.  
Therefore $Y_{11,22}$ and $Y_{12,21}$ are of the form
\begin{align}
Y_{11,22} & \ = \ \diag\begin{bmatrix} y_{1h(1)} \\ \vdots \\ y_{Mh(M)}    \end{bmatrix}, 	&
Y_{12, 21} & \ = \ \begin{bmatrix} - e_{h(1)}^{\sf T}\otimes y_{1h(1)} \\ \vdots \\  - e_{h(M)}^{\sf T}\otimes y_{Mh(M)} \end{bmatrix}
\label{sec:IterativeKR; subsec:reverse; eq:Y.3b}
\end{align}
where $e_i\in\{0, 1\}^{H_b}$ is the unit vector with a single 1 in the $i$th entry and 0 elsewhere, and $y_{ij}\in\mathbb C^{3\times 3}$ is the three-phase series admittance of line $(i,j)$.  
\label{sec:IterativeKR; subsec:reverse; eq:Y.3}
\end{subequations}
Here $H_b\leq H$ is the number of boundary hidden nodes.

\begin{example}
\label{sec:IterativeKR; subsec:reverse; eg:Y}
Without loss of generality let measured nodes $1, \dots, m_1$ in $G(A^0)$ be adjacent 
to a common hidden nodes $M+1 = h(1) = \dots = h(m_1)$, measured nodes $m_1+1, \dots, m_2$ 
to hidden node $M+2 = h(m_1+1) = \dots = h(m_2)$, 
$...$, measured nodes $m_{H_b-1}+1, \dots, m_{H_b} := M$ to hidden node 
$M+H_b = h(m_{H_b-1}+1) = \dots = h(M)$.
Collect all $3\times 3$ series admittance matrices adjacent to the same hidden nodes 
in submatrices 
\begin{align*}
y_1 & \ := \ (y_{1h(1)}, \dots, y_{m_1 h(m_1)})	\\
y_2 & \ := \ (y_{(m_1+1)h(m_1+1)}, \dots, y_{m_2 h(m_2)})	\\
& \ \  \ \vdots	\\
y_{H_b} & \ := \ (y_{(m_{H_b-1}+1)h(m_{H_b-1}+1)}, \dots, y_{m_{H_b} h(m_{H_b})} = y_{Mh(M)})
\end{align*}
i.e., $y_1\in\mathbb C^{3m_1\times 3}$ denotes the admittance matrix of lines connecting 
the first $m_1$ measured nodes
to the first hidden node $M+1$, and for $j > 1$, $y_j \in\mathbb C^{3(m_j-m_{j-1})\times 3}$ 
denotes those connecting 
$m_j-m_{j-1}$ measured nodes to the hidden node $M+j$.  
Then $Y$ takes the form \eqref{sec:IterativeKR; subsec:reverse; eq:Y.3a} with
\begin{align*}
Y_{11,22} & \ = \ \begin{bmatrix} \diag\left( y_1\right) & 0 & \hdots & 0 \\ 0 & \diag\left( y_2\right) & \hdots & 0 \\ 
	\vdots & \vdots & \ddots & \vdots \\  0 & 0 & \hdots & \diag\left( y_{H_b} \right) \end{bmatrix}, & 
Y_{12,21} & \ = \ \begin{bmatrix} - y_1 & 0 & \hdots & 0 \\ 0 & - y_2 & \hdots & 0 \\ \vdots & \vdots & \ddots & \vdots \\
	0 & 0 & \hdots & - y_{H_b} 	\end{bmatrix}
\end{align*}

Recall the definition of $Z_{22}$ in \eqref{sec:algorithm; eq:Step1.1a}:
\begin{align}
Z_{22} & \ := \ \left( Y_{22}\right)^{-1} \ =: \ 
\begin{bmatrix}
X_{22, 11} & X_{22, 12} \\ X_{22, 21} & X_{22, 22}
\end{bmatrix}
\label{sec:IterativeKR; subsec:reverse; eg:Y; eq:Z_22}
\end{align}
Lemma \ref{lemma:InvertibleSubmatrix} implies that $Y_{22}$ and $Z_{22}$ are symmetric matrices.
Substituting into \eqref{sec:IterativeKR; subsec:reverse; eq:Y.3a}, we can write 
$Y/Y_{22} = Y_{11} - Y_{12} Z_{22} Y_{21}$ in terms of $X_{22,11}$: 
\begin{align*}
\bar Y & \ := \ Y/Y_{22} \ = \ Y_{11, 22}  \, - \, Y_{12, 21} X_{22, 11} Y_{21,12} 
\end{align*}
with $Y_{21,12} = \left( Y_{12, 21} \right)^{\sf T}$.
From \eqref{sec:IterativeKR; subsec:reverse; eq:Y.3a}\eqref{sec:IterativeKR; subsec:reverse; eg:Y; eq:Z_22}
and \eqref{eq:A-1.b} in Section \ref{sec:Assumption3; subsec:SchurComplement}, 
$X_{22,11} = \left( Y_{22}/Y_{22,22} \right)^{-1}$ which exists under Assumption \ref{Assumption:y_jk} 
according to Lemma \ref{lemma:InvertibleSubmatrix}.  
Partition the $3H_b \times 3H_b$ matrix $X_{22,11}$ into $H_b^2$ blocks each of $3\times 3$:
\begin{align*}
X_{22, 11} & \ =: \ \begin{bmatrix}
    \beta_{11} & \beta_{12} & \cdots & \beta_{1H_b} \\  \beta_{21} & \beta_{22} & \cdots & \beta_{2H_b} \\     
    \vdots & \vdots & \vdots & \vdots    \\  \beta_{H_b 1} & \beta_{H_b 2} & \cdots & \beta_{H_b H_b}
    \end{bmatrix}
\end{align*}
where $\beta_{kl} = \beta_{lk}^{\sf T} \in\mathbb C^{3\times 3}$.  
Hence
\begin{align*}
Y_{12, 21} X_{22, 11} Y_{21,12} & \ = \ \begin{bmatrix}
     y_1 \beta_{11}  y_1^{\sf T} &  y_1 \beta_{12}  y_2^{\sf T} & \cdots &  y_1 \beta_{1H_b}  y_{H_b}^{\sf T} \\     
     y_2 \beta_{21}  y_1^{\sf T} &  y_2 \beta_{22}  y_2^{\sf T} & \cdots &  y_2 \beta_{1H_b}  y_{H_b}^{\sf T} \\     
    \vdots & \vdots & \vdots & \vdots    \\  
 y_{H_b} \beta_{H_b 1}  y_1^{\sf T} &  y_{H_b} \beta_{H_b 2}  y_2^{\sf T} & \cdots &  y_{H_b} \beta_{H_b H_b}  y_{H_b}^{\sf T}
    \end{bmatrix}
\end{align*}
Therefore the Kron-reduced admittance matrix 
$\bar Y := Y_{11, 22}  \, - \, Y_{12, 21} X_{22, 11} Y_{21,12}$
becomes ($\bar Y$ is symmetric)
\begin{align*}
\bar Y & \ = \ \begin{bmatrix} \diag\left( y_1\right) & 0 & \hdots & 0 \\ & \diag\left( y_2\right) & \hdots & 0 \\ 
	 &  & \ddots & \vdots \\   &  &  & \diag\left( y_{H_b} \right) \end{bmatrix} \ - \
	\begin{bmatrix}
     y_1 \beta_{11}  y_1^{\sf T} &  y_1 \beta_{12}  y_2^{\sf T} & \cdots &  y_1 \beta_{1H_b}  y_{H_b}^{\sf T} \\     
     &  y_2 \beta_{22}  y_2^{\sf T} & \cdots &  y_2 \beta_{1H_b}  y_{H_b}^{\sf T} \\     
     &  & \ddots & \vdots    \\  &  &  &  y_{H_b} \beta_{H_b H_b}  y_{H_b}^{\sf T}
    \end{bmatrix}
\end{align*}
\qed
\end{example}

In this section we design an algorithm that uses the structure in \eqref{sec:IterativeKR; subsec:reverse; eq:Y.3}
to recover $Y$ for a single maximal clique by iteratively reversing Kron reduction, starting from $\bar Y$.
We start by decomposing the forward Kron reduction into a sequence of iterations that maintain an invariant 
structure.

\subsubsection{Iterative Kron reduction}
\label{sec:IterativeKR; subsec:1maxclique; subsubsec:IKR}

It is more convenient to describe iterative Kron reduction in terms of an arbitrary $3n\times 3n$ complex matrix $A^0$
on a graph $ G^0 := ( N^0,  E^0)$ where its $3\times 3$ $(i,j)$th blocks $A^0[i,j]$ are given by:
\begin{align*}
A^0[i,j] & \ = \ \left\{ \begin{array}{lcl}
    - y_{jk} & & (i,j)\in  E^0 \\ 	\sum_{k: (i,k)\in E^0} y_{ik} & & i = j \\	 0 & & \text{otherwise}
    \end{array} \right.
\end{align*}
We refer to $A^0$ as the admittance matrix of the graph $G^0$, or equivalently $G^0 = G(A^0)$.  
Suppose the graph and its admittance matrix $(G^0, A^0)$ satisfy Assumption \ref{Assumption:y_jk}.
For example, if $A^0 := Y$ as in our case then $n=M+H$.

Let $A^0 =: \begin{bmatrix} A_{11} & A_{12} \\ A_{12}^{\sf T} & A_{22} \end{bmatrix}$
with a $3k\times 3k$ nonsingular submatrix $A_{22}$, $1\leq k < n$ (in our case, $k=H$).
To simplify exposition, we will refer to nodes in $A_{22}$ to be Kron reduced as ``hidden nodes'' 
and nodes in $A_{11}$ as ``measured nodes''.  
We can compute the Schur complement $A^0/A_{22}$ of $A_{22}$ of the admittance matrix $A^0$ by eliminating  
hidden nodes on the graph $ G^0$ one by one through Kron reduction.  
Following \cite{DorflerBullo2013}, we define
\begin{align}
A^1 & \ := \ A^0/A^0[n,n], & \cdots & &
A^k & \ := \ A^{k-1}/A^{k-1}[n-k+1, n-k+1] \ = \ A^0/A_{22}
\label{subsec:IterativeKR; eq:1}
\end{align}
i.e., $A^{l+1}$ is the admittance matrix for the graph after the last node
in $A^l$ has been Kron reduced, and hence $A^0/A_{22} = A^k$.  Conversely
a sequence of matrices $A^0$, $A^1, \dots, A^k$ computed according to \eqref{subsec:IterativeKR; eq:1}
defines a sequence of graphs $G^0, G^1, \dots, G^k$ with $G^l = (N^l, E^l) := G(A^l)$ defined by
($N^0 := \{1, \dots, n\}$)
\begin{align*}
N^l & \ := \ N^0 \setminus \left\{ n, n-1, \dots, n-l+1 \right\}, &
E^l & \ := \ \left\{(i,j) : A^l[i,j] \neq 0 \right\}, & l & = 0, 1, \dots, k
\end{align*}
We refer to $G^l$ as the graph underlying the matrix $A^l$.  Sometimes we refer to the graph by $A^l$
instead of $G^l := G(A^l)$ when the meaning clear should be clear from the context.

A more explicit form of the iteration \eqref{subsec:IterativeKR; eq:1} is given in terms of the computation
of each $3\times 3$ $(i,j)$th block $A^{l+1}[i,j]$: for $l=0, \dots, k-1$,
\begin{align}
A^{l+1}[i,j] & \ = \ A^l[i,j] \, - \, A^l[i,n-l] \left( A^l[n-l, n-l] \right)^{-1} A^l[j,n-l], & i, j & = 1, \dots, n-l-1
\label{lemma:Kron-reduced-y; eq:3}
\end{align}
Starting from $A^0$, \eqref{lemma:Kron-reduced-y; eq:3} iteratively computes the Kron-reduced admittance 
matrix $A/^0A_{22} = A^k$.   
The iterative computation is useful for proving properties that are preserved under Kron reduction, as we will see.
The following properties follow directly from \eqref{lemma:Kron-reduced-y; eq:3}.  
\begin{lemma}
\label{lemma:IterativeKronReduction.1}
Suppose the $3\times 3$ principal submatrices $A^l[n-l, n-l]$ in 
\eqref{subsec:IterativeKR; eq:1} are nonsingular for $l=0, 1, \dots, k-1$.  Then
\bee
\item $A^{l+1}[i,j] = A^l[i,j]$ unless both nodes $i$ and $j$ are adjacent to node $n-l$ in $A^l$, i.e., 
	$A^l[i,n-l]\neq 0\in\mathbb C^{3\times 3}$ and 	$A^l[j,n-l]\neq 0\in\mathbb C^{3\times 3}$.
\item $A^{l+1}[i,j]=0$, i.e., nodes $i$ and $j$ are not adjacent in $A^{l+1}$, if
	\bi 
	\item $i$ and $j$ are not adjacent in $A^l$; and
	\item $i$ and $j$ are not both adjacent to $n-l$.
	\ei
\eee
\end{lemma}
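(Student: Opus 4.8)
The plan is to read both claims directly off the explicit block-wise Kron reduction formula \eqref{lemma:Kron-reduced-y; eq:3}. Since the hypothesis guarantees that each $3\times 3$ pivot $A^l[n-l,n-l]$ is nonsingular for $l=0,\dots,k-1$, the inverse $\left(A^l[n-l,n-l]\right)^{-1}$ exists and the update
\[
A^{l+1}[i,j] \ = \ A^l[i,j] \, - \, A^l[i,n-l]\left(A^l[n-l,n-l]\right)^{-1} A^l[j,n-l]
\]
is well-defined for all $i,j = 1,\dots, n-l-1$. The whole argument rests on one elementary observation about the correction (second) term: it is a matrix product whose two outer factors are the $3\times 3$ blocks $A^l[i,n-l]$ and $A^l[j,n-l]$, so it equals the zero block as soon as \emph{either} outer factor is zero.

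For the first claim I would argue by contrapositive. Recall that node $i$ is adjacent to node $n-l$ in $A^l$ precisely when $A^l[i,n-l]\neq 0\in\mathbb C^{3\times 3}$. Suppose it is \emph{not} the case that both $i$ and $j$ are adjacent to $n-l$; then at least one of $A^l[i,n-l]$, $A^l[j,n-l]$ is the zero block, so by the observation above the correction term vanishes, and the update collapses to $A^{l+1}[i,j] = A^l[i,j]$. This is exactly the assertion.

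For the second claim I would combine the two hypotheses. If $i$ and $j$ are not adjacent in $A^l$, then by the definition of adjacency $A^l[i,j] = 0$. If in addition $i$ and $j$ are not both adjacent to $n-l$, then (again by the observation) the correction term vanishes. Substituting both facts into the update gives $A^{l+1}[i,j] = 0 - 0 = 0$, i.e.\ $i$ and $j$ are not adjacent in $A^{l+1}$.

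There is no genuine obstacle here: the result is an immediate structural reading of \eqref{lemma:Kron-reduced-y; eq:3}. The only point deserving a word of care is that the vanishing occurs at the level of $3\times 3$ blocks rather than scalars, so one must note that a matrix product with a zero outer factor is the zero matrix even though the inner pivot inverse is generically nonzero; this is what lets a single vanishing factor kill the entire correction term. As a secondary remark, the nonsingularity hypothesis is used only to guarantee that each pivot inverse exists so that the iteration is defined, not in establishing either conclusion.
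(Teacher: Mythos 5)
Your proof is correct and follows exactly the route the paper intends: the paper offers no separate argument, simply asserting that both properties ``follow directly from'' the block-wise update \eqref{lemma:Kron-reduced-y; eq:3}, and your observation that a zero outer factor annihilates the $3\times 3$ correction term is precisely the reading being relied upon.
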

The converses of the lemma hold under certain conditions (see Theorem \ref{lemma:Kron-reduced-y}). 
\begin{example}[Iterative Kron reduction]
\label{eg:IterativeKronReduction.1}
Computing the Kron reduction $A^0/A_{22}$ by iteratively computing $A^l$ using \eqref{lemma:Kron-reduced-y; eq:3} and the
underlying graphs $G^l$ are illustrated in Figure \ref{fig:IterativeKronReduction1}.
	\begin{figure}[htbp]
	\centering
	\includegraphics[width=0.95\textwidth] {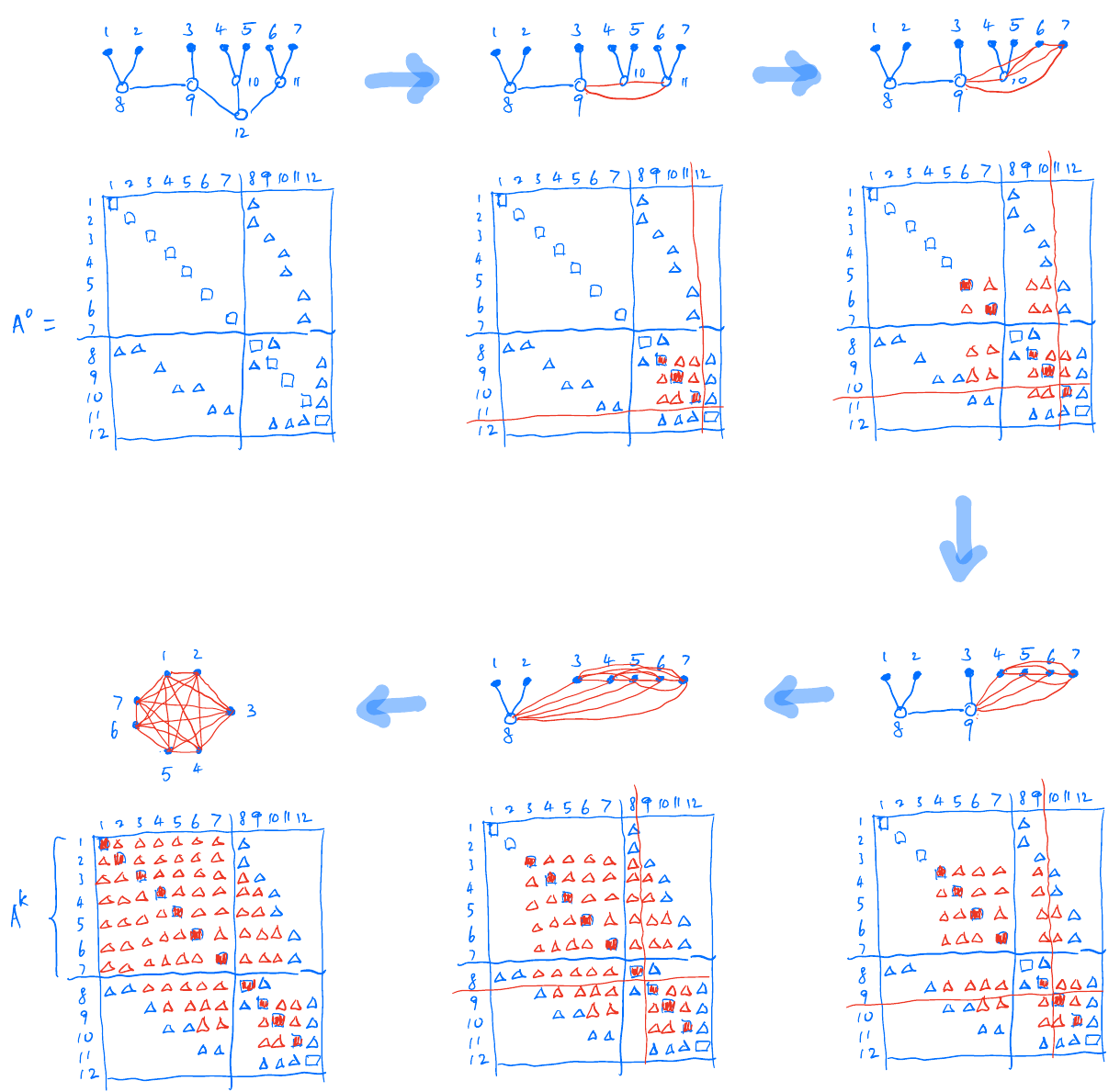}
	\caption{Example \ref{eg:IterativeKronReduction.1}: Iterative Kron reduction.  The graph $G_0$ underlying $A^0$ is a tree with shaded and unshaded nodes.
	The unshaded nodes correspond to $A_{22}$ and are to be Kron reduced.  In each iteration, only entries (corresponding
	to a maximal clique in $A^{l+1}$) marked
	by red triangles or red squares are updated; other entries remain the same as their values in $A^0$.}
	\label{fig:IterativeKronReduction1}
	\end{figure}
Suppose Lemma \ref{lemma:IterativeKronReduction.1} and its converse both hold.  Then each iteration of 
\eqref{lemma:Kron-reduced-y; eq:3} has the following properties:
\bee
\item It eliminates one unshaded node $n-l$ and connects all (and only) nodes that are adjacent to node 
	$n-l$ in graph $A^l$ into a maximal clique in $A^{l+1}$, e.g., nodes $9, 10, 11$ in $A^1$ or nodes 
	$6,7,9,10$ in $A^2$ in Figure \ref{fig:IterativeKronReduction1}. 
	The maximal clique forms a principal submatrix of $A^{l+1}$ in which all entries are nonzero.
\item Lemma \ref{lemma:IterativeKronReduction.1} implies that the entry $A^{l+1}[i,j]$ remains the same as $A^l[i,j]$ 
	if either node $i$ or node $j$ is not adjacent to node $n-l$ in $A^l$.
	 The entries $A^{l+1}[i,j]$ are updated only for nodes $i,j$ that are in the maximal clique in $A^{l+1}$, i.e., 
	 only for nodes $i,j$ that are either in $A^l$ or when both $i$ and $j$ are neighbors of node $n-l$; 
	 they are indicated by red 	 triangles or squares in Figure \ref{fig:IterativeKronReduction1}. 
\item For $l=2$, e.g., $A^2[i,j] = A^0[i,j]$ for $i,j=1, 2, 3, 4, 5, 8, 9$, except 
the diagonal entry $A^2[9,9]$ (in red), as Lemma \ref{lemma:IterativeKronReduction.1} indicates.
 \eee
 \qed
\end{example}

\subsubsection{Maximal clique $C^l$ of re-labeled matrix $\hat A^l$}
\label{sec:IterativeKR; subsec:1maxclique; subsubsec:MaxClique}

As Figure \ref{fig:IterativeKronReduction1} illustrates, given a tree $A^0$, the forward iterative 
Kron reduction grows an initial maximal clique
$A^0[n,n]\in\mathbb C^{3\times 3}$ consisting of a single node $n$ into a single maximal clique 
$A^k\in\mathbb C^{3(n-k)\times 3(n-k)}$ consisting of $n-k$ nodes, while eliminating $k$ hidden nodes
from $A^0$ in the process, one hidden node in each step.

The {basic idea} of our identification method is to derive an invariant
structure that is preserved under one-step Kron reduction and that is reversible.  
To this end, it is more convenient to focus, not on the sequence 
$A^l := A^{l-1}/A^l[n-l-1, n-l-1]$, but a permuted sequence $\hat A^l$ over which the invariant structure can 
be propagated and from which the sequence $A^l$ in \eqref{subsec:IterativeKR; eq:1} can be extracted
(see Section \ref{sec:IDMC; subec:forwardKR}).
We will use $C^l$ to denote the maximal clique in the permuted matrix 
$\hat A^l$ in iteration $l$. We will abuse notation to use the term ``maximal clique'' and the symbol $C^l$ 
to refer to either the principal submatrix of $\hat A^l$, or the subgraph of $G(\hat A^l)$, or the nodes in the subgraph
corresponding to the maximal clique; the meaning should be clear from the context.\footnote{The principal 
submatrix $C^l$ of $A^l$ differs from an admittance matrix 
in our model only in that its row and column sums are nonzero due to the connectivity between the 
maximal clique and other nodes in $G^l$.  If the diagonal entries of $C^l$ are normalized so that
$C^l$ has zero row and column sums (see \eqref{eq:solution.3}), the resulting matrix $(C^l)'$ is
the admittance matrix of the maximal clique \emph{in isolation} $G((C^l)')$.
}

The permuted sequence $\hat A^l$ results from re-labeling nodes in each step of the iterative Kron reduction.
Specifically, given the permuted matrix $\hat A^l$ in each iteration $l$, after taking the Schur complement 
$\hat A^l/\hat A^l[n-l, n-l]$ to reduce node $n-l$, we will re-label nodes so that the next permuted matrix
$\hat A^{l+1}$ has all (hidden and measured) nodes in the maximal clique indexed consecutively with 
the largest indices (as well as another convenient structure).
This corresponds to multiplying the matrix $\hat A^l/\hat A^l[n-l, n-l]$ on the left and on 
the right by appropriate permutation matrices and its transpose respectively to obtain $\hat A^{l+1}$.
We explain in Section \ref{sec:IDMC; subec:forwardKR} 
the invariant structure of the permuted sequence $(\hat A^l, C^l)$ and how to compute $A^l$ from $\hat A^l$.
Before we do that we first describe in the following example what re-labeling means and how to recover the 
Kron reduction of a matrix from the Kron reduction of the permuted version of the matrix.
\begin{example}[Re-labeling nodes]
\label{sec:IDMC; subsec:BasicIdea; eg:relabeling}
Consider an arbitrary matrix $A\in\mathbb C^{m\times n}$ and the permutation matrices $P_m\in\{0, 1\}^{m\times m}$,
$P_n\in\{0, 1\}^{n\times n}$ ($\textbf I_k$ denotes the identify matrix of size $k$):
\begin{align*}
A & \ := \ \begin{bmatrix} {\color{red}{C_{11}}} & A_{12} & {\color{red}{C_{13}}} \\
	A_{21} & A_{22} & A_{23} \\
	{\color{red}{C_{31} }} & A_{32} & {\color{red}{C_{33}}}
	\end{bmatrix}, & 
P_m & \ := \ \begin{bmatrix} 0 & \textbf I_{m_2} & 0 \\ \textbf I_{k_1} & 0 & 0 \\ 0 & 0 & \textbf I_{k_3} \end{bmatrix}, &
P_n & \ := \ \begin{bmatrix} 0 & \textbf I_{n_2} & 0 \\ \textbf I_{k_1} & 0 & 0 \\ 0 & 0 & \textbf I_{k_3} \end{bmatrix}
\end{align*}
where $C_{11}\in\mathbb C^{k_1\times k_1}$,  $A_{22}\in\mathbb C^{m_2\times n_2}$, 
$C_{33}\in\mathbb C^{k_3\times k_3}$ so that $k_1+m_2+k_3 = m$ and $k_1+n_2+k_3 = n$;
the other submatrices $A_{jk}, C_{jk}$ and the zero matrices are of appropriate sizes.  
Then 
\begin{align*}
P_m A& \ := \ \begin{bmatrix}	A_{21} & A_{22} & A_{23} \\  {\color{red}{C_{11}}} & A_{12} & {\color{red}{C_{13}}} \\
	{\color{red}{C_{31} }} & A_{32} & {\color{red}{C_{33}}}
	\end{bmatrix}, & 
\hat A \ := \ P_m A P_n^{\sf T} & \ := \ \begin{bmatrix}	
	A_{22} & A_{21} & A_{23} \\  A_{12} & {\color{red}{C_{11}}} & {\color{red}{C_{13}}} \\
	A_{32} & {\color{red}{C_{31} }} & {\color{red}{C_{33}}}
	\end{bmatrix}
\end{align*}
i.e., the permuted matrix $\hat A$ collects the submatrices $(C_{11}, C_{13}, C_{31}, C_{33})$ that
were originally spread across $A$ into a contiguous block at the lower right corner.

We may also encounter matrices whose maximal cliques are scattered into more than four pieces, as in the following
example:
\begin{align*}
A & := \begin{bmatrix} {\color{red}{C_{11}}} & A_{12} & {\color{red}{C_{13}}} & A_{14} &  {\color{red}{C_{15}}} \\
	A_{21} & A_{22} & A_{23} & A_{24} & A_{25} \\ {\color{red}{C_{31} }} & A_{32} & {\color{red}{C_{33}}} &A_{34} &  {\color{red}{C_{35}}} \\
	A_{41} & A_{42} & A_{43} & A_{44} & A_{45} \\ {\color{red}{C_{51} }} & A_{52} & {\color{red}{C_{53}}} & A_{54} &  {\color{red}{C_{55}}}
	\end{bmatrix}, & 
P_m & := \begin{bmatrix} 0 & \textbf I_{m_2} & 0 & 0 & 0 \\ 0 & 0 & 0 & \textbf I_{m_4} & 0 \\
	\textbf I_{k_1} & 0 & 0 & 0 & 0 \\ 0 & 0 & \textbf I_{k_3} & 0 & 0 \\ 0 & 0 & 0 & 0 & \textbf I_{k_3} \end{bmatrix}, &
P_n & := \begin{bmatrix} 0 & \textbf I_{n_2} & 0 & 0 & 0 \\ 0 & 0 & 0 & \textbf I_{n_4} & 0 \\
	\textbf I_{k_1} & 0 & 0 & 0 & 0 \\ 0 & 0 & \textbf I_{k_3} & 0 & 0 \\ 0 & 0 & 0 & 0 & \textbf I_{k_3} \end{bmatrix}
\end{align*}
where $A_{44}\in\mathbb C^{m_4\times n_4}$, 
$C_{55}\in\mathbb C^{k_5\times k_5}$ so that $k_1+m_2+k_3 + m_4 + k_5 = m$ and $k_1+n_2+k_3 + n_4 + k_5 = n$, giving
\begin{align*}
P_m A & := \begin{bmatrix} 
	A_{21} & A_{22} & A_{23} & A_{24} & A_{25} \\ 	A_{41} & A_{42} & A_{43} & A_{44} & A_{45} \\ 
	{\color{red}{C_{11}}} & A_{12} & {\color{red}{C_{13}}} & A_{14} & {\color{red}{C_{15}}} \\
	{\color{red}{C_{31} }} & A_{32} & {\color{red}{C_{33}}} & A_{34} & {\color{red}{C_{35}}} \\
	{\color{red}{C_{51} }} & A_{52} & {\color{red}{C_{53}}} & A_{54} & {\color{red}{C_{55}}}
	\end{bmatrix}, & 
\hat A \ := \ P_m A P_n^{\sf T} & := \begin{bmatrix} 
	A_{22} & A_{24} & A_{21} & A_{23} & A_{25} \\ 	A_{42} & A_{44} & A_{41} & A_{43} & A_{45} \\ 
	A_{12} & A_{14} &  {\color{red}{C_{11}}} & {\color{red}{C_{13}}} & {\color{red}{C_{15}}} \\
	A_{32} & A_{34} & {\color{red}{C_{31} }} & {\color{red}{C_{33}}} & {\color{red}{C_{35}}} \\
	A_{52} & A_{54} & {\color{red}{C_{51} }} & {\color{red}{C_{53}}} & {\color{red}{C_{55}}}
	\end{bmatrix}
\end{align*}
\qed
\end{example}

Starting from $A^0$, we may re-label nodes after each iteration in the forward iterative Kron reduction 
to obtain the permuted matrix $\hat A^l$ (see Section \ref{sec:IDMC; subec:forwardKR}) before taking the 
Schur complement in 
the next iteration.   The Kron reduction $A^0/A_{22}$ is generally not equal to the matrix $\hat A^k$ at the 
end of the forward iterations.  The Kron reduction $A^0/A_{22}$ can, however, be recovered from $\hat A^k$
since the re-labeling in each iteration $l$ does not re-label node $n-l$ that will be Kron reduced in that iteration.  
Consider an arbitrary square matrix $A\in\mathbb C^{3(m_1+m_2)\times 3(m_1+m_2)}$ partitioned as
$A =: \begin{bmatrix} A_{11} & A_{12} \\ A_{12}^{\sf T} & A_{22} \end{bmatrix}$ where $A_{11}$ is
$m_1\times m_1$ and $A_{22}$ is $m_2\times m_2$.  The permutation matrix that re-labels nodes 
in $A_{11}$ takes the form
\begin{subequations}
\begin{align}
P & \ = \ \begin{bmatrix} P_{m_1} & 0 \\ 0 & \textbf I_{m_2} \end{bmatrix}
\label{sec:IDMC; subsec:BasicIdea; eq:P.1a}
\end{align}
and the permuted matrix is
\begin{align*}
\hat A \ := \ P A P^{\sf T} & \ = \ \begin{bmatrix} P_{m_1} A_{11} P_{m_1}^{\sf T} & P_{m_1} A_{12} \\
	A_{21} P_{m_1}^{\sf T} & A_{22} \end{bmatrix}
\end{align*}
The Kron reduction of the permuted matrix is
\begin{align*}
\hat A/A_{22} \ := \ P A P^{\sf T}/A_{22} & \ = \ P_{m_1} \left( A_{11} - A_{12} A_{22}^{-1} A_{21} \right) P_{m_1}^{\sf T} 
\end{align*}
Since the square of any permutation matrix is an identity matrix, the Kron reduction $A/A_{22}$ of the 
original matrix can be recovered as
\begin{align}
A/A_{22} & \ = \ P_{m_1} \left( \hat A/A_{22} \right) P_{m_1}^{\sf T}
\ = \ P_{m_1} \left( PAP^{\sf T}/A_{22} \right) P_{m_1}^{\sf T}
\end{align}
where the permutation matrix $P_{m_1}$ is given in \eqref{sec:IDMC; subsec:BasicIdea; eq:P.1a}.
\label{sec:IDMC; subsec:BasicIdea; eq:P.1}
\end{subequations}

\subsection{Forward Kron reduction: growing $C^l$}
\label{sec:IDMC; subec:forwardKR}

We now design an alternative iterative Kron reduction that is equivalent to the computation in \eqref{subsec:IterativeKR; eq:1}.
The alternative procedure grows the maximal clique from $C^0 := A[n,n]$ corresponding to the single node $n$ to 
$C^k := A/A_{22}$ corresponding to the Kron reduced network after removing $k$ hidden nodes.
It has the advantage that each step is easy to reverse, as 
we will explain in Section \ref{sec:IDMC; subec:reverseKR}.

Consider the admittance matrix $A^0$ of a single maximal clique of the form in \eqref{sec:IterativeKR; subsec:reverse; eq:Y.3}.  
Then initially the maximal clique $C^0 := A^0[n,n]\in\mathbb C^{3\times 3}$ is in the lower-right corner of $A^0$.  As we take Schur complements, 
the components of $C^l$
may be spread across $A^l$ (see Figure \ref{fig:IterativeKronReduction1}).  To facilitate reversing each iteration of
Kron reduction we will work with a sequence of permuted matrices $\hat A^l$ over which a convenient structure 
can be propagated and from which the sequence $A^l := A^{l-1}/A^l[n-l-1, n-l-1]$ in \eqref{subsec:IterativeKR; eq:1} can
be extracted.  The procedure is summarized in Figure \ref{fig:ReversibleKR} and given next.
	\begin{figure}[htbp]
	\centering
	\includegraphics[width=0.75\textwidth] {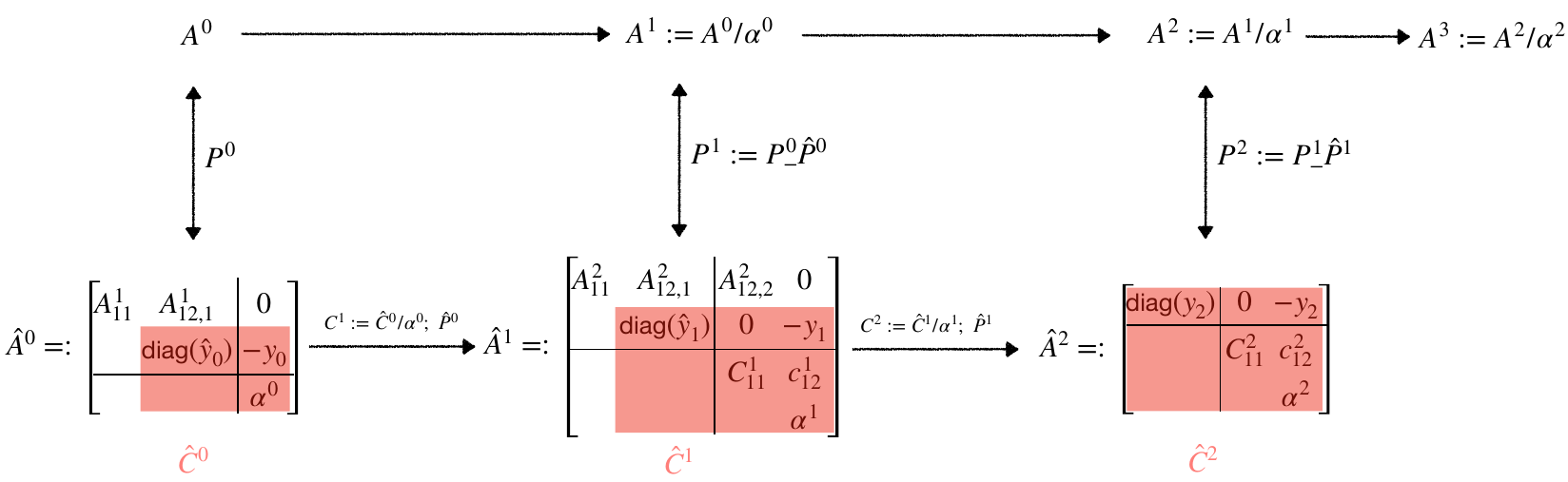}
	\caption{Reversible iterative forward Kron reduction with three hidden nodes.
	The focus of iterative Kron reduction and its reverse process will be on the
	sequence of permuted matrices and their maximal clqiues 
	$(\hat A^0, C^l), \dots, (\hat A^k, C^k)$.}
	\label{fig:ReversibleKR}
	\end{figure}

\paragraph{Initialization.}
Let the set of neighbors of the hidden node $n$ in the graph $G(A^0)$ be
\begin{align*}
N_{n}  & \ := \ \left\{j\neq n :  A^0[j, n]\neq 0\in\mathbb C^{3\times 3} \right\}
\end{align*}
This set may contain both measured and hidden nodes (e.g., if $n$ is a boundary hidden node).  
Let $n_0 := \left| N_{n} \right|$ be the number of these neighbors.
Order them in a way that maintains their relative order in $A^0$:
\begin{align*}
\underbrace{ i_1 < \dots < i_{m_0} }_{\text{measured nodes}} \ < \ \underbrace{ i_{m_0+1} < \dots < i_{n_0} }_{\text{hidden nodes}}
\end{align*}
Let $y_0\in\mathbb C^{3n_l \times 3}$ denote the admittance submatrices of lines connecting node $n$, to be Kron reduced in 
iteration $l=0$, to these neighbors: 
\begin{align*}
y_0[j,1] & \ := \  A^0[i_j,n], 	\qquad  j = 1, \dots, n_0
\end{align*}
Re-label nodes in $A^0$ except node $n$ so that $y_0$ immediately precedes $C^0$ in the lower-right corner
of the permuted matrix $\hat A^0$.  Let the permutation matrix be denoted by $P^0$, given below
(see Section \ref{sec:IterativeKR; subsec:1maxclique; subsubsec:MaxClique}).
Then $\hat A^0$ and its maximal clique $C^0$ take the
form (the structure of $\hat A^0$ will be proved in Theorem \ref{sec:IDMC; subec:forwardKR; thm:1stepKR} below):
\begin{subequations}
	\begin{align}
	\hat A^0 & \ := \ P^0 A^0 \left(\hat P^0\right)^{\sf T} \ =: \ \left[ \begin{array} {c c | c}
	A^{1}_{11} & A^{1}_{12, 1} & 0 \\ 
	& \cellcolor[rgb]{1,0.5,0.3}{ \diag(\hat y_0)} & \cellcolor[rgb]{1,0.5,0.3}{- y_0} \\ \hline
	&  \cellcolor[rgb]{1,0.5,0.3}{} &   \cellcolor[rgb]{1,0.5,0.3}{\alpha^0} \\  \end{array}  \right] \ =: \ 
	\left[ \begin{array} {c | c c} A^{1}_{11} & A^{1}_{12, 1} & 0 \\ \hline
	& \multicolumn{2}{c}{\multirow{2}{*}{{\color{red}{$\hat C^0$}}}} \\ 
	& \multicolumn{2}{c}{}
	\end{array} \right]
\label{sec:IDMC; subec:forwardKR; eq:InvStructure.0c}
\\
C^0 & \ := \ \alpha^0 \ := \ A^0[n,n], 	\qquad \ \ 
	{\color{red}{\hat C^0}} \ := \ \left[ \begin{array} {c | c}
	\diag(\hat y_0) & {- y_0} \\ \hline -y_0^{\sf T} & {\alpha^0} \end{array}  \right]
\label{sec:IDMC; subec:forwardKR; eq:InvStructure.0a} 
\\
	P^0 & \ = \ \begin{bmatrix} P^0_- & 0 \\ 0 & \textbf I_3 \end{bmatrix}
\label{sec:IDMC; subec:forwardKR; eq:InvStructure.0b}
\end{align}
where $P^0_-\in\{0, 1\}^{3(n-1)\times 3(n-1)}$ and (since $\hat A^0$ has zero row-block sums)
\begin{align}
\hat y_0 & \ := \ y_0 \, - \, \left(A^{1}_{12,1}\right)^{\sf T} \left( \textbf 1\otimes \textbf I_3 \right)
\label{sec:IDMC; subec:forwardKR; eq:InvStructure.0d}
\end{align}
From \eqref{sec:IDMC; subec:forwardKR; eq:InvStructure.0c} we have $A^0 = P^0 \hat A^0 (P^0)^{\sf T}$
since the square of a permutation matrix is an identity matrix.
\label{sec:IDMC; subec:forwardKR; eq:InvStructure.0}
\end{subequations}

Given $A^0$ the initialization step thus produces $(\hat A^0, C^0, P^0)$ where $C^0$ is the maximal clique of 
$\hat A^0$ and $A^0$ can be obtained from $\hat A^0$ through permutation matrix $P^0$.

\paragraph{Iteration.}  For $l=0, \dots, k-1$, given $(\hat A^l, C^l, P^l)$ that takes the form:
\begin{subequations}
\begin{align}
	\hat A^l & \ =: \ \left[ \begin{array} {c | c} A^l_{11} & A^l_{12} \\ \hline \left( A^l_{12} \right)^{\sf T} & C^l \end{array} \right]
	\ = \  	\left[ \begin{array} {c c | c c}
	A^{l+1}_{11} & A^{l+1}_{12, 1} & A^{l+1}_{12, 2} & 0 \\ 
	& \cellcolor[rgb]{1,0.5,0.3}{\diag(\hat y_l)} & \cellcolor[rgb]{1,0.5,0.3}{0} & \cellcolor[rgb]{1,0.5,0.3}{- y_l} \\ \hline
	&\cellcolor[rgb]{1,0.5,0.3}{} &  \cellcolor[rgb]{1,0.5,0.3}{C^l_{11}}  &  \cellcolor[rgb]{1,0.5,0.3}{c^l_{12}} \\ 
	& \cellcolor[rgb]{1,0.5,0.3}{} &  \cellcolor[rgb]{1,0.5,0.3}{} &  \cellcolor[rgb]{1,0.5,0.3}{\alpha^l } \\ 
	\end{array}  \right]
\label{sec:IDMC; subec:forwardKR; eq:InvStructure.la}
\\
C^l & \ =: \ \begin{bmatrix} C^l_{11} & c^l_{12} \\ \left( c^l_{12} \right)^{\sf T} & \alpha^l \end{bmatrix},  \qquad\ \
{\color{red}{\hat C^l}} \ := \ \begin{bmatrix} \diag(\hat y_l) & 0 & -y_l \\ & C^l_{11} & c^l_{12} \\ & & \alpha^l \end{bmatrix}
\label{sec:IDMC; subec:forwardKR; eq:InvStructure.lb}
\\
P^l & \ = \ \begin{bmatrix} P^l_- & 0 \\ 0 & \textbf I_3 \end{bmatrix}
\label{sec:IDMC; subec:forwardKR; eq:InvStructure.lc}
\end{align}
where (since $\hat A^l$ has zero row-block sums)
\begin{align}
\hat y_l & \ := \ y_l \, - \, \left(A^{l+1}_{12,1}\right)^{\sf T} \left( \textbf 1\otimes \textbf I_3 \right)
\label{sec:IDMC; subec:forwardKR; eq:InvStructure.ld}
\end{align}
and $\textbf 1$ is the vector of 1s of size $n_l$,
\label{sec:IDMC; subec:forwardKR; eq:InvStructure.l}
\end{subequations}
such that $A^l = P^l \hat A^l (P^l)^{\sf T}$, we compute $(\hat A^{l+1}, C^{l+1}, P^{l+1})$ so that:
\bi
\item The permuted matrix $\hat A^{l+1}$ has the same structure as that of $\hat A^l$. 
\item $C^{l+1}$ is the maximal clique of $\hat A^{l+1}$.
\item The Schur complement $A^{l+1} = A^{l}/\alpha^{l} = P^{l+1} \hat A^{l+1} (P^{l+1})^{\sf T}$ 
	(re-labeling nodes in $\hat A^{l+1}$).
\ei
As we will see, the one-step Kron reduction, and its reversal, boils down to the propagation of the maximal clique
(from \eqref{sec:IDMC; subec:forwardKR; eq:InvStructure.lb}):
\begin{align}
C^{l+1} & \ := \ \hat C^l/\alpha^l \ = \ \begin{bmatrix} \diag(\hat y_l) & 0 \\ 0 & C^l_{11} \end{bmatrix} 
	- \begin{bmatrix} -y_l \\ c^l_{12} \end{bmatrix} (\alpha^l)^{-1} \begin{bmatrix} -y_l^{\sf T} & \left( c^l_{12}\right)^{\sf T} \end{bmatrix} 
\label{sec:IDMC; subec:forwardKR; eq:Cl+1}
\end{align}
Specifically we compute $(\hat A^{l+1}, C^{l+1}, P^{l+1})$ and $A^{l+1}$ from $(\hat A^l, C^l, P^l)$ in 
\eqref{sec:IDMC; subec:forwardKR; eq:InvStructure.l} in four steps.  
\begin{subequations}
\bee
\item  From \eqref{sec:IDMC; subec:forwardKR; eq:InvStructure.la} and Lemma \ref{lemma:IterativeKronReduction.1},  
	the (unpermuted) Kron reduction $\hat A^l/\alpha^l$ is equal to:
\begin{align*}
	\hat A^{l}/\alpha^l & \ = \ 	
\left[ \begin{array} {c | c} \hat A^{l+1}_{11} & \hat A^{l+1}_{12} \\ \hline & \hat C^l/\alpha^l \end{array} \right] \ =: \
\left[ \begin{array} {c | c} \hat A^{l+1}_{11} & \hat A^{l+1}_{12} \\ \hline & C^{l+1} \end{array} \right]
\end{align*}
where $\hat A^{l+1}_{12} := \begin{bmatrix} A^{l+1}_{12,1} & A^{l+1}_{12, 2} \end{bmatrix}$ and
$C^{l+1}$ is given by \eqref{sec:IDMC; subec:forwardKR; eq:Cl+1}.

\item 
It can be checked that the last node $n-(l+1)$ remains the node to be Kron reduced in the next iteration
because the way the nodes in $y_l$ have been ordered.
Let the set of neighbors of node $n-(l+1)$ in the graph $G(\hat A^l/\alpha^l)$ that are not already in the maximal clique $C^{l+1}$ be
\begin{align*}
N_{n-(l+1)}  & \ := \ \left\{j \not\in C^{l+1} : (\hat A^l/\alpha^l)[j, n-(l+1)]\neq 0\in\mathbb C^{3\times 3} \right\}
\end{align*}
This set may contain both measured and hidden nodes.  Let $n_{l+1} := \left| N_{n-(l+1)} \right|$ be the number of these neighbors.
Order them in a way that maintains their relative order in $A^0$:
\begin{align*}
\underbrace{ i_1 < \dots < i_{m_{l+1}} }_{\text{measured nodes}} \ < \ \underbrace{ i_{m_{l+1}+1} < \dots < i_{n_{l+1}} }_{\text{hidden nodes}}
\end{align*}
Let $y_{l+1} \in\mathbb C^{3n_{l+1} \times 3}$ denote the admittance submatrices of lines connecting 
node $n-(l+1)$, to be Kron reduced in iteration $l+1$, to these neighbors: 
\begin{align*}
y_{l+1}[j,1] & \ := \ (\hat A^l/\alpha^l) [i_j,n-(l+1)], 	\qquad  j = 1, \dots, n_{l+1}
\end{align*}
Partition the maximal clique $C^{l+1}$ of $\hat A^l/\alpha^l$ into the last row and column block and other submatrices:
\begin{align}
C^{l+1} & \ =: \ \begin{bmatrix} C^{l+1}_{11} & c^{l+1}_{12} \\ \left( c^{l+1}_{12} \right)^{\sf T} & \alpha^{l+1} \end{bmatrix}
\label{sec:IDMC; subec:forwardKR; eq:InvStructure.l+1a}
\end{align}
where $\alpha^{l+1} := (A^l/\alpha^l)[n-(l+1), n-(l+1)]\in\mathbb C^{3\times 3}$.  

\item Re-label nodes in $\hat A^l/\alpha^l$, other than node $n-(l+1)$, so that the permuted matrix $\hat A^{l+1}$ has the
same structure as that in \eqref{sec:IDMC; subec:forwardKR; eq:InvStructure.la}.  Let the permutation matrix for re-labeling 
be $\hat P^l$ so that 
	\begin{align}
	\hat A^{l+1} & \ := \ \hat P^l \left( \hat A^l/\alpha^l \right) \left( \hat P^l \right)^{\sf T} \ =: \
	\left[ \begin{array} {c | c} A^{l+1}_{11} & A^{l+1}_{12} \\ \hline \left( A^{l+1}_{12} \right)^{\sf T} & C^{l+1} \end{array} \right]
	\ =: \  	\left[ \begin{array} {c c | c c}
	A^{l+2}_{11} & A^{l+2}_{12, 1} & A^{l+2}_{12, 2} & 0 \\ 
	& \cellcolor[rgb]{1,0.5,0.3}{\diag(\hat y_{l+1})} & \cellcolor[rgb]{1,0.5,0.3}{0} & \cellcolor[rgb]{1,0.5,0.3}{- y_{l+1}} \\ \hline
	&\cellcolor[rgb]{1,0.5,0.3}{} &  \cellcolor[rgb]{1,0.5,0.3}{C^{l+1}_{11}}  &  \cellcolor[rgb]{1,0.5,0.3}{c^{l+1}_{12}} \\ 
	& \cellcolor[rgb]{1,0.5,0.3}{} &  \cellcolor[rgb]{1,0.5,0.3}{} &  \cellcolor[rgb]{1,0.5,0.3}{\alpha^{l+1} } \\ 
	\end{array}  \right] 
\label{sec:IDMC; subec:forwardKR; eq:InvStructure.l+1b}
\end{align}
where (to maintain zero row-block sums of $\hat A^{l+1}$)
\begin{align}
\hat y_{l+1} & \ := \ y_{l+1} \, - \, \left(A^{l+2}_{12,1}\right)^{\sf T} \left( \textbf 1\otimes \textbf I_3 \right)
\label{sec:IDMC; subec:forwardKR; eq:InvStructure.l+1c}
\end{align}
and $\textbf 1$ is the vector of 1s of size $n_{l+1}$.

\item Define the permutation matrix
	\begin{align}
	P^{l+1} & \ := \ P^l_-\, \hat P^l 
	\label{sec:IDMC; subec:forwardKR; eq:InvStructure.l+1d}
	\end{align}
	where $P^l_-$ is defined in \eqref{sec:IDMC; subec:forwardKR; eq:InvStructure.lc} and $\hat P^l$ is defined
	in \eqref{sec:IDMC; subec:forwardKR; eq:InvStructure.l+1b}.  Then from 
	\eqref{sec:IDMC; subsec:BasicIdea; eq:P.1} and the fact that the square of a permutation matrix is the 
	identity matrix, the one-step Kron reduction of $A^{l+1} := A^{l}/\alpha^{l}$ can be recovered as
\begin{align}
A^{l+1} & \ = \ P^{l+1} \hat A^{l+1} (P^{l+1})^{\sf T}
\label{sec:IDMC; subec:forwardKR; eq:InvStructure.l+1e}
\end{align}
\eee
\label{sec:IDMC; subec:forwardKR; eq:InvStructure.l+1}
\end{subequations}

\paragraph{Return.}  A clique $\hat A^k = C^k$ of size $n-k$ and the Kron reduction
	$A^0/A_{22} = A^k = P^k \hat A^k (P^k)^{\sf T}$.

The important feature of this procedure is that the structure of the permuted matrix $\hat A^l$ in 
\eqref{sec:IDMC; subec:forwardKR; eq:InvStructure.la} 
is preserved under one-step Kron reduction where the maximal clique $C^l$ is a contiguous
block in the lower-right corner of $\hat A$.
This structure together with the way its maximal clique $C^l$ propagates through
\eqref{sec:IDMC; subec:forwardKR; eq:Cl+1} make it possible to reverse the Kron reduction as we explain
in Section \ref{sec:IDMC; subec:reverseKR}.
We now justify \eqref{sec:IDMC; subec:forwardKR; eq:InvStructure.l} and prove the correctness of the forward iterative Kron
reduction procedure above. 
Consider $A^0 =: \begin{bmatrix} A_{11} & A_{12} \\ A_{12}^{\sf T} & A_{22} \end{bmatrix}\in\mathbb C^{3n\times 3n}$
with a $3k\times 3k$ nonsingular submatrix $A_{22}$, $1\leq k < n$.   

\begin{theorem}[Invariant structure of $\hat A^l$]
\label{sec:IDMC; subec:forwardKR; thm:1stepKR}
Suppose $A^0$ is the admittance matrix of a single maximal clique of the form in 
\eqref{sec:IterativeKR; subsec:reverse; eq:Y.3} and it satisfies Assumption \ref{Assumption:y_jk}.   The procedure
above computes the Kron reduction $A^0/A_{22}$, i.e., for $l=0, \dots, k-1$,
\bee
\item The permuted matrix $\hat A^l$ has the form in \eqref{sec:IDMC; subec:forwardKR; eq:InvStructure.la}.  In particular, the entries of $A^{l+1}_{11}, A^{l+1}_{12,1}, A^{l+1}_{12, 2}$ 
in $\hat A^l$ are equal to the corresponding entries in $A^0$.

\item The matrix $A^{l+1}$ computed in \eqref{sec:IDMC; subec:forwardKR; eq:InvStructure.l+1e} is equal to $A^{l} / \alpha^{l}$.

\item At $l=k-1$, $A^{k}_{11} = 0$, $A^{k}_{12}=0$, $\hat A^k = C^k$, and $A^k = A^0/A_{22}$.
\eee
Moreover
\bee
\item[4.] In \eqref{sec:IDMC; subec:forwardKR; eq:InvStructure.ld}, if node $i$ in $y_l$ is a measured node (i.e., in $A_{11}$
	with an appropriate label in light of all the permutations by iteration $l$), then $\hat y_l[i,1] = y_l[i,1]$. 
\item[5.] If Assumption \ref{Assumption:HiddenNodes} holds then for each $l=0, \dots, k-1$, the number $n_l$
	of neighbors of node $n-l$ not in $C^l$ is at least 2.
\eee
\end{theorem}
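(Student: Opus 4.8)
The plan is to prove all five claims simultaneously by induction on the iteration index $l$, taking as the inductive hypothesis exactly that $(\hat A^l, C^l, P^l)$ has the form \eqref{sec:IDMC; subec:forwardKR; eq:InvStructure.l}, that $A^l = P^l \hat A^l (P^l)^{\sf T}$, and that $C^l$ is a \emph{genuine} clique (all its $3\times 3$ blocks are nonzero). For the base case I would read the initialization \eqref{sec:IDMC; subec:forwardKR; eq:InvStructure.0} off the clique form \eqref{sec:IterativeKR; subsec:reverse; eq:Y.3} of $A^0$: node $n$ is hidden, and since $G(A^0)$ is a tree its neighbors $N_n$ are pairwise non-adjacent, which forces the middle diagonal block to be block-diagonal $\diag(\hat y_0)$ and forces the top-right block (node $n$ against the remaining graph $A^1_{11}$) to vanish; the renormalization \eqref{sec:IDMC; subec:forwardKR; eq:InvStructure.0d} is precisely what makes $\hat A^0$ carry zero row-block sums.

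The heart of the argument is the inductive step, which yields part 1. I would take the Schur complement of $\alpha^l = A^l[n-l,n-l]$ and apply Lemma \ref{lemma:IterativeKronReduction.1}: only blocks $[i,j]$ with both $i$ and $j$ adjacent to node $n-l$ are altered. In \eqref{sec:IDMC; subec:forwardKR; eq:InvStructure.la} the neighbors of $n-l$ are exactly the $\diag(\hat y_l)$ nodes (column $-y_l$) and the $C^l_{11}$ nodes (column $c^l_{12}$), so the blocks $A^{l+1}_{11}, A^{l+1}_{12,1}, A^{l+1}_{12,2}$ are untouched and hence, by induction, equal their $A^0$ values — this is the second sentence of part 1. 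The families $\diag(\hat y_l)$ and $C^l_{11}$ then merge through \eqref{sec:IDMC; subec:forwardKR; eq:Cl+1} into a single block $C^{l+1}$; to certify that $C^{l+1}$ is again a full clique (so the invariant regenerates) I would invoke the converse of Lemma \ref{lemma:IterativeKronReduction.1}, namely Theorem \ref{lemma:Kron-reduced-y}, whose no-cancellation guarantee rests on the positivity $\mathrm{Re}(\cdot)\succ 0$ furnished by Assumption \ref{Assumption:y_jk} and Lemma \ref{lemma:InvertibleSubmatrix}. The remaining structural zeros in $\hat A^{l+1}$ — that the freshly selected neighbors of $n-(l+1)$ are mutually non-adjacent and are disconnected from $C^{l+1}_{11}$ — are tree facts: a fresh neighbor's only path into the current clique runs through the still-unreduced node $n-(l+1)$, so it shares no clique edge with $C^{l+1}_{11}$, and two fresh neighbors cannot be adjacent without closing a cycle. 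The relabeling $\hat P^l$ merely sorts these blocks into the corner positions of \eqref{sec:IDMC; subec:forwardKR; eq:InvStructure.l+1b}.

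Part 2 is the bookkeeping identity $A^{l+1} = P^{l+1}\hat A^{l+1}(P^{l+1})^{\sf T} = A^l/\alpha^l$: because $\hat P^l$ never relabels the node $n-l$ being eliminated, permutation and Schur complement commute as in \eqref{sec:IDMC; subsec:BasicIdea; eq:P.1}, and substituting $P^{l+1} = P^l_-\hat P^l$ recovers $A^l/\alpha^l$. Part 3 follows by composing the steps: once every hidden node is eliminated, the remaining-graph block $A^l_{11}$ and its coupling $A^l_{12}$ have shrunk to nothing at $l=k$, so $\hat A^k = C^k$, and \eqref{subsec:IterativeKR; eq:1} identifies $A^k = A^0/A_{22}$. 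Part 4 is local: for a degree-1 measured neighbor $i$ of $n-l$, the row of $A^{l+1}_{12,1}$ indexed by $i$ in \eqref{sec:IDMC; subec:forwardKR; eq:InvStructure.ld} is zero (node $i$ touches nothing in the remaining graph), so the renormalization term vanishes and $\hat y_l[i,1]=y_l[i,1]$.

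For part 5, under Assumption \ref{Assumption:HiddenNodes} the eliminated node $n-l$ is hidden with tree-degree at least $3$. A tree-neighbor of $n-l$ already lies in $C^l_{11}$ only if its branch away from $n-l$ contains a previously reduced hidden node; but the single-clique invariant established in part 1 forbids two distinct branches of $n-l$ from each containing reduced nodes, since that would produce two cliques meeting only at the unreduced node $n-l$. Hence at most one tree-neighbor of $n-l$ is already in the clique, leaving $n_l \ge \deg(n-l) - 1 \ge 2$ fresh neighbors. I expect the main obstacle to be the inductive step of part 1 — specifically, certifying that $C^{l+1}$ regenerates as a \emph{full} clique with no accidental zero blocks, which is exactly the point where the positivity in Assumption \ref{Assumption:y_jk} (through Theorem \ref{lemma:Kron-reduced-y}) must be used rather than the purely combinatorial Lemma \ref{lemma:IterativeKronReduction.1}.
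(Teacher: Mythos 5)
Your proposal follows essentially the same route as the paper's proof: the same induction, the same use of Lemma \ref{lemma:IterativeKronReduction.1} to show that only blocks indexed by two neighbors of $n-l$ are updated (hence $A^{l+1}_{11}, A^{l+1}_{12,1}, A^{l+1}_{12,2}$ retain their $A^0$ values), the same tree/loop contradictions for the structural zeros ($\diag(\hat y_l)$ diagonal, the zero in $\hat C^l$, and the upper-right zero), the same permutation bookkeeping for part 2 via \eqref{sec:IDMC; subsec:BasicIdea; eq:P.1}, the same degree-1 observation for part 4, and the same ``at most one branch of $n-l$ leads back to the already-reduced subtree'' counting for part 5. The one place you diverge is where you try to certify that $C^{l+1}$ regenerates as a \emph{full} clique by invoking Theorem \ref{lemma:Kron-reduced-y} as the converse of Lemma \ref{lemma:IterativeKronReduction.1}: that theorem is stated and proved only under the uniform-line hypothesis (Assumption \ref{Assumption:UniformLines}, $y_{ij}=\mu_{ij}y$), which is not among the hypotheses of the theorem you are proving, so this citation is not available to you. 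The paper's own proof does not claim or need non-vanishing of the clique blocks at this point; it establishes only the zero pattern and the invariance of the untouched entries, and defers the non-degeneracy it actually uses (e.g., invertibility of the off-diagonal blocks $a_3 := M_{11}[1,2]$ in Algorithm 2) to Lemma \ref{lemma:InvertibleSubmatrix} under Assumption \ref{Assumption:y_jk}. So either drop the fullness claim from the inductive invariant, or justify it directly from Lemma \ref{lemma:InvertibleSubmatrix} rather than from the uniform-line theorem.
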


\begin{proof}
\bee
\item We can always permute $A^l$ so that the permuted matrix $\hat A^l$ has its maximal clique $C^l$
	in its lower-right corner preceded by $y_l$.  We therefore have to prove the following features of $\hat A^l$ in
	\eqref{sec:IDMC; subec:forwardKR; eq:InvStructure.la}:
\bee
\item The zero matrix at the upper-right corner of $\hat A^l$.
\item The diagonal matrix $\diag(\hat y_l)$ with $\hat y_l$ given by \eqref{sec:IDMC; subec:forwardKR; eq:InvStructure.ld}
	and the zero matrix in $\hat C^l$ in \eqref{sec:IDMC; subec:forwardKR; eq:InvStructure.lb}.
\item The entries of $A^{l+1}_{11}$ and $A^{l+1}_{12} := \begin{bmatrix} A^{l+1}_{12,1} & A^{l+1}_{12, 2}  \end{bmatrix}$ in
	$\hat A^l$ in \eqref{sec:IDMC; subec:forwardKR; eq:InvStructure.la} being equal to the corresponding 
	entries in $A^{l}_{11}$ and $A^{l}_{12}$ respectively, i.e., these entries have not be changed by iterations 
	$0, 1, \dots, l$.
\eee
Assertion (a) follows from the construction of $y_{l}$ in each iteration $l$.  
For assertion (b), suppose $\diag(\hat y_l)$ in $\hat C^l$ should be a matrix $a_l$ which is not diagonal, in particular 
$a_l[i,j]\neq 0\in\mathbb C^{3\times 3}$ for some nodes $i, j$ in $y_l$.  This means that nodes $i$ and $j$ are adjacent in 
$G(\hat A^{l-1}/\alpha^{l-1})$, the graph of $\hat A^l$ before it is permuted by $\hat P^{l-1}$.
By Lemma \ref{lemma:IterativeKronReduction.1}, there is a unique path in the tree $G(\hat A^0)$ connecting 
nodes $i$ and $j$ (their labels may be changed in $G(\hat A^0)$ due to permutations along the way).
But both $i$ and $j$ are in $y_l$ and hence are neighbors of node $n-l$ in $G(\hat A^l/\alpha^l)$ by the construction
of $N_{n-l}$ and $y_l$.
Hence there is a unique path in $G(\hat A^0)$ connecting nodes $i$ and $n-l$ and another unique path in $G(\hat A^0)$ 
connecting $j$ and $n-l$ (possibly with different labels).  This creates a loop, contradicting
that $G(\hat A^0)$ is a tree.  The same argument proves the zero matrix in $\hat C^l$.

We prove assertion (c) by induction on $l$. For $l=0$, applying \eqref{lemma:Kron-reduced-y; eq:3} to $A^0$ in 
\eqref{sec:IDMC; subec:forwardKR; eq:InvStructure.0c} we have: for $i,j \in \{1, \dots, n-1\}$,
	\begin{align*}
	\left( \hat A^0/\alpha^0\right) [i,j] & \ := \ \left\{ \begin{array}{ll}
		\hat A^0[i,j] - \hat A^0[i,n] \! \left( \alpha^0 \right)^{-1}\! \hat A^0[j,n] & \ \text{if } i,j \in  \{n\} \cup N_n \\
			\hat A^0[i,j]	& \ \text{otherwise}
			\end{array} \right.
	\end{align*}
 	i.e., only entries for node $n$ and its neighbors need to be updated.  Since nodes in $A^1_{11}$ are not in $N_n$,
	$A^1_{11}$ and $A^1_{12,1}$ are equal to the corresponding submatrices of $\hat A^0$.  This proves assertion (c)
	for the base case $l=0$. Suppose assertion (c) holds for $l$.  For $l+1$, we have from
	\eqref{sec:IDMC; subec:forwardKR; eq:InvStructure.la},
	\begin{align*}
	\left( \hat A^{l}/\alpha^l \right) [i,j] & \ := \ \left\{ \begin{array}{ll}
		\hat A^l[i,j] - \hat A^l[i,n-l] \! \left( \alpha^l \right)^{-1}\! \hat A^l[j,n-l] & \ \text{if } i,j \in  C^l \cup N_{n-l} \\
			\hat A^l[i,j]	& \ \text{otherwise}
			\end{array} \right.
	\end{align*}
	Since nodes in $A^{l+1}_{11}$ are not in $C^l \cup N_{n-l}$,
	assertion (c) holds for $l+1$.  This completes the proof of part 1. 

\item Since $\hat A^0 = P^0 A^0 (P^0)^{\sf T}$ and $\hat A^1 = \hat P^0 (\hat A^0/\alpha^0) (\hat P^0)^{\sf T}$, 
	\eqref{sec:IDMC; subsec:BasicIdea; eq:P.1} and the fact that the square of a permutation matrix is the identity matrix
	imply that $A^1 = (P^0_- \hat P^0) \hat A^1 (P^0_- \hat P^0)^{\sf T}$.  Hence the assertion holds for $l=0$.  Suppose it
	holds for $l$.  For $l+1$, since $\hat A^{l+1} = \hat P^l \left( \hat A^l / \alpha^l \right) (\hat P^l)^{\sf T}$, the same reasoning
	shows that $A^{l+1} = P^{l+1} \hat A^{l+1} (P^{l+1})^{\sf T}$.  Therefore \eqref{sec:IDMC; subec:forwardKR; eq:InvStructure.l+1e} 
	holds for all $l$.

\item Part 3 follows from part 2 because the number of nodes in $A_{22}$ is $k$.

\item Since $\hat A^l$ has zero row and column-block sums, $\hat y_l$ is given by \eqref{sec:IDMC; subec:forwardKR; eq:InvStructure.ld}.
Suppose node $i$ in $y_l$ is a measured node (i.e. in $A_{11}$), a principal submatrix of $A^0$.  Then since $A^0$ is of the form in
\eqref{sec:IterativeKR; subsec:reverse; eq:Y.3}, node $i$ is of degree 1 and $i$ can only be adjacent to node $n-l$ in graph 
$G(\hat A^{l-1}/\alpha^{l-1})$.  This implies that the $i$th column block of $A^{l+1}_{12, 1}$ is a zero matrix, i.e., 
$A^{l+1}_{12,1}[j,i] = 0 \in\mathbb C^{3\times 3}$ for $j=1, \dots, n_l$.  Therefore $\hat y_l = y_l$ from
\eqref{sec:IDMC; subec:forwardKR; eq:InvStructure.ld}.

\item In the original graph $G^0 := G(A^0)$,  the node $n-l$ (with an appropriate label) is adjacent to exactly one of the nodes in the 
	maximal clique $C^l$, for otherwise, there will be a loop in the tree $G^0$.  Since Assumption \ref{Assumption:HiddenNodes} ensures 
	that node $n-l$ (to be Kron reduced in iteration $l$) has at least 3 neighbors in $G^0$, node $n-l$ must have at least 2 neighbors 
	that are not in $C^l$.  These neighbors cannot be nodes that have already been Kron reduced before iteration $l$ starts
	because those nodes are also connected (not necessarily adjacent) in $G^0$ to nodes in $C^l$.  
	These neighbors therefore must be in graph $G(\hat A^{l-1}/\alpha^{l-1})$ (with appropriate labels in light of
	 permutations up to iteration $l$), i.e., $n_l \geq 2$. 

\eee
This completes the proof of the theorem. 
\end{proof}

\begin{remark}
\bee
\item The entries of $\hat A^l$ corresponding to nodes not in the maximal clique $C^l$ remain unchanged until iteration
	$l$, i.e., $\hat A^l[i,j] = A^0[i',j']$ unless both $i,j \in C^l$ where $i', j'$ are the labels in $A^0$ corresponding to nodes
	$i, j$ in graph $G(\hat A^l)$ respectively.
	Take Example \ref{eg:IterativeKronReduction.1} and 
	Figure \ref{fig:IterativeKronReduction1}.  Consider the graph and its admittance matrix
$(G^3, A^3)$ after hidden nodes $12, 11, 10$ have been Kron reduced.  The maximal clique $C^3$ consists of nodes
$\{4,5,6,7, 9\}$. As indicated in Figure \ref{fig:IterativeKronReduction1}, $A^3[i,j] = A^0[i,j]$ for $i,j=1, 2, 3, 8, 9$, except 
the diagonal entry $A^3[9,9]$.

\item In each iteration $l-1$, there will always be at least two neighbors of $n-l$ in 
	$G(\hat A^{l-1}/\alpha^{l-1})$ that are not in the maximal clique $C^{l}$ but will be added to the maximal 
	clique $C^{l+1}$.  The size of $C^l$ therefore grows by at least 1 in each iteration and the algorithm
	terminates when the size of $C^k$ equals $n-k$.

\item Starting from $A^0$, to compute its Kron reduction $A^k = A^0/A_{22}$ iteratively, we only need
	to compute the permuted sequence $\hat A^l$, $l=0, \dots, k$, and then obtain $A^k$ from
	$\hat A^k$ at the end of the procedure.  It is not necessary to compute the intermediate unpermuted
	matrices $A^l$ for $l<k$.
\eee
\qed
\end{remark}

We now use Theorem \ref{sec:IDMC; subec:forwardKR; thm:1stepKR} to devise a method to compute 
$Y$ from its Kron reduction $\bar Y$ by reversing each step in the forward iterative Kron reduction
described in this subsection.

\subsection{Reverse Kron reduction: shrinking $C^l$ (Algorithm 2)}
\label{sec:IDMC; subec:reverseKR}

Given $A^0/A_{22}$, it is generally not possible to reverse the Kron
reduction to recover the original admittance matrix $A^0$.  This turns out to be possible, using the invariance
structure of $A^l$ in Theorem \ref{sec:IDMC; subec:forwardKR; thm:1stepKR}, when the network is a tree, 
as we now explain.

The basic idea is as follows.  Given the Kron reduction $A^k := A^0/A_{22}$, we will reverse each step in the
process illustrated in Figure \ref{fig:ReversibleKR}, focusing on computing the permuted sequence $\hat A^l$ in the reverse direction.
Each reverse iteration $l$ will involve three steps:
\bi
\item From $\hat A^{l+1}$ with the structure in \eqref{sec:IDMC; subec:forwardKR; eq:InvStructure.la}, 
	identify a set of ``sibling'' nodes in $C^{l+1}$ that are adjacent to a unique ``parent'' hidden node
	(this step requires Assumption \ref{Assumption:BoundaryHiddenNodes} below).  
	These sibling nodes define the set $N_{n-l}$ of nodes in $y_l$, to be determined, and their 
	parent node will be added in iteration $l$ to the set of identified nodes.
\item Reverse the Kron reduction of the maximal clique, i.e., compute $\hat C^l$ from $C^{l+1}$.  This allows
	us to construct $\hat A^l$ which will have the structure in \eqref{sec:IDMC; subec:forwardKR; eq:InvStructure.la}.
\item Permute $\hat A^l$ to obtain $A^l$ (this step can also be done only on $\hat A^0$ at the end).
\ei

To describe these steps in detail, it is important to keep track of what the algorithm knows at the beginning of each 
\emph{reverse iteration} and what it computes in that iteration.  Since the algorithm does not know the number $k := H$ of 
hidden nodes (nor the number $H_b\leq H$ of boundary hidden nodes) until the algorithm terminates, we will denote
the sequence $A^k, A^{k-1}, \dots, A^0$ by $\tilde A^0, \tilde A^1, \dots, \tilde A^k$ such that
\begin{align*}
\tilde A^0 & \ := \ A^k \ := \ Y/Y_{22}, 		
\quad \dots \quad \tilde A^{k-(l+1)} \ := \ A^{l+1}, \quad \tilde A^{k-l} \ := \ A^l, \quad\dots\quad 
\tilde A^k \ := \ A^0 \ := \ Y
\end{align*}
i.e., the identification algorithm starts in iteration $\tilde l = 0$ with the Kron reduced admittance matrix 
$\tilde A^{\tilde l} = A^k$ which is a single maximal clique.  It terminates at the end of iteration $\tilde l = k-1$
when the maximal clique $C^0$ of $\tilde A^{\tilde l + 1} := A^0$
is a $3\times 3$ matrix $Y[n,n]$.  When it is clear from the context that we do not need to know the value of $k$,
instead of index $\tilde l$ that progresses forward $\tilde l = 0, 1, \dots$, we will often use 
the index $l$ that progresses backward $l := k-\tilde l = k, k-1, \dots$, and describe our reverse algorithm in 
terms of $A^l$ instead of $\tilde A^{\tilde l} = \tilde A^{k-l}$, and their permuted matrices $\hat A^l$ and 
maximal cliques $C^l$.

The following assumption is
important for constructing the permuted matrix $\hat A^{l}$ from $\hat A^{l+1}$.

\begin{MyAssumption}{3}[Parent node]
\label{Assumption:BoundaryHiddenNodes}
Given any $\hat A^{l+1}$ and its maximal clique $C^{l+1}$, it is possible to determine the identity of a
set of all ``sibling'' nodes in $C^{l+1}$ that are adjacent to a common ``parent'' node in the original graph 
$G(A^0)$ but not in the graph $G(\hat A^{l+1})$.
\qed
\end{MyAssumption}

\begin{remark}[Assumption \ref{Assumption:BoundaryHiddenNodes}]
\bee
\item
The parent node not in $G(\hat A^{l+1})$ will be labeled by $(n-k)+\tilde l = n-l$ and added to $G(\hat A^l)$
in iteration $\tilde l$.   In the forward direction, $y_l$ is constructed from the graph $G(\hat A^{l-1}/\alpha^{l-1})$
(or $G(\hat A^l)$).  In the reverse direction, 
Assumption \ref{Assumption:BoundaryHiddenNodes} allows us to determine the identity of the nodes in
(but not the value of)
$y_l$ from the graph $G(\hat A^{l+1})$ and construct \eqref{sec:IDMC; subec:forwardKR; eq:Cl+1}.
We will provide a sufficient condition in Section \ref{sec:Assumption3} for Assumption 
\ref{Assumption:BoundaryHiddenNodes} to hold and Algorithm 3 to identify these sibling nodes.

\item
Even though the algorithm does not know $k$ nor $n$, it knows the number $n-k$ of measured nodes.
\qed
\eee
\end{remark}

The procedure to identify $A^0$ from its Kron reduction $A^0/A_{22}$ is as follows.

\paragraph{Algorithm 2: reverse iterative Kron reduction.}
\bee
\item[] \textbf{Given}:  $\tilde A^0 := A^0/A_{22} =: A^k \in\mathbb C^{3(n-k)\times 3(n-k)}$.  
\item[] \textbf{Initialize}: Let $\tilde l:=0$ and $l := k-\tilde l = k$.  Let
	\begin{align*}
	\hat A^k & \ := \ \tilde A^0 \ = \ A^k, \qquad C^k \ := \ \hat A^k, \qquad \tilde P^k \ := \ \textbf I_{3(n-k)\times 3(n-k)}
	\end{align*}
	
\item[] \textbf{Iterate} for $l=k-1, k-2, \dots$, until the maximal clique $C^{k-\tilde l} =: C^{l} \in\mathbb C^{3\times 3}$:

Given $(\hat A^{l+1}, C^{l+1}, \tilde P^{l+1}) := (\tilde A^{k-\tilde l+1}, C^{k-\tilde l+1}, \tilde P^{k-\tilde l+1})$ with
\begin{align}
	\hat A^{l+1} & \ =: \ 
	\left[ \begin{array} {c | c} A^{l+1}_{11} & A^{l+1}_{12} \\ \hline \left( A^{l+1}_{12} \right)^{\sf T} & C^{l+1} \end{array} \right]
	\ =: \  	\left[ \begin{array} {c c | c c}
	A^{l+2}_{11} & A^{l+2}_{12, 1} & A^{l+2}_{12, 2} & 0 \\ 
	& {\diag(\hat y_{l+1})} & {0} & {- y_{l+1}} \\ \hline
	& &  {C^{l+1}_{11}}  &  {c^{l+1}_{12}} \\ 	&   &   &  {\alpha^{l+1} } \\ 
	\end{array}  \right] 
\label{sec:IDMC; subec:reverseKR; eq:Al+1}
\end{align}
we compute $(\hat A^{l}, C^{l}, \tilde P^{l}) := (\tilde A^{k-\tilde l}, C^{k-\tilde l}, \tilde P^{k-\tilde l})$ as follows.

\bee
\item[1.] Identify all the ``sibling'' nodes in $C^{l+1}$ that are adjacent to a common ``parent'' 
	hidden node in $G(\hat A^0)$ but not in $G(\hat A^{l+1})$ 
guaranteed by Assumption \ref{Assumption:BoundaryHiddenNodes}.\footnote{
See Algorithm 3 in Section \ref{sec:Assumption3; subsec:CommonBHN} under
Assumption \ref{Assumption:UniformLines}.}  These nodes in $C^{l+1}$ define the set $N_{n-l}$ of nodes in $y_l$. 
The parent node is labeled $(n-k)+\tilde l = n-l$ in $\hat A^l$ 
(to be determined).   

\item[2.] Solve \eqref{sec:IDMC; subec:forwardKR; eq:Cl+1}, reproduced here:
\begin{align}
C^{l+1} & \ := \ \hat C^l/\alpha^l \ = \ \begin{bmatrix} \diag(\hat y_l) & 0 \\ 0 & C^l_{11} \end{bmatrix} 
	- \begin{bmatrix} -y_l \\ c^l_{12} \end{bmatrix} (\alpha^l)^{-1} \begin{bmatrix} -y_l^{\sf T} & \left( c^l_{12}\right)^{\sf T} \end{bmatrix} 
\label{sec:IDMC; subec:reverseKR; eq:Cl+1}
\end{align}
for $(y_l, \hat y_l)$ and $(C^l_{11}, c^l_{12}, \alpha^l)$.  

\item[3.] Substitute $(y_l, \hat y_l)$ and $(C^l_{11}, c^l_{12}, \alpha^l)$ as well as $(A^{l+1}_{11}, A^{l+1}_{12})$ from
	\eqref{sec:IDMC; subec:reverseKR; eq:Al+1} into 
\eqref{sec:IDMC; subec:forwardKR; eq:InvStructure.la}\eqref{sec:IDMC; subec:forwardKR; eq:InvStructure.lb} to
obtain $(\hat A^l, C^l)$.  Note that $\hat A^l$ has the same structure as $\hat A^{l+1}$ in
\eqref{sec:IDMC; subec:reverseKR; eq:Al+1} by construction, i.e., the invariance structure is preserved
in both the forward and the reverse directions.

\item[4.] The matrix $y_l\in\mathbb C^{3n_l\times 3n_l}$ may include both measured and hidden nodes and therefore 
the order in which the measured nodes appear in $\hat A^l$ may not agree with that in the given $A^k = \tilde A^0$. 
Re-label the nodes in $\hat A^l$ so that they agree and let the permutation matrix be $\tilde P^l$.  Then set
\begin{align*}
A^l & \ := \ \tilde P^l (\hat A^l) (\tilde P^l)^{\sf T}
\end{align*}
\eee

\item[] \textbf{Return}: $A^{\tilde k} = A^0$.
\eee
\begin{remark}
As for the forward Kron reduction, starting from the Kron reduction $A^k$, to compute $A^0$ iteratively,
we only need to compute the permuted sequence $\hat A^l$, $l=k, \dots, 0$, and then obtain $A^0$ from
$\hat A^0$ at the end of the procedure.  It is not necessary to compute the intermediate unpermuted
matrices $A^l$ for $l>0$, i.e., Step 4 in Algorithm 2 can be taken after the iteration has terminated.
\qed
\end{remark}

We now solve \eqref{sec:IDMC; subec:reverseKR; eq:Cl+1} first for $(y_l, \hat y_l)$ and then for $C^l$, or
equivalently $(C^l_{11}, c^l_{12}, \alpha^l)$.

\bee
\item
\noindent\emph{Computing $(y_l, \hat y_l)$ from \eqref{sec:IDMC; subec:reverseKR; eq:Cl+1}.}
Recall that $y_l, \hat y_l\in\mathbb C^{3n_l \times 3}$ and that Assumption \ref{Assumption:HiddenNodes} ensures
that $n_l\geq 2$ (Theorem \ref{sec:IDMC; subec:forwardKR; thm:1stepKR}).   
We have from 
\eqref{sec:IDMC; subec:forwardKR; eq:InvStructure.ld}
\begin{align*}
\hat y_l & \ := \ y_l \, - \, \underbrace{ \left(A^{l+1}_{12,1}\right)^{\sf T} \left( \textbf 1\otimes \textbf I_3 \right) }_{d_l}
\ =: \ y_l \, - \, d_l
\end{align*}
where $A^{l+2}_{12,1}$ is known and given in \eqref{sec:IDMC; subec:reverseKR; eq:Al+1}
and its row-block sums are abbreviated as $d_l \in\mathbb C^{3n_l \times 3}$.
Consider the first $n_l$ row blocks and $n_l$ column blocks of \eqref{sec:IDMC; subec:reverseKR; eq:Cl+1} and letting
\begin{align}
C^{l+1} & \ =: \ \begin{bmatrix} M_{11} & M_{12} \\ (M_{12})^{\sf T} & M_{22} \end{bmatrix}
\label{sec:IDMC; subec:reverseKR; eq:Cl+1partition}
\end{align}
where $M_{11}\in\mathbb C^{n_l\times n_l}$, we have:
\begin{align}
M_{11} & \ = \ \diag\left(y_l - d_l \right) \ - \ y_l \left( \alpha^l \right)^{-1} y_l^{\sf T}
\label{eq:barY_k1}
\end{align}
To simplify notation let $\tilde d_j := d_l[j]\in\mathbb C^{3\times 3}$ and $\tilde y_j := y_l[j] \in\mathbb C^{3\times 3}$
denote the $j$th block of $d_l$ and $y_l$ respectively with respect to the fixed $l$, and drop the superscript of 
$\alpha^l$ to write $\alpha$.
Since $n_l\geq 2$ we have four $3\times 3$ blocks at the top-left of \eqref{eq:barY_k1} and they are 
\begin{subequations}
\begin{align}
M_{11} [1,1] & \ = \ ( \tilde y_1 - \tilde d_1 ) - \tilde y_1\, \alpha^{-1} \tilde y_1	\\
M_{11} [2,2] & \ = \ ( \tilde y_2 - \tilde d_2 ) - \tilde y_2\, \alpha^{-1} \tilde y_2	\\
M_{11} [1,2] & \ = \  - \tilde y_1\, \alpha^{-1} \tilde y_2
\label{Step3; eq:1c}
\end{align}
and $M_{11} [2,1] = - \tilde y_2 \alpha^{-1} \tilde y_1 = 
	- \left(  - \tilde y_1 \alpha^{-1} \tilde y_2 \right)^{\sf T} =M_{11} [1,2]^{\sf T}$
by symmetry of $\tilde y_j$ and $\alpha$.
\label{Step3; eq:1}
\end{subequations}
To solve for $\tilde y_1, \tilde y_2, \alpha$ from \eqref{Step3; eq:1}, simplify the notation by setting
\begin{align}
a_1 & \ := \ M_{11}[1,1] +\tilde d_1, & a_2 & \ := \ M_{11}[2,2] + \tilde d_2, & a_3 \ := \ M_{11}[1,2]  
\label{sec:IDMC; subec:reverseKR; eq:a123}
\end{align}
Under Assumption \ref{Assumption:y_jk}, Lemma \ref{lemma:InvertibleSubmatrix} implies that $(\tilde y_1, \tilde y_2)$ 
and hence $a_3 := M_{11}[1,2]$ in \eqref{Step3; eq:1c} are invertible.
Moreover it can be shown, using \eqref{sec:IDMC; subec:forwardKR; eq:InvStructure.la}
and \eqref{sec:IDMC; subec:forwardKR; eq:Cl+1}, that $(a_1, a_2)$ are invertible as well.\footnote{Indeed $a_j$ is a diagonal block of the admittance matrix of the
clique $C^{l+1}$ in isolation, i.e., if the clique is the entire network with zero row and column-block sums.}
Then we have from \eqref{Step3; eq:1}
\begin{align*}
\tilde y_2 \tilde y_1^{-1} a_1 & \ = \ \tilde y_2 - \tilde y_2 \alpha^{-1} \tilde y_1, &  
\tilde y_1 \tilde y_2^{-1} a_2 & \ = \ \tilde y_1 - \tilde y_1\alpha^{-1} \tilde y_2
\end{align*} 
Since $\tilde y_1, \tilde y_2, \alpha^{-1}$ are symmetric, substitute $a_3 = - \tilde y_1 \alpha^{-1} \tilde y_2$ to 
eliminate $\alpha$:
\begin{align*}
\tilde y_1^{-1} a_1 & \ = \ \textbf I_3 + \tilde y_2^{-1} a_3, &  
\tilde y_2^{-1} a_2 & \ = \ \textbf I_3 + \tilde y_1^{-1} a_3
\end{align*} 
and hence
\begin{subequations}
\begin{align}
\tilde y_2^{-1} & \ = \ \tilde y_1^{-1}a_1 a_3^{-1} - a_3^{-1}, & \tilde y_2^{-1} & \ = \ \tilde y_1^{-1}a_3 a_2^{-1} + a_2^{-1}
\label{Step3; eq:2a}
\\
\tilde y_1^{-1} & \ = \ \tilde y_2^{-1} a_3 a_1^{-1} + a_1^{-1}, & \tilde y_1^{-1} & \ = \ \tilde y_2^{-1} a_2 a_3^{-1} - a_3^{-1}
\label{Step3; eq:2b}
\end{align}
\end{subequations}
Eliminating $\tilde y_2^{-1}$ from \eqref{Step3; eq:2a} and $\tilde y_1^{-1}$ from \eqref{Step3; eq:2b}, we 
have 
\begin{subequations}
\begin{align}
y_l[1] & \ =: \ \tilde y_1 \ = \ \left( a_1 a_3^{-1} - a_3 a_2^{-1} \right) \left( a_2^{-1} + a_3^{-1} \right)^{-1}
\label{sec:algorithm; subsec:IdentifyMaxClique; eq:tilde.a}
\\
y_l[2] & \ =: \ \tilde y_2 \ = \ \left( a_2 a_3^{-1} - a_3 a_1^{-1} \right) \left( a_1^{-1} + a_3^{-1} \right)^{-1}
\label{sec:algorithm; subsec:IdentifyMaxClique; eq:tilde.b}
\end{align}
where $a_1, a_2, a_3$ are defined in \eqref{sec:IDMC; subec:reverseKR; eq:a123}.
If $n_l\geq 3$, we have from the first row block of
\eqref{eq:barY_k1}, 
\begin{align}
y_l[j] & \ = \ - M_{11}[1,j] \left( y_l[1] \right)^{-1} \alpha^l, \qquad j=3, \dots, n_l	\\
\hat y_l & \ = \ y_l \, - \, d_l \ = \ y_l \, - \, \left(A^{l+1}_{12,1}\right)^{\sf T} \left( \textbf 1\otimes \textbf I_3 \right)
\end{align} 
where $\alpha^l$ is given by \eqref{sec:IDMC; subec:reverseKR; eq:Cl.a} below.
\label{sec:algorithm; subsec:IdentifyMaxClique; eq:tilde}
\end{subequations}
This completes the computation of $(y_l, \hat y_l)$ from \eqref{sec:IDMC; subec:reverseKR; eq:Cl+1}.

\item 
\noindent\emph{Computing $C^l$ from \eqref{sec:IDMC; subec:reverseKR; eq:Cl+1}.}
We compute $(C^l_{11}, c^l_{12}, \alpha^l)$.  For $\alpha^l$ we have from \eqref{Step3; eq:1c} 
\begin{subequations}
\begin{align}
\alpha^l & \ =: \ \alpha \ \ = \ - \tilde y_2 a_3^{-1} \tilde y_1 
\label{sec:IDMC; subec:reverseKR; eq:Cl.a}
\end{align}
where $a_3$ is defined in \eqref{sec:IDMC; subec:reverseKR; eq:a123} and $\tilde y_1, \tilde y_2$ are
given in 
\eqref{sec:algorithm; subsec:IdentifyMaxClique; eq:tilde.a}\eqref{sec:algorithm; subsec:IdentifyMaxClique; eq:tilde.b}.
For $c^l_{12}$, let $r := C^{l+1}[1, n_l+1:n-(l+1)]$ denote the first row block of $C^{l+1}$ from column block $n_l+1$ 
to the last column block $n-(l+1)$.  Then we have from \eqref{sec:IDMC; subec:reverseKR; eq:Cl+1} 
\begin{align*}
r & \ = \ y_l[1] \left( \alpha^l \right)^{-1} \left( c^l_{12} \right)^{\sf T}
\end{align*}
Since $r$ is known and $y_l[1], \alpha^l$ given in \eqref{sec:IDMC; subec:reverseKR; eq:Cl.a}
are invertible and symmetric, we have
\begin{align}
c^l_{12} & \ = \ r^{\sf T} \left( y_l[1] \right)^{-1} \alpha^l
\end{align}
Finally for $C^l_{11}$ we have, from
\eqref{sec:IDMC; subec:reverseKR; eq:Cl+1} and \eqref{sec:IDMC; subec:reverseKR; eq:Cl+1partition},
$M_{22} = C^l_{11} - c^l_{12} \left( \alpha^l \right)^{-1} \left( c^l_{12} \right)^{\sf T}$ and hence
\begin{align}
C^l_{11} & \ = \ M_{22}\, + \, c^l_{12} \left( \alpha^l \right)^{-1} \left( c^l_{12} \right)^{\sf T}
\end{align}
\label{sec:IDMC; subec:reverseKR; eq:Cl}
\end{subequations}
This completes the computation of $C^l$ from \eqref{sec:IDMC; subec:reverseKR; eq:Cl+1}.
 \eee

In summary, given the Kron reduction $\tilde A^0 := A^0/A_{22}$ of a single maximal clique, the
algorithm to identify the admittance matrix $A^0$ reverses the iterative Kron reduction procedure
step by step.  It computes $(\hat A^l, C^l, \tilde P^l)$ from $(\hat A^{l+1}, C^{l+1}, \tilde P^{l+1})$
until the single maximal clique $C^l$ shrinks to a $3\times 3$ matrix.

\section{Identification of sibling nodes}
\label{sec:Assumption3}

The identification algorithm in Section \ref{sec:IDMC; subec:reverseKR} for a single maximal clique
in isolation relies on Assumption \ref{Assumption:BoundaryHiddenNodes} that, in each iteration of the
reverse Kron reduction, given $\hat A^{l+1}$ and its maximal clique $C^{l+1}$, the algorithm can determine 
the identity of a set of sibling nodes in $C^{l+1}$ with a common parent node in the original graph 
$G(A^0)$ but not in the graph $G(\hat A^{l+1})$.  This allows the Algorithm 2 to determine from
$G(\hat A^{l+1})$ the identity of nodes in $y_l$  and construct \eqref{sec:IDMC; subec:reverseKR; eq:Cl+1}.

In this section we derive a method to fulfill Assumption \ref{Assumption:BoundaryHiddenNodes}
when the three-phase distribution lines are uniform (Assumption \ref{Assumption:UniformLines} below), 
as we now explain.

\subsection{Uniform lines}
\label{sec:Assumption3; subsec:UniformLine}

We assume that all lines are of the same type
specified by an impedance matrix $y^{-1}$ per unit length.  These lines differ only in their lengths. 
We call $y$ the \emph{unit admittance} and assume it satisfies
Assumption \ref{Assumption:y_jk}.
Consider an arbitrary $3n\times 3n$ complex symmetric matrix $A$ on a graph $ G^0 := ( N^0,  E^0)$
whose $3\times 3$ $(i,j)$th blocks $A[i,j]$ are given by:
\begin{align}
A[i,j] & \ = \ \left\{ \begin{array}{lcl}
    - y_{ij} & & (i,j)\in  E^0 \\ 	\sum_{m: (i,m)\in E^0} y_{im} & & i = j \\	 0 & & \text{otherwise}
    \end{array} \right.
\label{sec:Assumption3; subsec:UniformLine; eq:A0}
\end{align}
where $y_{ij} := \mu_{ij} y$ with $y\in\mathbb C^{3\times 3}$ 
and $\mu_{ij} = \mu_{ji} > 0$.  Let $A =: \begin{bmatrix} A_{11} & A_{12} \\ A_{12}^{\sf T} & A_{22} \end{bmatrix}$
with a $3k\times 3k$ nonsingular submatrix $A_{22}$, $1\leq k < n$.   

We show that the assumption of uniform lines is preserved under Schur complement, i.e., the effective line 
admittances of an arbitrary Kron-reduced admittance matrix $A/A_{22}$ are also specified by the unit 
admittance $y$.  Suppose Re$(y)\succ 0$ and $\mu_{ij}>0$ for all $(i,j)\in E^0$.
Then Lemma \ref{lemma:InvertibleSubmatrix} implies that $y^{-1}$ exists, is symmetric,
and Re$(y^{-1})\succ 0$.  Kron reduction preserves this structure as well.
\begin{theorem}
\label{lemma:Kron-reduced-y}
Suppose $y$ satisfies Assumption \ref{Assumption:y_jk} and $\mu_{ij} = \mu_{ji}>0$ 
for all $(i,j)\in E^0$ in the complex
symmetric matrix $A$ defined in \eqref{sec:Assumption3; subsec:UniformLine; eq:A0}.
\begin{enumerate}
\item 
The $3\times 3$  $(i,j)$th blocks $(A/A_{22})[i,j]$ of the Schur complement $A/A_{22}$ of 
$A_{22}$ of $A$ are given by
\begin{align}
(A/A_{22})[i,j] & \ = \ \left\{ \begin{array}{lcl}
    - \tilde \mu_{ij} \, y & & i \leadsto j \\
    \left( \sum_{m:i\leadsto m} \tilde \mu_{im} \right) y & & i = j \\
    0 & & \text{otherwise}
    \end{array} \right.
\label{lemma:Kron-reduced-y; eq:1}
\end{align}
for some $\tilde \mu_{ij} = \tilde \mu_{ji}>0$.  Here $i\leadsto j$ if and only if
there is a path in the underlying graph $G^0 := G(A)$ connecting nodes $i$ and $j$.

\item If Assumption \ref{Assumption:y_jk} is satisfied then
$(A/A_{22})^{-1}$ exists and is symmetric, and both Re$(A/A_{22})\succ 0$ and
Re$(A/A_{22})^{-1}\succ 0$.
\end{enumerate}
\end{theorem}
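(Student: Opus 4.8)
The plan is to prove Part 1 by induction on the one-node-at-a-time iterative Kron reduction defined in \eqref{subsec:IterativeKR; eq:1}, and to obtain Part 2 as a direct specialization of Lemma \ref{lemma:InvertibleSubmatrix}. For Part 1, I would show that the \emph{uniform-line} structure is an invariant of the block update \eqref{lemma:Kron-reduced-y; eq:3}: at every stage $A^l$ there exist scalars $\mu^l_{ij} = \mu^l_{ji}\ge 0$, strictly positive exactly on the edges of $G^l$, such that each off-diagonal block is $A^l[i,j] = -\mu^l_{ij}\,y$ and each diagonal block is $A^l[i,i] = \big(\sum_m \mu^l_{im}\big) y$. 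The base case $l=0$ is just the definition \eqref{sec:Assumption3; subsec:UniformLine; eq:A0} with $\mu^0_{ij}=\mu_{ij}$.

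The crux of the inductive step is a scalar cancellation. When node $p := n-l$ is reduced, its diagonal block is $A^l[p,p] = \sigma^l_p\, y$ with $\sigma^l_p := \sum_m \mu^l_{pm} > 0$, so the correction term in \eqref{lemma:Kron-reduced-y; eq:3} becomes
\begin{align*}
A^l[i,p]\,(A^l[p,p])^{-1} A^l[j,p] & \ = \ (-\mu^l_{ip} y)(\sigma^l_p y)^{-1}(-\mu^l_{jp} y) \ = \ \frac{\mu^l_{ip}\mu^l_{jp}}{\sigma^l_p}\, y,
\end{align*}
since $y\,(\sigma^l_p y)^{-1} y = (\sigma^l_p)^{-1} y$. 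Every block thus stays proportional to the \emph{same} $y$, with updated weights $\mu^{l+1}_{ij} = \mu^l_{ij} + \mu^l_{ip}\mu^l_{jp}/\sigma^l_p$ for $i\ne j$. This is precisely the feature that fails for non-uniform lines, where $y_{ip}\,y_{pp}^{-1}\,y_{jp}$ need not be proportional to any fixed admittance. Two observations then close Part 1. First, positivity: each new weight is a sum of nonnegative terms with no subtraction, so no cancellation is possible and $\mu^{l+1}_{ij}>0$ exactly when $i,j$ are both adjacent to $p$ or were already adjacent; together with Lemma \ref{lemma:IterativeKronReduction.1} (and its converse) this gives $\tilde\mu_{ij}>0$ iff $i\leadsto j$ in $G^0$. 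Second, the zero row-block sum is preserved: using $\sum_j A^l[i,j]=0$ and $\sum_j A^l[j,p]=0$ one checks $\sum_{j\ne p}A^{l+1}[i,j] = -A^l[i,p] + A^l[i,p]=0$, which forces the diagonal-block formula in \eqref{lemma:Kron-reduced-y; eq:1}. Iterating to $l=k$ yields $A/A_{22}=A^k$ of the claimed form.

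For Part 2, I would verify that a uniform-line matrix satisfies Assumption \ref{Assumption:y_jk}: since $\mu_{ij}>0$ and $\mathrm{Re}(y)\succ 0$, each $y_{ij}=\mu_{ij}y$ is complex symmetric with $\mathrm{Re}(y_{ij})\succ 0$. Lemma \ref{lemma:InvertibleSubmatrix} then applies verbatim in the setting where $A$ is a strict principal submatrix: $A_{22}$ is invertible (part 2), and its Schur complement $A/A_{22}$ is invertible and complex symmetric with $\mathrm{Re}(A/A_{22})\succ 0$ and $\mathrm{Re}((A/A_{22})^{-1})\succ 0$ (part 3), which is exactly the assertion.

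I expect the main obstacle to be bookkeeping rather than depth: carefully tracking that the weights $\mu^l_{ij}$ never cancel and that the edge set of $A^l$ evolves as predicted by Lemma \ref{lemma:IterativeKronReduction.1}, so that ``$i\leadsto j$ in $G^0$'' translates precisely into a strictly positive $\tilde\mu_{ij}$ after the full reduction.
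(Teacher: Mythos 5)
Your proposal is correct and follows essentially the same route as the paper: induction on the one-node-at-a-time Kron reduction, with the invariant that every block of $A^l$ is a scalar multiple $\mu^l_{ij}$ of the fixed unit admittance $y$, the same cancellation $y(\sigma^l_p y)^{-1}y = (\sigma^l_p)^{-1}y$, the same positive update $\mu^{l+1}_{ij} = \mu^l_{ij} + \mu^l_{ip}\mu^l_{jp}/\sigma^l_p$, and Part 2 read off from Lemma \ref{lemma:InvertibleSubmatrix}. Your only deviation is handling the diagonal blocks via preservation of zero row-block sums rather than the paper's direct computation, which is a minor (and slightly cleaner) variant of the same argument.
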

\begin{proof}
Since Re$(y)\succ 0$, Lemma \ref{lemma:InvertibleSubmatrix} implies that $y$ is nonsingular.
We follow the approach of \cite{DorflerBullo2013} to prove the lemma by induction on the nodes 
to be iteratively Kron reduced.  Recall the admittance matrices 
$A^0 := A, A^1, \dots, A^k = A/A_{22}$ and their underlying graphs $G^l = (N^l, E^l) := G(A^l)$, 
$l=0, \dots, k$, defined in Section \ref{sec:IterativeKR; subsec:1maxclique; subsubsec:IKR}. 
For $0<l<k$, let the induction hypothesis be
\begin{align}
A^l[i,j] & \ = \ \left\{ \begin{array}{lcl}
    - \mu^l_{ij} \, y & & (i,j)\in  E^l \\
    \left(\sum_{m:(i,m)\in E^l} \mu^l_{im} \right) y & & i = j \\
    0 & & \text{otherwise}
    \end{array} \right.
\label{lemma:Kron-reduced-y; eq:2}
\end{align}
for some $\mu^l_{ij} = \mu^l_{ji} >0$.  
Clearly $A^0$ satisfies \eqref{lemma:Kron-reduced-y; eq:2}.  Suppose $A^l$ satisfies 
\eqref{lemma:Kron-reduced-y; eq:2}.  We now prove using the one-step Kron reduction 
\eqref{lemma:Kron-reduced-y; eq:3} that $A^{l+1} := A^l / A^l[n-l, n-l]$ satisfies \eqref{lemma:Kron-reduced-y; eq:2}.   
Let 
\begin{align*}
\sigma^l_{i} & \ := \ \sum_{m:(i, m)\in E^l} \mu^l_{im}
\end{align*} 
so that $A^l[i, i] = \sigma^l_{i} y$.

From \eqref{lemma:Kron-reduced-y; eq:3}, $(i,j)\in E^{l+1}$ if and only if $(i,j)\in E^l$ or  
both $(i,n-l)\in E^l$ and $(j,n-l)\in E^l$.  Consider three cases by substituting the induction 
hypothesis \eqref{lemma:Kron-reduced-y; eq:2} into \eqref{lemma:Kron-reduced-y; eq:3}:
for $i,j\in N^{l+1} := \{1, \dots, n-l-1\}$:

\noindent\emph{Case 1: $i\neq j$ and $(i,j)\in E^{l+1}$.}
\begin{subequations}
\begin{enumerate}
\item If $(i,j)\in E^{l}$ but either $(i,n-l)\not\in E^l$ or $(j,n-l)\not\in E^l$
then we have $A^{l+1}[i,j] = A^l[i,j] = - \mu^{l+1}_{ij}\, y$
where 
\begin{align}
\mu^{l+1}_{ij} & \ := \ \mu^{l+1}_{ji} \ := \ \mu^l_{ij} > 0
\label{sec:Assumption3; subsec:UniformLine; eq:mul+1.1c}
\end{align}

\item 
If $(i,j)\not\in E^{l}$ but both $(i,n-l)\in E^l$ and $(j,n-l)\in E^l$
then, since $y$ is nonsingular,
\begin{align*}
A^{l+1}[i,j] & \ = \ - \mu^l_{i(n-l)}\, y 
    \left( \sigma^l_{n-l}\, y \right)^{-1} 
    \mu^l_{j(n-l)}\, y  \ =: \ - \mu^{l+1}_{ij}\, y
\end{align*}
where
\begin{align}
\mu^{l+1}_{ij} \ := \ \mu^{l+1}_{ji} & \ := \  \mu^l_{i(n-l)} \, \mu^l_{j(n-l)}
    \left( \sigma^l_{n-l} \right)^{-1} \ > \ 0
\label{sec:Assumption3; subsec:UniformLine; eq:mul+1.1a}
\end{align}

\item 
If $(i,j)\in E^{l}$, $(i,n-l)\in E^l$ and $(j,n-l)\in E^l$
then   
\begin{align*}
A^{l+1}[i,j] & \ := \ - \mu^l_{ij}\, y \, - \, \mu^l_{i(n-l)}\, y 
    \left( \sigma^l_{n-l}\, y \right)^{-1} \mu^l_{j(n-l)}\, y  \ =: \ - \mu^{l+1}_{ij}\, y
\end{align*}
where 
\begin{align}
\mu^{l+1}_{ij} \ := \ \mu^{l+1}_{ji} & \ := \ \mu^l_{ij} \, + \, 
    \mu^l_{i(n-l)} \, \mu^l_{j(n-l)} \left( \sigma^l_{n-l} \right)^{-1} 
  \ > \ 0
\label{sec:Assumption3; subsec:UniformLine; eq:mul+1.1b}
\end{align}
\eee
Hence $A^{l+1}[i,j] = - \mu^{l+1}_{ij} \, y$ for some $\mu^{l+1}_{ij} = \mu^{l+1}_{ji} > 0$
when $(i,j)\in E^{l+1}$.
\label{sec:Assumption3; subsec:UniformLine; eq:mul+1.1}
\end{subequations}

\noindent\emph{Case 2: $i=j$.}  We have to prove that $A^{l+1}$ has zero row-block
sums.
\bee
\item If $i=j$ but $(i,n-l)\not\in E^l$ then \eqref{lemma:Kron-reduced-y; eq:3} yields
$A^{l+1}[i,i] = A^l[i,i]$ and hence 
\begin{align*}
A^{l+1}[i,i] & \ = \ \left( \sum_{m: (i,m)\in  E^{l}} \mu^{l}_{im} \right) y 
	\ = \ \left( \sum_{m: (i,m)\in  E^{l+1}} \mu^{l+1}_{im} \right) y 
\end{align*}
where $\mu^{l+1}_{im}$ are defined in Case 1 above.  The last equality claims that
$\{m : (i,m)\in E^l \} = \{ m : (i,m)\in E^{l+1} \}$ which holds because $(i,n-l)\not\in E^l$.

\item If $i=j$ and $(i,n-l)\in E^l$ then \eqref{lemma:Kron-reduced-y; eq:3} yields
\begin{align}
A^{l+1}[i,i] & \ := \ \sigma^l_i y \, - \, \mu^l_{i(n-l)}\, y 
    \left( \sigma^l_{n-l} \, y \right)^{-1}  \mu^l_{i(n-l)}\, y
\nonumber \\
& \ = \ \left( \sum_{\substack{m\neq n-l \\ (i,m)\in E^l}} \mu^l_{im} \, + \, \mu^l_{i(n-l)} \left( 1 \, - \, 
    \frac{\mu^l_{i(n-l)}} { \sum_{m: (m,n-l)\in E^l} \mu^l_{m(n-l)} } \right) \right) y
\nonumber \\
& \ = \ \left( \sum_{\substack{m\neq n-l \\ (i,m)\in E^l}} \mu^l_{im} \, + \, 
	\sum_{\substack{m\neq i \\ (m,n-l)\in E^l}} \mu^l_{i(n-l)} \mu^l_{m(n-l)} 
	\left( \sigma^l_{n-l}\right)^{-1}    \right) y
\label{sec:Assumption3; subsec:UniformLine; eq:mul+1.2}
\end{align}
For $m\neq i$, $(i,m)\in E^{l+1}$ if and only if $(i,m)\in E^l$ or $(i,n-l)\in E^l, (m,n-l)\in E^l$.
Therefore for $m\not\in \{i, n-l\}$, since $(i,n-l)\in E^l$, the expression in the summations in 
\eqref{sec:Assumption3; subsec:UniformLine; eq:mul+1.2} reduces to 3 cases corresponding
to the 3 cases in \eqref{sec:Assumption3; subsec:UniformLine; eq:mul+1.1}:
\bi
\item[(a)] $\mu^l_{im}$ if $(i,m)\in E^l$ and $(m,n-l)\not\in E^l$.
\item[(b)] $\mu^l_{i(n-l)} \mu^l_{m(n-l)} \left( \sigma^l_{n-l}\right)^{-1}$ if $(i,m)\not\in E^l$ and 
	$(m,n-l)\in E^l$.
\item[(c)] $\mu^l_{im} + \mu^l_{i(n-l)} \mu^l_{m(n-l)} \left( \sigma^l_{n-l}\right)^{-1}$ if 
	$(i,m)\in E^l$ and $(m,n-l)\in E^l$.
\ei
From \eqref{sec:Assumption3; subsec:UniformLine; eq:mul+1.1}, these expressions all equal
$\mu^{l+1}_{im} = \mu^{l+1}_{mi}$.  Hence
\begin{align*}
A^{l+1}[i,i] & \ = \ \left(\sum_{m: (i,m)\in E^{l+1}} \mu^{l+1}_{im} \right) y
\end{align*}
\eee

\noindent\emph{Case 3: $i\neq j$ and $(i,j)\not\in E^{l+1}$.}
Then \eqref{lemma:Kron-reduced-y; eq:3} implies that $A^{l+1}[i,j] = 0$.
This completes the induction and the proof of part 1.

Part 2 follows from Lemma \ref{lemma:InvertibleSubmatrix}.
\end{proof}

Theorem \ref{lemma:Kron-reduced-y} motivates the following assumption on line admittance matrices
that says that the series impedance of a line $(i,j)$ depends only on its length $\lambda_{ij}>0$.
\begin{MyAssumption}{4}[Uniform lines] 
\label{Assumption:UniformLines}
For all lines $(i,j)\in E$, $y_{i,j} =: \lambda_{i,j}^{-1} y$ where $y\in\mathbb C^{3\times 3}$ 
satisfies Assumption \ref{Assumption:y_jk} and $\lambda_{ij} = \lambda_{ji}>0$.
\qed
\end{MyAssumption}


\subsection{Sibling nodes (Algorithm 3)}
\label{sec:Assumption3; subsec:CommonBHN}

Consider $\hat A^{l+1}$ in iteration $l$ of the reverse Kron reduction
(from \eqref{sec:IDMC; subec:reverseKR; eq:Al+1}):
\begin{align}
	\hat A^{l+1} &  \ =: \
	\left[ \begin{array} {c | c} A^{l+1}_{11} & A^{l+1}_{12} \\ \hline \left( A^{l+1}_{12} \right)^{\sf T} & C^{l+1} \end{array} \right]
\label{sec:Assumption3; subsec:CommonBHN; eq:Cl+1}
\end{align}
Assumption \ref{Assumption:BoundaryHiddenNodes} allows the algorithm to determine 
the identity of a set of all sibling nodes in $C^{l+1}$ that are adjacent to a parent hidden 
node that has not been identified.  This new 
hidden node will be labeled as node $n-l$ and added to $\hat A^l$ in 
\eqref{sec:IDMC; subec:forwardKR; eq:InvStructure.la} and its neighbors define the 
identity of nodes in $y_l$.  
We now describe a method to make this determination under Assumption \ref{Assumption:UniformLines}, in two steps.  First we prove the method for the base case of $C^k = Y/Y_{22}$ in the next theorem.  The theorem extends 
\cite[Proposition 3]{YuanLow2023} from single-phase networks to three-phase networks.  Then we use 
the theorem to fulfill Assumption  \ref{Assumption:BoundaryHiddenNodes}  for the general case of $C^{l+1}$.
 
\begin{theorem}
\label{thm:SameHiddenNode}
Suppose $\hat A^0 := A^0 := Y$ is an admittance matrix of the form in 
\eqref{sec:IterativeKR; subsec:reverse; eq:Y.3} and it satisfies Assumptions \ref{Assumption:y_jk}, \ref{Assumption:HiddenNodes} and \ref{Assumption:UniformLines}.
Given the maximal clique $C^k = \hat A^k := Y/Y_{22}$, two nodes $i$ and $j$ in 
$C^k$ are adjacent to the same hidden node $h(i)$ in $\hat A^0$ (but not in $G(\hat A^k)$ if 
and only if the $3\times 3$ matrices $C^k[i, m]$ and $C^k[j,m]$ satisfy
\begin{align*}
C^k[i, m] & \ = \ \gamma(i,j)\, C^k[j,m], \qquad \forall m\neq i, j, \ m \in C^k
\end{align*}
for some nonsingular matrix $\gamma(i,j)\in\mathbb C^{3\times 3}$ that does not depend 
on $m\neq i, j$.
\end{theorem}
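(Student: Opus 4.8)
The plan is to reduce the matrix condition to a scalar one and then read off the tree structure from the inverse of a tree-patterned matrix. First I would make $C^k=Y/Y_{22}$ explicit under Assumption \ref{Assumption:UniformLines}. Writing each leaf admittance as $y_{i,h(i)}=\mu_i\,y$ and noting that the hidden block factors as $Y_{22}=L\otimes y$ for a scalar symmetric matrix $L$ (with $L_{aa}$ the sum of all incident $\mu$'s and $L_{ab}=-\mu_{ab}$ for adjacent hidden nodes), the identity $\bar Y=Y_{11}-Y_{12}Y_{22}^{-1}Y_{12}^{\sf T}$ together with $y^{\sf T}=y$ gives, for $i\neq m$,
\[
C^k[i,m] \ = \ -\,\mu_i\mu_m\,\Sigma_{h(i),h(m)}\;y, \qquad \Sigma := L^{-1},
\]
which identifies the coefficient $\tilde\mu_{im}=\mu_i\mu_m\,\Sigma_{h(i),h(m)}$ of Theorem \ref{lemma:Kron-reduced-y}. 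Here $L$ is a nonsingular symmetric M-matrix, so $\Sigma$ is positive definite and entrywise positive. Since $y$ is nonsingular, the condition $C^k[i,m]=\gamma(i,j)\,C^k[j,m]$ for all $m\neq i,j$ in $C^k$ is equivalent to the \emph{scalar} requirement that $\Sigma_{h(i),h(m)}/\Sigma_{h(j),h(m)}$ be independent of $m$, in which case $\gamma(i,j)=(\mu_i/\mu_j)\,\big(\Sigma_{h(i),h(m)}/\Sigma_{h(j),h(m)}\big)\,\textbf{I}_3$.

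For the sufficiency direction ($i,j$ siblings $\Rightarrow$ proportional), if $h(i)=h(j)$ then $\Sigma_{h(i),h(m)}=\Sigma_{h(j),h(m)}$ for every $m$, the ratio collapses to the constant $\mu_i/\mu_j$, and $\gamma(i,j)=(\mu_i/\mu_j)\,\textbf{I}_3$ is nonsingular. For necessity I would argue the contrapositive. Suppose $p:=h(i)\neq q:=h(j)$. The key tool is the separator (tree-Markov) factorization of $\Sigma=L^{-1}$: because the off-diagonal sparsity pattern of $L$ is the hidden tree, whenever a hidden node $w$ lies on the path between $a$ and $c$ one has $\Sigma_{ac}=\Sigma_{aw}\Sigma_{wc}/\Sigma_{ww}$ (equivalently, $a$ and $c$ are conditionally independent given $w$ in the associated Gaussian tree model). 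Using Assumption \ref{Assumption:HiddenNodes} (degree at least $3$), both $p$ and $q$ possess a branch leading away from the $p$--$q$ path, and every leaf of the finite tree is measured; hence there is a measured node $m\notin\{i,j\}$ whose parent $h(m)$ lies on the $p$-side (so $p$ separates $q$ from $h(m)$) and a measured node $m'\notin\{i,j\}$ whose parent lies on the $q$-side. Applying the factorization with separator $p$ gives $\Sigma_{p,h(m)}/\Sigma_{q,h(m)}=\Sigma_{pp}/\Sigma_{pq}$, and with separator $q$ gives $\Sigma_{p,h(m')}/\Sigma_{q,h(m')}=\Sigma_{pq}/\Sigma_{qq}$. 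These two ratios are unequal because positive definiteness of $\Sigma$ forces $\Sigma_{pp}\Sigma_{qq}>\Sigma_{pq}^2$ while $\Sigma_{pq}>0$. Thus the ratio is not constant in $m$, the matrix proportionality fails, and the contrapositive is complete.

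The main obstacle is establishing the separator factorization $\Sigma_{ac}=\Sigma_{aw}\Sigma_{wc}/\Sigma_{ww}$ rigorously for our $L$. I would obtain it either by invoking the conditional-independence structure of a Gaussian Markov field whose precision matrix has tree support, or, to keep the argument self-contained, by a direct computation: since $w$ separates $a$ from $c$, permuting nodes into ($a$-side, $w$, $c$-side) makes $L$ block-arrow with no coupling between the $a$-side and the $c$-side, and expanding the relevant cofactors of $L^{-1}$ yields the identity. The remaining work is bookkeeping --- verifying that the witness nodes $m,m'$ exist, are distinct from $i,j$, and belong to $C^k$ (all consequences of the leaf/degree structure guaranteed by Assumptions \ref{Assumption:y_jk} and \ref{Assumption:HiddenNodes}), and confirming via Theorem \ref{lemma:Kron-reduced-y} that $\tilde\mu_{jm}>0$ so the ratios are well defined.
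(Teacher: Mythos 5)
Your proposal is correct, but it takes a genuinely different route from the paper's proof. Both arguments share the same starting point --- the factorization $C^k[i,m] = y_{ih(i)}\,\beta_{h(i)h(m)}\,y_{mh(m)}$ and the easy sufficiency direction with $\gamma(i,j)=y_{ih(i)}y_{jh(j)}^{-1}$ --- but they diverge on necessity. The paper keeps everything at the level of $3\times 3$ blocks: it shows that proportionality forces two row blocks of $X_{22,11}$ to be $\gamma$-multiples of each other, deduces $\rank(B_{12})\le 3$, invokes uniform lines only to conclude that every nonzero block of $X_{22,11}^{-1}$ has full rank so that $\rank(A_{12})\in\{0,3\}$, and then rules out both cases using the degree-3 assumption. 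You instead exploit Assumption \ref{Assumption:UniformLines} from the outset to factor $Y_{22}=L\otimes y$ with $L$ a real symmetric positive-definite grounded Laplacian of the hidden tree, which collapses the whole question to whether the scalar ratio $\Sigma_{h(i),h(m)}/\Sigma_{h(j),h(m)}$ with $\Sigma:=L^{-1}$ is constant in $m$; the separator identity $\Sigma_{ac}=\Sigma_{aw}\Sigma_{wc}/\Sigma_{ww}$ for tree-supported $L$ together with strict Cauchy--Schwarz ($\Sigma_{pp}\Sigma_{qq}>\Sigma_{pq}^2$) then kills constancy once the degree-3 assumption supplies one witness measured node on each side of the $p$--$q$ path. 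Your route is more transparent about where each hypothesis enters (uniform lines to scalarize, degree 3 only to produce the witnesses $m,m'$) and yields the extra information that $\gamma(i,j)$ is necessarily a scalar multiple of $\textbf{I}_3$; its only debt is the separator lemma, which you correctly flag and which does go through here, since the complex matrix $y$ factors out entirely, $L$ is real symmetric positive definite (so the Gaussian conditional-independence or block-arrow cofactor argument applies), and entrywise positivity of $\Sigma$ follows from $L$ being an irreducible nonsingular M-matrix. The paper's rank argument, by contrast, avoids the separator identity altogether but leans harder on the full-rank property of nonzero Kron-reduced blocks established in Theorem \ref{lemma:Kron-reduced-y}.
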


\begin{proof}
When $\hat A^0 = Y$, $n-k = M$ is the number of measured nodes in $G(\hat A^0)$, 
$H_b$ the number of boundary hidden nodes, and $k = H$ the total number of 
(boundary and internal) hidden nodes.  Recall from \eqref{sec:IterativeKR; subsec:reverse; eq:Y.3} 
that each measured node in $G(\hat A^0)$ is adjacent to only a single hidden node $h(i)$ since 
$G(\hat A^0)$ is a tree.  Specifically
\begin{subequations}
\begin{align}
\label{sec:Assumption3; eq:Y.3a}
    \hat A^0 & \ =: \ \left[ \begin{array}{c | c}
    A_{11}^0 & A^0_{12} \\ \hline A^0_{21} & A^0_{22}
    \end{array} \right]   
    \ =: \ \left[ \begin{array}{c | c c}
     A^0_{11, 22} & A^0_{12, 21} & 0 
    \\ \hline
     & A^0_{22, 11} & A^0_{22, 12} 
    \\
     & & A^0_{22, 22} 
    \end{array} \right]
\end{align}
where 
\begin{align}
A^0_{11,22} & \ = \ \diag\begin{bmatrix} y_{1h(1)} \\ \vdots \\ y_{Mh(M)}    \end{bmatrix}, 	&
A^0_{12, 21} & \ = \ \begin{bmatrix} - e_{h(1)}^{\sf T}\otimes y_{1h(1)} \\ \vdots \\  - e_{h(M)}^{\sf T}\otimes y_{Mh(M)} \end{bmatrix}
\label{sec:Assumption3; eq:Y.3b}
\end{align}
where $e_i\in\{0, 1\}^{H_b}$ is the unit vector with a single 1 in the $i$th entry and 0 elsewhere, and $y_{ij}\in\mathbb C^{3\times 3}$ is the three-phase series admittance of line $(i,j)$.  
\label{sec:Assumption3; eq:Y.3}
\end{subequations}
(See Example \ref{sec:IterativeKR; subsec:reverse; eg:Y}.)
Under Assumption \ref{Assumption:y_jk}, $A^0_{22}$ in \eqref{sec:Assumption3; eq:Y.3a}
is invertible according to Lemma \ref{lemma:InvertibleSubmatrix}.  

Recall the definition of $Z_{22}$ in \eqref{sec:algorithm; eq:Step1.1a}:
\begin{align}
Z_{22} & \ := \ \left( A^0_{22}\right)^{-1} \ =: \ 
\begin{bmatrix}
X_{22, 11} & X_{22, 12} \\ X_{22, 21} & X_{22, 22}
\end{bmatrix}
\label{sec:Assumption3; eq:Z_22}
\end{align}
Lemma \ref{lemma:InvertibleSubmatrix} also implies that 
$A^0_{22}$ and $Z_{22}$ are symmetric matrices.
Substituting into \eqref{sec:Assumption3; eq:Y.3a}, we can write 
$\hat A^0/A^0_{22} = A^0_{11} - A^0_{12} Z_{22} A^0_{21}$ in terms of $X_{22,11}$: 
\begin{align*}
\bar A^0 & \ := \ \hat A^0/A^0_{22} \ = \ A^0_{11, 22}  \, - \, A^0_{12, 21} X_{22, 11} A^0_{21,12} 
\end{align*}
with $A^0_{21,12} = \left( A^0_{12, 21} \right)^{\sf T}$.
From \eqref{sec:Assumption3; eq:Y.3a}\eqref{sec:Assumption3; eq:Z_22} and \eqref{eq:A-1.b} in
Section \ref{sec:Assumption3; subsec:SchurComplement}, $X_{22,11} = \left( A^0_{22}/A^0_{22,22} \right)^{-1}$
which exists under Assumption \ref{Assumption:y_jk} according to Lemma \ref{lemma:InvertibleSubmatrix}.  
Partition the $3H_b \times 3H_b$ matrix $X_{22,11}$ into $H_b^2$ blocks each of $3\times 3$:
\begin{align*}
X_{22, 11} & \ =: \ \begin{bmatrix}
    \beta_{11} & \beta_{12} & \cdots & \beta_{1H_b} \\  \beta_{21} & \beta_{22} & \cdots & \beta_{2H_b} \\     
    \vdots & \vdots & \vdots & \vdots    \\  \beta_{H_b 1} & \beta_{H_b 2} & \cdots & \beta_{H_b H_b}
    \end{bmatrix}
\end{align*}
where $\beta_{kl} = \beta_{lk}^{\sf T} \in\mathbb C^{3\times 3}$.  
Substituting into \eqref{sec:Assumption3; eq:Y.3b}, the $(i,j)$th block of $A^0_{12, 21} X_{22, 11} A^0_{21,12}$ is,
using $A^{\sf T}\otimes B^{\sf T} = (A\otimes B)^{\sf T}$, 
\begin{align*}
A^0_{12, 21} X_{22, 11} A^0_{21,12}[i,j] & \ = \ 
\left(e_{h(i)}^{\sf T}\otimes y_{ih(i)}\right) X_{22, 11} \left( e_{h(j)}\otimes y_{jh(j)} \right)
\end{align*}
Fix any $i\neq j$.  The $i$th and $j$th row blocks of $\hat A^0$ are respectively 
\begin{align*}
& y_{ih(i)} \begin{bmatrix} \beta_{h(i)h(1)}\, y_{1h(1)} & \cdots & \beta_{h(i)h(M)}\, y_{M h(M)} \end{bmatrix}
\\
& y_{jh(j)} \begin{bmatrix} 
\beta_{h(j)h(1)}\, y_{1h(1)} & \cdots & \beta_{h(j)h(M)}\, y_{M h(M)} \end{bmatrix}
\end{align*}
Therefore, for $m\neq i, j$, $m=1, \dots, M$, we have 
\begin{align}
\Bar Y[i, m] & \ = \ y_{ih(i)}\, \beta_{h(i)h(m)} \, y_{mh(m)}, & 
\Bar Y[j,m] & \ = \ y_{jh(j)}\, \beta_{h(j)h(m)} \, y_{mh(m)}    & &
\label{sec:HiddenNodes; subsec:proof; eq:BarY2jk}
\end{align}
If $i$ and $j$ are adjacent to the same hidden node $h(i)=h(j)$, then
$\Bar Y[i, m] = \gamma(i,j)\, \Bar Y[j,m]$ with $\gamma(i,j) := y_{ih(i)}\, y_{jh(j)}^{-1}$ for $m\neq i, j$, 
where $y_{jh(j)}\in\mathbb C^{3\times 3}$ is invertible by Assumption \ref{Assumption:y_jk}.
Clearly $\gamma(i,j)$ is invertible.

Conversely suppose $h(i)\neq h(j)$ but $\Bar Y[i, m] = \gamma'(i,j)\, \Bar Y[j,m]$ for all $m\neq i, j$, 
for some 
invertible $\gamma'(i,j)\in\mathbb C^{3\times 3}$.
From \eqref{sec:HiddenNodes; subsec:proof; eq:BarY2jk} and invertibility of $y_{mh(m)}$, this is equivalent to
\begin{align}
\gamma \beta_{h(i)h(m)} & \ = \ \beta_{h(j)h(m)}, \qquad m\neq i, j
\label{sec:HiddenNodes; subsec:proof; eq:betaij}
\end{align}
where the $3\times 3$ matrix $\gamma := \gamma(i,j) := \left( \gamma'(i,j)\, y_{jh(j)} \right)^{-1} y_{ih(i)}$.
We will use \eqref{sec:HiddenNodes; subsec:proof; eq:betaij} to derive a contradiction.

Assume without loss of generality that $h(i) = 1$ and $h(j) = 2$ (corresponding 
to nodes $M+1$ and $M+2$ respectively when $\hat A^0 = Y$).  By Lemma \ref{lemma:InvertibleSubmatrix}, 
$Y$ is symmetric and hence $\bar A^0$, $Z_{22} := \left( A^0_{22}\right)^{-1}$, and $X_{22,11}$
are symmetric and invertible.   Then \eqref{sec:HiddenNodes; subsec:proof; eq:betaij} 
means that the matrix $X_{22,11} = \left( A^0_{22}/A^0_{22,22} \right)^{-1}$ is of the form
\begin{align}
X_{22,11} & \ = \ \left[ \begin{array} {c c | c c c}
\beta_{11} & \beta_{12} & \beta_{13} & \cdots & \beta_{1H_b} \\
\beta_{12} & \beta_{22} & \gamma \beta_{13} & \cdots & \gamma\beta_{1H_b} \\ \hline
\beta_{13} & \gamma \beta_{13} & \beta_{33} & \cdots & \beta_{3H_b} \\
\vdots & \vdots & \vdots & \ddots & \vdots \\
\beta_{1H_b} & \gamma \beta_{1H_b} & \beta_{3H_b} & \cdots & \beta_{H_bH_b} 
\end{array} \right]  
\ =: \ \begin{bmatrix} B_{11} & B_{12} \\ B_{12}^{\sf T} & B_{22} \end{bmatrix}
\label{eq:X_2211.2}
\end{align}
where $B_{11}\in\mathbb C^{6\times 6}$ and $B_{22}\in\mathbb C^{3(H_b-2) \times 3(H_b-2)}$.
Hence $B_{12}\in\mathbb C^{6 \times 3(H_b-2)}$ is singular.  Moreover, since $\gamma\in\mathbb C^{3\times 3}$ has 
three independent rows $\gamma_1^{\sf T}, \gamma_2^{\sf T}, \gamma_3^{\sf T}$, the null space of $B_{12}^{\sf T}$
has dimension (nullity) at least 3 because the null space contains at least the column vectors
$(\gamma_i, -e_i)\in\mathbb C^6$, $i=1, 2, 3$.
By the rank-nullity theorem, rank$(B_{12}) \leq 3$.

Suppose more than one measured node are adjacent to each of the hidden nodes 
$h(i)$ and $h(j)$.  Specifically let $h(i) = h(i')$ and $h(j) = h(j')$ for distinct $i'$, $j'$.  
Then letting $m = i'$ and then $m=j'$ in \eqref{sec:HiddenNodes; subsec:proof; eq:betaij} implies that 
\begin{align*}
\gamma \beta_{h(i)h(i)} & \ = \ \gamma \beta_{h(i)h(i')} \ = \ \beta_{h(j)h(i')} 
	\ = \ \beta_{h(j)h(i)}	\\
\beta_{h(j)h(j)} & \ = \ \beta_{h(j)h(j')} \ = \ \gamma \beta_{h(i)h(j')}
	\ = \ \gamma\beta_{h(i)h(j)}
\end{align*}
This implies that $\beta_{21} = \beta_{12} = \gamma\beta_{11}$ and $\beta_{22} = \gamma \beta_{12}$
in \eqref{eq:X_2211.2}, i.e., \eqref{sec:HiddenNodes; subsec:proof; eq:betaij} holds for all $m$.
Hence, as for $B_{12}$, the first two row blocks of $X_{22,11}$ also are of rank no more than 3.
This contradicts the invertibility of $X_{22,11}$ and hence $h(i)=h(j)$ if more than one measured
node are adjacent to each of $h(i)$ and $h(j)$.

Suppose then at least one of $h(i)$, $h(j)$ is adjacent to only a single measured node 
(i.e., $i$ or $j$) 
so that $\gamma \beta_{h(i)h(m)} = \beta_{h(j)h(m)}$ in \eqref{sec:HiddenNodes; subsec:proof; eq:betaij}
may not hold for $m=i$ or $m=j$ or both.  From \eqref{sec:Assumption3; eq:Y.3a} and \eqref{eq:A-1.b} we have
\begin{align}
X_{22,11}^{-1} \ = \ A^0_{22}/A^0_{22,22} & \ =: \ 
\begin{bmatrix} A_{11} & A_{12} \\ A_{12}^{\sf T} & A_{22} \end{bmatrix}
\label{eq:X_2211^-1}
\end{align}
where $A_{11}\in\mathbb C^{6\times 6}$ and $A_{22}\in\mathbb C^{3(H_b-2)\times 3(H_b-2)}$.
Hence $X_{22,11}^{-1}$, being the Schur complement of $A^0_{22, 22}$ of $A^0_{22}$, can be interpreted as
the Kron-reduced admittance matrix for the graph connecting boundary hidden nodes when all internal
hidden nodes are Kron reduced, but with nonzero
shunt admittances from their connectivity to boundary measured nodes.  Applying \eqref{eq:A-1.b}
to \eqref{eq:X_2211.2} and \eqref{eq:X_2211^-1} we have
\begin{align}
A_{12} & \ = \ - (X_{22,11}/B_{22})^{-1} B_{12} B_{22}^{-1} 
    \ = \ - A_{11} B_{12} B_{22}^{-1} 
\label{eq:A_{12}}
\end{align}
where $B_{22}^{-1}$ exists by Lemma \ref{lemma:InvertibleSubmatrix}.  Hence rank$(A_{12})\leq \text{rank}(B_{12}) \leq 3$.

Assumption \ref{Assumption:UniformLines} is now needed to establish a contradiction.
Under Assumption \ref{Assumption:UniformLines}, the $(l,m)$th blocks $\bar A^0[l,m]$ of the Kron-reduced 
admittance matrix $\bar A^0$ are given by:
\begin{subequations}
\begin{align}
\bar A^0[l,m] & \ = \ \left\{ \begin{array}{lcl}
    - \tilde \lambda_{lm}^{-1} \, y & & l \leadsto m \\
    \left( \sum_{m':l\leadsto m'} \tilde \lambda_{lm'}^{-1} \right) y & & l = m \\
    0 & & \text{otherwise}
    \end{array} \right.
\label{eq:UniformLine.1}
\end{align}
for some $\tilde \lambda_{lm} = \tilde \lambda_{ml}>0$.  Here $l\leadsto m$ if and only if
$(l,m)\in E$ or there is a path in the graph $G(\hat A^0)$ connecting nodes $l$ and $m$.
As mentioned above, $X_{22,11}^{-1} = A^0_{22}/A^0_{22,22}$ exists
under Assumption \ref{Assumption:y_jk} according to Lemma \ref{lemma:InvertibleSubmatrix}.  
Moreover the line admittances in \eqref{eq:UniformLine.1} are of full rank:
\begin{align}
\text{rank}\left( ( A^0_{22}/ A^0_{22,22})[l,m] \right) & \ = \ 3, \quad\quad
\text{if } l\leadsto m \text{ of } l=m
\label{eq:UniformLine.2}
\end{align}
This means that every nonzero
\label{eq:UniformLine.12}
\end{subequations}
$(l,m)$th block $X_{22,11}^{-1}[l,m]\in\mathbb C^{3\times 3}$ in \eqref{eq:X_2211^-1} is of full rank, 
and hence either rank$(A_{12}) = 0$ or rank$(A_{12}) = 3$.
We claim that there is a contradiction in each case, proving  $h(i)=h(j)$.
\begin{enumerate}
\item \emph{rank$(A_{12}) = 0$}:
In this case $A_{12} = 0$ the zero matrix.  This means that $A_{11}$ is full rank since $A^0_{22}/A^0_{22,22}$ is nonsingular under Assumption \ref{Assumption:y_jk}.  Hence, in $G(\hat A^0)$, nodes $h(i)$ and $h(j)$ are not adjacent to other (boundary or internal) hidden nodes except to each other and measured nodes.  If one of 
$h(i)$ and $h(j)$ is adjacent to only a single measured node in $G^(\hat A^0)$, say $i$, then $h(i)$ violates the requirement in Assumption \ref{Assumption:HiddenNodes} that each hidden node is adjacent to at least 3 other 
nodes in $G(\hat A^0)$.

\item \emph{rank$(A_{12}) = 3$}:
We derive the structure of $A_{12}$ using 
\eqref{eq:X_2211.2}\eqref{eq:X_2211^-1}\eqref{eq:A_{12}} and \eqref{eq:UniformLine.12}.
Under Assumption \ref{Assumption:UniformLines},  \eqref{eq:UniformLine.1} says that the $6\times 6$ submatrix $A_{11}$ in \eqref{eq:X_2211^-1} is of the form:
\begin{align*}
A_{11} & \ =: \ \begin{bmatrix} \tilde\mu_{11}\, y & -\tilde\mu_{12}\, y \\
-\tilde\mu_{12}\, y & \tilde\mu_{22}\, y
\end{bmatrix}
\end{align*}
for some positive $\tilde\mu_{11},\tilde\mu_{12},\tilde\mu_{22}$.
Denote the row blocks of $B_{12}$ in \eqref{eq:X_2211.2} by:
\begin{align*}
B_{12} & \ =: \ \begin{bmatrix} b^{\sf T} \\ \gamma b^{\sf T} \end{bmatrix}
\end{align*}
where $b^{\sf T}\in\mathbb C^{3\times 3(H_b-2)}$ and $\gamma\in\mathbb C^{3\times 3}$.
Substituting into \eqref{eq:A_{12}} we have
\begin{align}
A_{12} & \ = \ -  \begin{bmatrix} \tilde\mu_{11}\, y & -\tilde\mu_{12}\, y \\
-\tilde\mu_{12}\, y & \tilde\mu_{22}\, y
\end{bmatrix}
\begin{bmatrix} \textbf I_3 & 0 \\ 0 & \gamma \end{bmatrix}
\begin{bmatrix} b^{\sf T} B_{22}^{-1} \\ b^{\sf T} B_{22}^{-1}
\end{bmatrix} 
\ = \ \begin{bmatrix} \tilde y_1 \\ \tilde y_2 \end{bmatrix} \left( b^{\sf T} B_{22}^{-1} \right)
\label{thm:SameHiddenNode; eq:A_12}
\end{align}
where $\textbf I_3$ is the identity matrix of size 3, 
$\tilde y_1 := - \tilde\mu_{11}\, y + \tilde\mu_{12}\, y \gamma$, and 
$\tilde y_2 := \tilde\mu_{12}\, y - \tilde\mu_{22}\, y \gamma$.
We know from \eqref{eq:UniformLine.12} that each of the two $3\times 3(H_b-2)$ row blocks of $A_{12}$ 
is of full rank.\footnote{The only way $A_{12}$ can be of rank 3 is if $\tilde y_1 = \rho\tilde y_2$ for some nonzero $3\times 3$ matrix $\rho$.}
Moreover \eqref{thm:SameHiddenNode; eq:A_12} implies that
these two row blocks have zero and nonzero $3\times 3$ blocks at exactly the same locations since each is a multiple of $b^{\sf T}B_{22}^{-1}$.  This means, from \eqref{eq:X_2211^-1}, that the nodes $h(i)$ and $h(j)$ are adjacent to the same set of boundary hidden nodes in the graph $G(A^0_{22}/A^0_{22,22})$ other than, possibly, to each other.
Since rank$(A_{12})>0$, $h(i)$ and $h(j)$ are both adjacent to at least one common boundary hidden node, 
say node $l$, in $G(A^0_{22}/A^0_{22,22})$ other than each other.  There are two possibilities: either $h(i)$ and
$h(j)$ are both adjacent to the boundary hidden node $l$ in the original graph $G(\hat A^0)$, or neither 
$h(i)$ nor $h(j)$ is adjacent to $l$ in $G(\hat A^0)$ but they are connected to $l$ through the tree of 
\emph{internal} hidden nodes in $G(\hat A^0)$.
In both cases $h(i)$ and $h(j)$ are adjacent to exactly one (boundary or internal) hidden node in $G(\hat A^0)$ 
and are not adjacent to each other, for otherwise there is a loop in graph $G(\hat A^0)$. 
Again, if one of $h(i)$ and $h(j)$ is adjacent to only a single measured node, say $i$, then $h(i)$ violates 
Assumption \ref{Assumption:HiddenNodes} that each hidden node is adjacent to at least 3 other nodes in 
$G(\hat A^0)$.
\end{enumerate}
This completes the proof of Theorem \ref{thm:SameHiddenNode}.
\end{proof}

To fulfill Assumption \ref{Assumption:BoundaryHiddenNodes} of Section \ref{sec:IDMC; subec:reverseKR},
we apply Theorem \ref{thm:SameHiddenNode} in each step of the reverse iterative Kron reduction.  
The theorem applies directly in the base case when the maximal 
clique $(\hat A^k, C^k)$ is given.  For each iteration $l\leq k-1$, given $(\hat A^{l+1}, C^{l+1})$ in
\eqref{sec:Assumption3; subsec:CommonBHN; eq:Cl+1}, as long as 
$C^{l+1}\not\in \mathbb C^{3\times 3}$, we can remove from the graph $G(\hat A^{l+1})$ 
those nodes that are not in the clique $C^{l+1}$ and the resulting admittance matrix is obtained by
normalizing $C^{l+1}$ so that it has zero row and column-block sums (cf. \eqref{eq:solution.3}), 
i.e., 
\begin{align*}
C^{l+1'} &\ = \ C^{l+1} - \diag\left( \left(\textbf 1 \otimes \textbf I_3 \right)^{\sf T} C^{l+1} \right)
\end{align*}
Then $C^{l+1'}$ is a maximal clique to which we can apply Theorem \ref{thm:SameHiddenNode} 
to identify a set of all sibling nodes in $C^{l+1'}$ that are adjacent to a common parent node in 
$G(\hat A^0)$ (but not in $G(\hat A^{l+1})$).  We will label this hidden node by $n-l$ and add it
to $G(\hat A^l)$.
Theorem \ref{sec:IDMC; subec:forwardKR; thm:1stepKR} guarantees that there are at least 2 such
sibling nodes in each iteration $l$ that make up $y_l$.  
If there are more than one group of sibling nodes in $C^{l+1'}$, each with its own parent  
node, then pick any one of these groups of sibling nodes and label their parent node as node $n-l$.
This fulfills Step 1 in each iteration $l$ of Algorithm 2 for reverse iterative Kron reduction.
Specifically, under Assumption \ref{Assumption:UniformLines}, the following algorithm can be executed at the beginning
of each iteration $l$.
\vspace{-0.1in}
\paragraph{Algorithm 3: identification of sibling nodes.}
\bee
\item[] \textbf{Given}:  $(\hat A^{l+1}, C^{l+1})$.  
\item[] \textbf{Initialization}: $C^{l+1'} : = C^{l+1} - \diag\left( \left(\textbf 1 \otimes \textbf I_3 \right)^{\sf T} C^{l+1} \right)$.

\item[] \textbf{Construct} the set $N_{n-l}$ of neighbors of a new hidden node $n-l$: $i,j\in N_{n-l}$ if and
		only if the $3\times 3$ matrices $C^{l+1'}[i, m]$ and $C^{l+1'}[j,m]$ satisfy
\begin{align*}
C^{l+1'}[i, m] & \ = \ \gamma(i,j)\, C^{l+1'}[j,m], \qquad \forall m\neq i, j, \ m \in C^{l+1'}
\end{align*}
for some nonsingular matrix $\gamma(i,j)\in\mathbb C^{3\times 3}$ that does not depend on $m\neq i, j$.
	
\item[] \textbf{Return}: $N_{n-l}$ for identity of nodes in $y_l$; new hidden node $n-l$ in $\hat A^l$.
\eee

The property in Theorem \ref{thm:SameHiddenNode} is similar to the sibling grouping property in
\cite[Lemma 4]{ChoiWillsky2011} where $C^k[i, m]C^k[j,m]^{-1} = \gamma(i,j)$ in Theorem \ref{thm:SameHiddenNode} plays the role of the difference $\Phi_{ijm} := d_{im} - d_{jm}$ of 
information distances.  The recursive grouping
procedure of \cite{ChoiWillsky2011} computes a latent tree graphical model (probability distribution with a
conditional product structure) by iteratively determining, starting from leaf nodes, their unique parent nodes 
based on information distances and computing the information distances from each
new parent node to all nodes that have been identified in each iteration.  
This procedure might be adaptable to our 
context using the matrix $C^k$, if the entries $C^{k-1}[h, j]$ between a new 
parent node $h$ and all nodes $j\in C^k$ can be computed in each iteration.

\subsection{Appendix: Schur complement}
\label{sec:Assumption3; subsec:SchurComplement}

We review some standard properties of Schur complement that are used in the identification method.
Consider a complex matrix $A = \begin{bmatrix} A_{11} & A_{12} \\ A_{21} & A_{22} \end{bmatrix}$
where $A_{11}$ and $A_{22}$ are square submatrices.  If $A_{11}$ is nonsingular then the
\emph{Schur complement of $A_{11}$} of $A$ is the following matrix:
\begin{align*}
A/A_{11} & \ := \ A_{22} - A_{21} A_{11}^{-1} A_{12}
\end{align*}
When $A_{11}$ is nonsingular, $A$ is nonsingular if and only if its Schur complement $A/A_{11}$
is nonsingular, in which case $A^{-1}$ can be expressed in terms of the inverses of $A_{11}$ and $A/A_{11}$:
\begin{subequations}
\begin{align}
A^{-1} & \ = \ \begin{bmatrix}
	B_{11} & - A_{11}^{-1} A_{12} (A/A_{11})^{-1}  \\
	- (A/A_{11})^{-1} A_{21} A_{11}^{-1}  	& (A/A_{11})^{-1}
\label{eq:A-1.a}
\end{bmatrix}
\end{align}
where $B_{11} := A_{11}^{-1} + A_{11}^{-1}A_{12} (A/A_{11})^{-1} A_{21} A_{11}^{-1}$.
If $A_{22}$ is nonsingular then the
\emph{Schur complement of $A_{22}$} of $A$ is
\begin{align*}
A/A_{22} & \ := \ A_{11} - A_{12} A_{22}^{-1} A_{21}
\end{align*}
When $A_{22}$ is nonsingular, $A$ is nonsingular if and only if its Schur complement $A/A_{22}$
is nonsingular, in which case $A^{-1}$ can be expressed in terms of the inverses of $A_{22}$ and $A/A_{22}$:
\begin{align}
A^{-1} & \ = \ \begin{bmatrix}
	(A/A_{22})^{-1}  & - (A/A_{22})^{-1} A_{12} A_{22}^{-1}  \\
	- A_{22}^{-1} A_{21} (A/A_{22})^{-1}  	& B_{22}
\label{eq:A-1.b}
\end{bmatrix}
\end{align}
where $B_{22} := A_{22}^{-1} + A_{22}^{-1} A_{21} (A/A_{22})^{-1} A_{12} A_{22}^{-1}$.
\label{eq:A-1}
\end{subequations}

\newpage

\bibliographystyle{unsrt}
\bibliography{PowerRef-201202}

\end{document}